\newtheorem{theorem}{Theorem}
\newtheorem*{theorem*}{Theorem}
\newtheorem{corollary}[theorem]{Corollary}
\newtheorem{lemma}[theorem]{Lemma}
\newtheorem{definition}[theorem]{Definition}
\newtheorem{proposition}[theorem]{Proposition}
\newtheorem{claim}[theorem]{Claim}
\newtheorem{remark}[theorem]{Remark}
\newtheorem{observation}[theorem]{Observation}
\newtheorem{example}[theorem]{Example}
\numberwithin{theorem}{section}
\renewenvironment{proof}{\noindent\bf{Proof.~}\rm}{\hfill$\blacksquare$\bigskip}
\newcommand{\Kalton}{K_s}
\newcommand{\Kalweak}{K_w}
\newcommand{\Intplus}{\mathrm{\mathbb{N}_{\ge 0}}}
\newcommand{\Replus}{\mathrm{\mathbb{R}_{\ge 0}}}
\newcommand{\PS}{\mathrm{PS}}
\newcommand{\NS}{\mathrm{NS}}
\begin{document}
	
\title{{Approximate Modularity Revisited}%
	\thanks{The conference version appeared in STOC'17. Part of this work was done at Microsoft Research, Herzliya. Part of the work of U.~Feige was done while visiting Princeton University. This work has received funding from the Israel Science Foundation (grant No.~1388/16), the European Research Council under the European Union's Seventh Framework Programme (FP7/2007-2013) / ERC grant agreement No.~337122, and the European Union's Horizon 2020 research and innovation programme under the Marie Sklodowska-Curie grant agreement No.~708935.}
}
	
\author{Uriel Feige\thanks{Weizmann Institute of Science, Rehovot, Israel, {\tt uriel.feige@weizmann.ac.il}.}, 
Michal Feldman\thanks{Tel-Aviv University, Tel-Aviv, and Microsoft Research, Herzliya, Israel, {\tt michal.feldman@cs.tau.ac.il}.}, 
Inbal Talgam-Cohen\thanks{Hebrew University, Jerusalem, Israel, {\tt inbaltalgam@gmail.com}.}}

\date{}

\maketitle

\begin{abstract}
Set functions with convenient properties (such as submodularity) appear in application areas of current interest, such as algorithmic game theory, and allow for improved optimization algorithms. It is natural to ask (e.g., in the context of data driven optimization) how robust such properties are, and whether small deviations from them can be tolerated. We consider two such questions in the important special case of linear set functions.

One question that we address is whether any set function that approximately satisfies the modularity equation (linear functions satisfy the modularity equation exactly) is close to a linear function. The answer to this is positive (in a precise formal sense) as shown by Kalton and Roberts [1983] (and further improved by Bondarenko, Prymak, and Radchenko [2013]). We revisit their proof idea that is based on expander graphs, and provide significantly stronger upper bounds by combining it with new techniques. Furthermore, we provide improved lower bounds for this problem.

Another question that we address is that of how to learn a linear function $h$ that is close to an approximately linear function $f$, while querying the value of $f$ on only a small number of sets. We present a deterministic algorithm that makes only linearly many (in the number of items) nonadaptive queries, by this improving over a previous algorithm of Chierichetti, Das, Dasgupta and Kumar [2015] that is randomized and makes more than a quadratic number of queries. Our learning algorithm is based on a Hadamard transform.
\end{abstract}



\section{Introduction}
\label{sec:intro}
A set function $f$ over a universe $U$ of $n$ items assigns a real value $f(S)$ to every subset $S \subseteq U$ (including the empty set $\emptyset$). Equivalently, it is a function whose domain is the Boolean $n$-dimensional cube $\{0,1\}^n$, where each coordinate corresponds to an item, and a vector in $\{0,1\}^n$ corresponds to the indicator vector of a set. Set functions appear in numerous applications, some of which are briefly mentioned in Section~\ref{sub:related-work}.
Though set functions are defined over domains of size $2^n$, one is often interested in optimizing over them in time polynomial in $n$. This offers several challenges, not least of which is the issue of representing $f$. An explicit representation of the truth table of $f$ is of exponential size, and hence other representations are sought.

Some classes of set functions have convenient structure that leads to a polynomial size representation, from which the value of every set can easily be computed. A prime example for this is the class of linear functions.
Formally, a set function $f$ is \emph{linear} if there exist constants $c_0, c_1, \ldots, c_n$ such that for every set $S$, $f(S) = c_0 + \sum_{i \in S} c_i$.  The constants $(c_0, c_1, \ldots, c_n)$ may serve as a polynomial size representation of $f$. More generally, for set functions that arise naturally, one typically assumes that there is a so called {\em value oracle}, such that for every set $S$ one can query the oracle on $S$ and receive $f(S)$ in reply (either in unit time, or in polynomial time, depending on the context). The value oracle serves as an abstraction of either having some explicit polynomial time representation (e.g., a Boolean circuit) from which the value of $f$ on any given input can be computed, or (in cases in which set functions model some physical reality) having a physical process of evaluating $f$ on the set $S$ (e.g., by making a measurement).

Optimizing over general set functions (e.g., finding the maximum, the minimum, maximizing subject to size constraints, etc.)  is a difficult  task, requiring exponentially many value queries if only a value oracle is given, and NP-hard if an explicit representation is given. However, for some special classes of set functions various optimization problems can be solved in polynomial time and with only polynomially many value queries. Notable nontrivial examples are minimization of submodular set functions~\cite{Sch00}\cite{IwataFF01}, and welfare maximization when the valuation function of each agent satisfies the {\em gross substitutes} property (see, e.g., \cite{Pae14}). For the class of linear set functions, many optimization problems of interest can be solved in polynomial time, often by trivial algorithms.

A major concern regarding the positive algorithmic results for some nice classes of set functions is their {\em stability}. Namely, if $f$ is not a member of that nice class, but rather is only {\em close} to being a member (under some natural notion of {\em closeness}), is optimizing over $f$ still easy? Can one obtain solutions that are ``close" to optimal? Or are the algorithmic results ``unstable" in the sense that a small divergence from the nice class leads to a dramatic deterioration in the performance of the associated algorithms?

A complicating factor is that there is more than one way of defining {\em closeness}. For example, when considering two functions, one may consider the variational distance between them, the mean square distance, the Hamming distance, and more. The situation becomes even more complicated when one wishes to define how close $f$ is to a given \emph{class $C$} of nice functions (rather than to a particular function). One natural definition is in terms of a distance to the function $g \in C$ closest to~$f$. But other definitions make sense as well, especially if the class $C$ is defined in terms of properties that functions in $C$ have. For example, the distance from being submodular can be measured also by the extent to which the submodularity condition $f(S) + f(T) \ge f(S\cup T) + f(S\cap T)$ might be violated by $f$ (as in~\cite{LLN06,KC10}), or even by the so called {\em supermodular degree}~\cite{FeigeIzsak}.

Following the lead of  Chierichetti, Das,  Dasgupta, and Kumar~[2015], the goal of this work is to study questions such as the above in a setting that is relatively simple, yet important, namely, that of linear set functions. As illustrated by the work of~\cite{CDDK}, even this relatively simple setting is challenging.

\subsection{Our results and techniques}

\paragraph{Kalton constants}
One question that we address is the relation between two natural notions of being close to a linear function. The first notion is that of being close point-wise in an additive sense. We say that $f$ is {\em $\Delta$-linear} if there is a linear set function $g$ such that
$|f(S) - g(S)| \le \Delta$ for every set $S$. Under this notion, the smaller $\Delta$ is, the closer we consider $f$ to being linear. The other notion of closeness concerns a different but equivalent definition of linear functions, namely, as those functions that satisfy the {\em modular equation} $f(S) + f(T) = f(S \cup T) + f(S \cap T)$ for every two sets $S$ and $T$. This form of defining linear functions is the key to generalizing linear functions to other classes of functions of interest, and specifically to submodular functions that satisfy  $f(S) + f(T) \ge f(S \cup T) + f(S \cap T)$, and are considered to be the discrete analog of convex functions. Formally, we say that $f$ is
\begin{itemize}
	\item  $\epsilon$-modular if  $|f(S) + f(T) - f(S \cup T) - f(S \cap T)| \le \epsilon$ for every two sets $S$ and $T$;
	\item \emph{weakly} $\epsilon$-modular if the inequality holds for every two \emph{disjoint} sets $S$ and $T$. 
\end{itemize}	
The smaller $\epsilon$ is, the closer we consider $f$ to being linear (or equivalently, to being modular).

It can easily be shown that every $\Delta$-linear function is $\epsilon$-modular for $\epsilon \le 4\Delta$. Establishing a reverse implication is more difficult. In~\cite{CDDK} it was shown that every $\epsilon$-modular function is $\Delta$-linear for $\Delta = O(\epsilon \log n)$. However, the authors of~\cite{CDDK} were not aware of the earlier work of ~\cite{KR}, which already showed that $\Delta \le O(\epsilon)$ (this work was brought to our attention by Assaf Naor). We shall use $\Kalton$ to denote the smallest constant such that every $\epsilon$-modular function is $\Kalton \epsilon$-linear, and refer to $\Kalton$ as the {\em strong Kalton constant}. The bound provided in~\cite{KR} was $\Kalton \le \frac{89}{2}$, and this was subsequently improved by Bondarenko, Prymak, and Radchenko~\cite{BPR} to $\Kalton \le 35.8$.

The approach initiated by Kalton and Roberts (and used  almost as a blackbox in~\cite{BPR}) makes essential use of expander graphs in deriving upper bounds on $\Kalton$. We revisit this approach, simplify it and add to it new ingredients. Technically, the advantage that we get by our new ingredients is that we can use bipartite graphs in which only small sets of vertices expand, whereas previous work needed to use bipartite graphs in which large sets expand. This allows us to use expanders with much better parameters, leading to substantially improved upper bounds on $\Kalton$.
For concreteness, we prove in this paper the following upper bound.

\begin{theorem}[Upper bound, strong Kalton]
\label{thm:upper15}
Every $\epsilon$-modular function is $\Delta$-linear for $\Delta < 12.65\epsilon$. Hence $\Kalton < 12.65.$
\end{theorem}

We remark that our technique for upper bounding $\Kalton$ adds a lot of versatility to the expander approach, which is not exploited to its limit in the current version of the paper: the upper bound that we report strikes a balance between simplicity of the proof and quality of the upper bound.
Directions for further improvements are mentioned in Section \ref{appx:UB-strong}.

Obtaining good lower bounds on $\Kalton$ is also not easy. Part of the difficulty is that even if one comes up with a function $f$ that is a candidate for a lower bound, verifying that it is $\epsilon$-modular involves checking roughly $2^{2n}$ approximate modularity equations (one equation for every pair $S$ and $T$ of sets), making a computer assisted search for good lower bounds impractical. For set functions over $4$ items, we could verify that the worst possible constant is $1/2$. Checking $\epsilon$-modularity is much easier for symmetric functions (for which the value of a set depends only on its size), but we show that for such functions $\Kalton = \frac{1}{2}$, and this is also the case for $\epsilon$-modular functions that are submodular (see Section~\ref{sub:warm-up}). The only lower bound on $\Kalton$ that we could find in previous work is $\Kalton \ge \frac{3}{4}$, implicit in \cite{Pawlik}. We consider a class of functions that enjoys many symmetries (we call such functions {\em $(k,M)$-symmetric}), and for some function in this class we provide the following lower bound.

\begin{theorem}[Lower bound, strong Kalton]
\label{thm:lower1}
There is an integer valued set function over 70 items that is 2-modular and tightly 2-linear. Hence $\Kalton \ge 1$.
\end{theorem}

In addition, we shall use $\Kalweak$ to denote the smallest constant such that every {\em weakly} $\epsilon$-modular function is $\Kalweak \epsilon$-linear, and refer to $\Kalweak$ as the {\em weak Kalton constant}. For the weak Kalton constant we prove the following theorem.

\begin{theorem}[Upper and lower bounds, weak Kalton]
\label{thm:UpperLowerWeak}
Every weakly $\epsilon$-modular function is $\Delta$-linear for $\Delta < 24\epsilon$, so $\Kalweak < 24$. Moreover, there is an integer valued set function over 20 items that is weakly 2-modular and 3-linear, so $\Kalweak \ge 3/2$.
\end{theorem}

\paragraph{Learning algorithms}
Another part of our work concerns the following setting. Suppose that one is given access to a value oracle for a function $f$ that is $\Delta$-linear. In such a setting, it is desirable to obtain an explicit representation for some linear function $h$ that is close to $f$, because using such an $h$ one can approximately solve optimization problems on $f$ by solving them exactly on $h$. The process of learning $h$ involves two complexity parameters, namely the number of queries made to the value oracle for $f$, and the computation time of the learning procedure, and its quality can be measured by the distance $d(h,f) = \max_{S}\{|h(S) - f(S)|\}$ of $h$ from $f$. It was shown in~\cite{CDDK} that for every learning algorithm that makes only polynomially many value queries, there will be cases in which $d(h,f) \ge \Omega(\Delta\sqrt{n}/\log n)$. This might sound discouraging but it need not be. One reason is that in many cases $\Delta < \log n/\sqrt{n}$. For example, this may be the case when $f$ itself is actually linear, but is modeled as $\Delta$-linear due to noise in measurements of values of $f$. By investing more in the measurements, $\Delta$ can be decreased. Another reason is that the lower bounds are for worst case $f$, and it is still of interest to design natural learning algorithms that might work well in practice. Indeed, \cite{CDDK} designed a randomized learning algorithm that makes $O(n^2 \log n)$ nonadaptive queries and learns $h$ within distance $O(\Delta \sqrt{n})$ from $f$. %
We improve upon this result in two respects, one being reducing the number of queries, and the other being removing the need for randomization.

\begin{theorem}[Learning algorithm]
\label{thm:learning}
There is a deterministic polynomial time learning algorithm that given value oracle access to a $\Delta$-linear function on $n$ items, makes $O(n)$ nonadaptive value queries and outputs a linear function~$h$, such that $h(S)$ is $O(\Delta(1 + \sqrt{\min\{|S|,n-|S|\}}))$-close to $f(S)$ for every set $S$.
\end{theorem}

The learning algorithm is based on the Hadamard basis.
A similar technique is applied in \cite{DY08} for a different purpose (as brought to our attention by Moni Naor) -- attacking the privacy of an $n$-sized binary database, and recovering all but $o(n)$ entries via a linear number of queries.

The Hadamard basis is an orthogonal basis of $\mathbb{R}^n$ consisting of vectors with $\pm1$ entries, which can be constructed for every $n$ that is a power of 2.
It has the following property: for every two linear functions that are $O(\Delta)$-close on the basis vectors (where each basis vector can be interpreted as a difference between two sets, one corresponding to its $+1$ indices, the other to its $-1$ indices), the linear functions are $O(\Delta(1 + \sqrt{\min\{|S|,n-|S|\}}))$-close on \emph{every} set $S$. (This property follows from the vectors having large norms, and thus a large normalization factor by which the distance of $\Delta$ is divided.)
Given $O(n)$ value queries to a $\Delta$-linear function $f$, an algorithm can learn the values of the linear function $g$ that is $\Delta$-close to $f$ for the $n$ Hadamard basis vectors, up to an additive error of $O(\Delta)$. This is enough information to construct a linear function $h$ that is $O(\Delta(1 + \sqrt{\min\{|S|,n-|S|\}}))$-close to $g(S)$, and thus to $f(S)$, for every set $S$.

\subsection{Additional Related Work}
\label{sub:related-work}

The stability of linearity and the connection between approximate modularity and approximate linearity, which we study on the discrete hypercube, have been extensively studied in mathematics on continuous domains (e.g., Banach spaces). An early example is \cite{Hye41}, and this result together with its extensions to other classes of functions are known as the Hyers-Ulam-Rassias theory~\cite{Jung11}.

Our work is related to the literature on \emph{data-driven optimization} (see, e.g., \cite{BT14,SV,HS16,BRS16}).
This literature studies scenarios in which one wishes to optimize some objective function, whose parameters are derived from real-world data and so can be only approximately evaluated. Such scenarios arise in applications like machine learning \cite{KSG08}, sublinear algorithms and property testing \cite{BLR93}, and algorithmic game theory \cite{LLN06}.
This motivates the study of optimization of functions that satisfy some properties only approximately (e.g., submodularity \cite{KC10}, gross substitutes \cite{TimInbalJan16}, or convexity \cite{BLNR15}).
Our work is related to this strand of works in that we are also interested in functions that satisfy some property (modularity in our case) approximately, but unlike the aforementioned works, we are interested in the characterization and learning of functions, not in their optimization.

Learning of (exactly) submodular functions was studied by \cite{BH11} and \cite{GHIM09}, in the PMAC model and general query model, respectively. We shall discuss learning of approximately modular functions in a value query model.

\paragraph{Organization}

In Section \ref{sec:prelim} we present preliminaries. 
Section \ref{sec:upper} shows improved upper bounds on the weak and strong Kalton constants, and
Section \ref{sec:lower} shows improved lower bounds on the weak and strong Kalton constants.
Section \ref{sec:learning} describes how to learn an approximately linear function from an approximately modular one.

\section{Preliminaries}
\label{sec:prelim}
\subsection{Approximate Modularity and Linearity}
\label{sub:basic-obs}

Let $U=\{1,\dots,n\}$ be a ground set of $n\ge 2$ \emph{items} (also called elements).
For every set $S\subseteq U$ of items, let $\bar{S}$ denote its complement $U \setminus S$.
A \emph{collection} $\mathcal{G}$ is a multiset of sets, and its complement $\bar{\mathcal{G}}$ is the collection of complements of the sets in $\mathcal{G}$. Given a set function $f:2^U\to\mathbb{R}$, the \emph{value} of set $S$ is $f(S)$.
Throughout we focus on additive closeness, and say that two values $x,y\in\mathbb{R}$ are $\Delta$-close (or equivalently, that $x$ is $\Delta$-close to $y$) if $|x-y|\le\Delta$.

A set function $f$ is \emph{$\epsilon$-modular} if for every two sets $S,T\subseteq U$,
\begin{equation}
|f(S) + f(T) - f(S\cup T) - f(S\cap T)| \le \epsilon.\label{eq:epsilon-mod}
\end{equation}
If $\epsilon = 0$ then $f$ is \emph{modular}.
A set function $f$ is \emph{weakly $\epsilon$-modular} if Condition \eqref{eq:epsilon-mod} holds for every two \emph{disjoint} sets $S,T\subseteq U$. The following proposition 
shows the relation:

\begin{proposition}
	\label{pro:weak-to-mod}
	Every weakly $\epsilon$-modular set function is a $2\epsilon$-modular set function.
\end{proposition}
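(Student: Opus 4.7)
The plan is to reduce the general (possibly overlapping) pair $(A,B)$ to two instances of the weak modularity condition applied to disjoint pairs, chosen so that almost everything cancels when the two instances are combined.

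First I would partition $A \cup B$ into the three pairwise disjoint pieces $A \setminus B$, $B \setminus A$, and $A \cap B$, and then pick two specific disjoint decompositions that both involve $A \setminus B$ as a ``common'' summand. Concretely, $A \setminus B$ and $B$ are disjoint with union $A \cup B$ and intersection $\emptyset$, so weak $\epsilon$-modularity yields
\begin{equation*}
\bigl|f(A\setminus B) + f(B) - f(A\cup B) - f(\emptyset)\bigr| \le \epsilon.
\end{equation*}
Likewise $A \setminus B$ and $A \cap B$ are disjoint with union $A$ and intersection $\emptyset$, giving
\begin{equation*}
\bigl|f(A\setminus B) + f(A\cap B) - f(A) - f(\emptyset)\bigr| \le \epsilon.
\end{equation*}

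The key observation is that the terms $f(A\setminus B)$ and $f(\emptyset)$ appear identically in both expressions inside the absolute values. Subtracting the second from the first (using the triangle inequality to combine the error bounds) therefore cancels them, leaving exactly
\begin{equation*}
\bigl|f(A) + f(B) - f(A\cup B) - f(A\cap B)\bigr| \le 2\epsilon,
\end{equation*}
which is the $2\epsilon$-modular condition for the arbitrary pair $(A,B)$.

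There is no real obstacle here; the only thing one must be careful about is choosing the two disjoint decompositions so that a single intermediate term (here $f(A \setminus B)$, together with $f(\emptyset)$) appears in both, so that the combination via the triangle inequality costs only a factor of $2$ rather than $4$. A naive approach using all four pairs among $\{A\cap B, A\setminus B, B\setminus A, A\triangle B\}$ would yield only a $4\epsilon$ bound.
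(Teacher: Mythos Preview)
Your proof is correct and is essentially identical to the paper's: the paper applies weak $\epsilon$-modularity to the disjoint pairs $(T\setminus S,\, S\cap T)$ (union $T$) and $(T\setminus S,\, S)$ (union $S\cup T$), which after relabeling $A\leftrightarrow T$, $B\leftrightarrow S$ are exactly your two decompositions with common summand $A\setminus B$. The paper presents the combination as a substitution rather than as a subtraction via the triangle inequality, but the content is the same.
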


\begin{proof}
	Let $f$ be a weakly $\epsilon$-modular set function, we show it must be $2\epsilon$-modular. For every two sets $S$ and $T$, since $f$ is weakly $\epsilon$-modular we have that $f(T)=f(S\cap T)+f(T\setminus S)-f(\emptyset)\pm \epsilon$, and that $f(S\cup T)\pm\epsilon=f(S)+f(T\setminus S)-f(\emptyset)$. Therefore, $f(S)+f(T) = f(S) + f(S\cap T)+f(T\setminus S)-f(\emptyset)\pm \epsilon = f(S\cup T)+f(S\cap T)\pm 2\epsilon$, completing the proof.
\end{proof}

\begin{observation}[Approximate modularity for $\ge2$ sets]
	\label{obs:iterative}
	Let $f$ be a weakly $\epsilon$-modular set function, and let $S\subseteq U$ be a set of items with a partition $(S_1,\dots,S_s)$ (i.e., the disjoint union $S_1\uplus\dots\uplus S_s$ is equal to $S$). By iterative applications of weak $\epsilon$-modularity,
	$$
	f(S)+(s-1)(f(\emptyset) - \epsilon) \le
	f(S_1)+\dots+f(S_s)\le f(S)+(s-1)(f(\emptyset) + \epsilon).
	$$
\end{observation}

A set function $f$ is \emph{linear} if there exist constants $c_0, c_1, \ldots, c_n$ such that for every set $S$, $f(S) = c_0 + \sum_{i \in S} c_i$.
A linear set function is \emph{additive} if $c_0 = 0$. The \emph{zero function} has $c_i=0$ for every $0\le i\le n$. A set function $f$ is \emph{$\Delta$-linear} if there exists a linear set function $g$ that is \emph{$\Delta$-close} to $f$, i.e., $f(S)$ and $g(S)$ are $\Delta$-close for every set $S$. We say that a set function $f$ is \emph{tightly} $\Delta$-linear if it is $\Delta$-linear and for every $\Delta'<\Delta$, $f$ is not $\Delta'$-linear. A \emph{closest linear function} to $f$ is a linear set function $g$ that is $\Delta$-close to $f$ where $f$ is tightly $\Delta$-linear.


\begin{proposition}[Modularity = linearity]
\label{pro:modular-is-linear}
A set function is modular if and only if it is linear.
\end{proposition}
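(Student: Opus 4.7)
The plan is to prove both directions directly, with the forward direction a one-line calculation and the reverse direction built on Observation \ref{obs:iterative}.

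For the forward direction, suppose $f$ is linear with constants $c_0, c_1, \ldots, c_n$. Then for any two sets $S, T$, I would simply expand both sides of the modular equation and observe that each element $i \in S \cup T$ contributes $c_i$ to $f(S) + f(T)$ exactly as often as it contributes to $f(S\cup T) + f(S\cap T)$: elements in $S \cap T$ contribute twice on each side, while elements in the symmetric difference contribute once on each side. The constant term $c_0$ appears twice on both sides. So $f(S) + f(T) = f(S \cup T) + f(S \cap T)$ holds with equality, i.e., $f$ is $0$-modular.

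For the reverse direction, suppose $f$ is modular. Define $c_0 := f(\emptyset)$ and $c_i := f(\{i\}) - f(\emptyset)$ for $1 \le i \le n$; I claim that $f(S) = c_0 + \sum_{i \in S} c_i$ for every $S$. Since modularity is exactly weak $\epsilon$-modularity with $\epsilon = 0$ (and in fact the weak and strong versions coincide with $\epsilon = 0$ by Proposition \ref{pro:weak-to-mod}), Observation \ref{obs:iterative} applied to the singleton partition $S = \{s_1\} \uplus \cdots \uplus \{s_k\}$ (where $k = |S|$) yields
\[
f(S) + (k-1)f(\emptyset) \;=\; \sum_{j=1}^k f(\{s_j\}).
\]
Rearranging gives $f(S) = f(\emptyset) + \sum_{i \in S} (f(\{i\}) - f(\emptyset)) = c_0 + \sum_{i \in S} c_i$, which is exactly the linear form. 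The empty set case is immediate from $c_0 = f(\emptyset)$.

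I do not expect any step here to be a real obstacle; the proposition is essentially a sanity check that the definitions align. The only mild subtlety is making sure the reduction to Observation \ref{obs:iterative} is legitimate when $|S| \le 1$, which is handled by checking $S = \emptyset$ and $S = \{i\}$ directly (the observation becomes vacuous or trivial in these cases).
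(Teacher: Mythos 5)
Your proof is correct and takes essentially the same approach as the paper's: the forward direction is the same term-by-term expansion, and the reverse direction decomposes $S$ into singletons and iterates modularity (you route through Observation~\ref{obs:iterative} with $\epsilon=0$, whereas the paper performs the recursion inline, but these are the same argument).
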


\begin{proof}
	Let $f$ be a modular function, we show it must be linear: Let $c_0=f(\emptyset)$, and let $c_i=f(\{i\})-c_0$. For every set $S$, by modularity, $f(S)=f(S_1)+f(S_2)-c_0$ for any $S_1,S_2$ whose disjoint union is $S$. Applying this equality recursively gives $f(S)=\sum_{i\in S}c_i + |S|c_0 - |S-1|c_0 = \sum_{i\in S}c_i + c_0$, as required.
	
	For the other direction, let $f$ be a linear function such that  $f(S)=c_0+\sum_{i\in S}c_i$. For any two sets $S,T$, it holds that $f(S)+f(T)=2c_0+\sum_{i\in S}c_i+\sum_{i\in T}c_i$, and $f(S\cup T)+f(S\cap T)=2c_0+\sum_{i\in S\cup T}c_i+\sum_{i\in S\cap T}c_i$.	The desired equality follows by $\sum_{i\in S}c_i+\sum_{i\in T}c_i=\sum_{i\in S\cup T}c_i+\sum_{i\in S\cap T}c_i$.
\end{proof}

\begin{proposition}
\label{pro:linear-to-modular}
Every $\Delta$-linear set function is $4\Delta$-modular.
This result is tight, even for symmetric functions.
\end{proposition}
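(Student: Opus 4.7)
The plan is to prove the upper bound by a direct triangle inequality argument, and to establish tightness by exhibiting an explicit symmetric example.

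For the upper bound, let $\ell$ be a linear function witnessing that $f$ is $\Delta$-linear, so $|f(X)-\ell(X)|\le\Delta$ for every $X\subseteq U$. By Proposition~\ref{pro:modular-is-linear}, $\ell$ is modular, so $\ell(S)+\ell(T)-\ell(S\cup T)-\ell(S\cap T)=0$. Subtracting this identity lets me rewrite
$$f(S)+f(T)-f(S\cup T)-f(S\cap T)=[f(S)-\ell(S)]+[f(T)-\ell(T)]-[f(S\cup T)-\ell(S\cup T)]-[f(S\cap T)-\ell(S\cap T)].$$
The triangle inequality and the pointwise bound $|f(X)-\ell(X)|\le\Delta$ immediately give modularity defect at most $4\Delta$.

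For tightness, I will exhibit a symmetric function that achieves the bound exactly. Define $f(S)=(-1)^{|S|+1}\Delta$, so $f$ takes value $+\Delta$ on odd-sized sets and $-\Delta$ on even-sized sets. Since $|f(S)|\le\Delta$ for all $S$, the function $f$ is $\Delta$-close to the zero linear function and is therefore $\Delta$-linear. Now pick any two distinct items $i,j\in U$ and take $S=\{i\}$, $T=\{j\}$. Then $|S|=|T|=1$ while $|S\cup T|=2$ and $|S\cap T|=0$, so
$$f(S)+f(T)-f(S\cup T)-f(S\cap T)=\Delta+\Delta-(-\Delta)-(-\Delta)=4\Delta.$$
Hence no constant smaller than $4$ works, and this is achieved by a symmetric function, which proves the tightness claim.

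There is no real obstacle here: the only conceptual point is that the $\ell$-terms cancel perfectly because $\ell$ is exactly modular, and the tight example simply oscillates maximally with the parity of $|S|$ so that $f$ flips sign across each of the four sets $S$, $T$, $S\cup T$, $S\cap T$, pushing all four deviations to the same extreme simultaneously.
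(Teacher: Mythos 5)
Your upper-bound argument is the same as the paper's: both decompose the modularity defect of $f$ into the four pointwise differences $f(X)-\ell(X)$ (using that $\ell$ itself has zero defect) and apply the triangle inequality. For tightness the paper uses a specific function on $n=4$ items (values $0,-1,1,-1,0$ on sets of size $0,\dots,4$, with two intersecting size-2 sets giving defect~$4$), whereas you use the parity function $f(S)=(-1)^{|S|+1}\Delta$ with disjoint singletons $S=\{i\},T=\{j\}$; both are valid symmetric witnesses, and yours has the minor advantage of working on any ground set with $n\ge 2$ and also showing the bound is tight already for the weak (disjoint-sets) modularity defect.
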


\begin{proof}
	Let $f$ be a $\Delta$-linear function, and let $g$ be a linear function such that $g(S)=c_0+\sum_{i \in S} c_i$ and $|f(S) - g(S)| \le \Delta$ for every set $S$.
	For every two sets $S$ and $T$,
	$f(S)+f(T)-f(S \cup T)-f(S \cap T) \leq (c_0+\sum_{i\in S}c_i + \Delta) + (c_0+\sum_{i\in T}c_i + \Delta) - (c_0+\sum_{i\in S \cup T}c_i - \Delta) - (c_0+\sum_{i\in S \cap T}c_i - \Delta) = 4\Delta$.
	Similarly,
	$f(S)+f(T)-f(S \cup T)-f(S \cap T) \geq (c_0+\sum_{i\in S}c_i - \Delta) + (c_0+\sum_{i\in T}c_i - \Delta) - (c_0+\sum_{i\in S \cup T}c_i + \Delta) - (c_0+\sum_{i\in S \cap T}c_i + \Delta) = - 4\Delta$.
	This establishes that $f$ is $4\Delta$-modular.
	
	The proposition is tight, even for symmetric functions:
	Consider the $1$-linear function on $4$ items in which sets of size $0$ and $4$ are worth $0$, sets of size 1 and 3 are worth $-1$, and sets of size 2 are worth $+1$. If $S$ and $T$ are two different sets of size 2 that intersect, the modularity equation is violated by 4.
\end{proof}

We shall often refer to set functions whose closest linear function is the zero function.

\begin{observation}[\cite{CDDK}]
	\label{obs:zero-func-wlog}
	For every $\epsilon$-modular (resp., weakly $\epsilon$-modular) set function $f$ that is tightly $\Delta$-linear, there is an $\epsilon$-modular (resp., weakly $\epsilon$-modular) set function $f'$ that is tightly $\Delta$-linear and whose closest linear function is the zero function.
	
	The function $f'$ can be defined as follows: if $g$ is a closest linear function to $f$ then $f'(S)=f(S)-g(S)$ for every set~$S$.
\end{observation}

\subsection{Kalton Constants}
\label{sec:kalton-prelim}

Let $\Kalweak$ denote the smallest constant such that every weakly $\epsilon$-modular set function is $\Kalweak\epsilon$-linear. Let $\Kalton$ denote the smallest constant such that every $\epsilon$-modular set function is $\Kalton\epsilon$-linear. 
Notice that $\Kalton \le \Kalweak$. We refer to $\Kalweak$ as the weak Kalton constant (the possibility that there is such a constant $\Kalweak$ independent of $n$ was advocated in the works of Nigel Kalton), and to $\Kalton$ as the strong Kalton constant. Formally:

\begin{definition}
	$\Kalton\in\Replus$ (resp., $\Kalweak\in\Replus$) is the strong (weak) \emph{Kalton constant} if:
	\begin{itemize}
	\item for every $n\in\Intplus,\epsilon\in\Replus$, every (weakly) $\epsilon$-modular set function over $[n]$ is $\Kalton\epsilon$-linear ($\Kalweak\epsilon$-linear); and
	\item for every $\kappa<\Kalton$ ($\kappa<\Kalweak$) and for every $\epsilon\in\Replus$, there exists a sufficiently large $n$ and a (weakly) $\epsilon$-modular set function $f$ over $[n]$ such that $f$ is not $\kappa\epsilon$-linear (it is sufficient that there exist a (weakly) $\epsilon$-modular set function that is tightly $\Kalton\epsilon$-linear ($\Kalweak\epsilon$-linear)).
	\end{itemize}
\end{definition}

The propositions in Section~\ref{sub:basic-obs} imply the following corollaries on Kalton constants.

\begin{corollary} [of Proposition~\ref{pro:weak-to-mod}]
	\label{cor:weak-to-strong}
	$\Kalweak/2 \le \Kalton \le \Kalweak$.
\end{corollary}

\begin{corollary}[of Observation \ref{obs:zero-func-wlog}]
\label{cor:kalton-zero-close}
	If $\Kalton$ is the strong Kalton constant for set functions whose closest linear function is the zero function, then $\Kalton$ is the strong Kalton constant (for general set functions).
Similarly, if $\Kalweak$ is the weak Kalton constant for set functions whose closest linear function is the zero function, then $\Kalweak$ is the weak Kalton constant (for general set functions).
\end{corollary}


\subsection{Kalton Constants for Special Cases}
\label{sub:warm-up}

The analysis of Kalton constants becomes much easier in the following special cases. 

\subsubsection{Kalton Constants for Symmetric Set Functions}

Here we give tight bounds on $\Kalton$ for symmetric set functions. A set function $f$ is \emph{symmetric} if for every two sets $S$ and $T$, $|S| = |T| \implies f(S) = f(T)$. A symmetric set function over $[n]$ can be represented as a function over integers $f:\{0,\dots,n\}\to\mathbb{R}$.
The following example shows that $\Kalton\ge 1/2$ for symmetric set functions, and Proposition \ref{pro:symm} shows this is tight.

\begin{example}
	\label{ex:symm-LB}
	For every $n\in\mathbb{N}_{\ge 0}$, consider the symmetric set function $f_n$ over $[n]$ where $f_n([n])=-\epsilon$ and $f_n(S)=0$ for every other set $S$.
	Let $g_\delta$ be a symmetric linear set function over $[n]$ such that $g_\delta(\{j\})=-\frac{\epsilon j}{n} + \delta$ for every $0\le j\le n$. Observe that $g_\delta$ is $\max\{\delta, \epsilon-\frac{\epsilon}{n} - \delta\}$-close to $f_n$. When $\delta=\frac{\epsilon}{2}-\frac{\epsilon}{2n}$, the distance is minimized and $g_\delta$ is a closest linear function to $f_n$. Thus $f_n$ is tightly $\delta$-linear.
	This shows that for every $\kappa <1/2$, there exists $n$ such that $f_n$ is tightly $\delta$-linear for $\delta = \frac{\epsilon}{2}-\frac{\epsilon}{2n}>\kappa\epsilon$.
\end{example}

\begin{proposition}
	\label{pro:symm}
	The strong Kalton constant for symmetric set functions is $\Kalton=\frac{1}{2}$.
\end{proposition}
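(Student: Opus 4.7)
The lower bound $\Kalton\ge 1/2$ for symmetric functions is already contained in Example~\ref{ex:symm-LB}, so I focus on the matching upper bound $\Kalton\le 1/2$. My plan has three ingredients.

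First, I would reduce to the case where the closest linear function to $f$ is identically zero. Since $f$ is symmetric, averaging any closest linear $\ell(S)=c_0+\sum_i c_i\mathbf{1}[i\in S]$ over all permutations $\pi$ of $[n]$ gives a symmetric linear function $\tilde\ell(S)=c_0+\bar c\,|S|$ with $\bar c=(\sum_i c_i)/n$. The triangle inequality combined with $f(\pi(S))=f(S)$ yields $|\tilde\ell(S)-f(S)|\le\max_T|\ell(T)-f(T)|$, so $\tilde\ell$ is also a closest linear function. Combined with Observation~\ref{obs:zero-func-wlog}, I may thus assume $f$ is symmetric, $\epsilon$-modular, and has closest linear function $\equiv 0$. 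Writing $a_i=f(S)$ for $|S|=i$ and $M=\max_i|a_i|$, the task reduces to proving $M\le\epsilon/2$, and I may assume $M>0$.

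Second, I would invoke discrete Chebyshev alternation: because the zero function is the best $L^\infty$ affine approximation to $(a_0,\dots,a_n)$, there must exist indices $i_1<i_2<i_3$ and a sign $\sigma\in\{\pm 1\}$ such that $a_{i_1}=a_{i_3}=\sigma M$ and $a_{i_2}=-\sigma M$. This is a standard consequence of the Haar property of linear functions on a finite subset of the integers, but can also be shown from scratch: if no alternating triple exists among the extremal indices, one can exhibit a small perturbation $(\Delta c_0,\Delta c)$ that strictly decreases $\max_i|a_i-c_0-ci|$ at $(c_0,c)=(0,0)$, contradicting optimality.

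Third, I apply the full $\epsilon$-modularity to the triple $i=i_2$, $j=i_1+i_3-i_2$, $k=i_1$; the feasibility conditions $0\le k\le\min(i,j)$ and $\max(i,j)\le i+j-k=i_3\le n$ all follow from $0\le i_1<i_2<i_3\le n$. The resulting inequality
$$|a_{i_2}+a_{i_1+i_3-i_2}-a_{i_3}-a_{i_1}|=|a_{i_1+i_3-i_2}-3\sigma M|\le\epsilon,$$
combined with $|a_{i_1+i_3-i_2}|\le M$, forces the left-hand side to be at least $3M-M=2M$, so $M\le\epsilon/2$. The step I expect to require the most care is the alternation principle: it is precisely the equioscillation at three points with alternating signs (as opposed to a two-term identity such as $|a_i+a_{n-i}-a_0-a_n|\le\epsilon$ available from weak modularity alone) that allows the full non-disjoint modularity relation to consolidate three copies of $\sigma M$ with matching sign into a single modular defect, thereby yielding the sharp constant $1/2$.
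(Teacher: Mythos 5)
Your proof is correct, and it follows the same overall route as the paper's: reduce to the case where the closest linear function is zero, invoke discrete Chebyshev alternation to obtain an equioscillating extremal triple $i_1<i_2<i_3$, and then apply full modularity to one well-chosen pair of sets to consolidate $3M$ into a single modular defect of at most $\epsilon$, giving $2M\le\epsilon$. The one noteworthy difference is your choice of modular equation: you apply it to sets of sizes $i_2$ and $i_1+i_3-i_2$ with intersection size $i_1$ and union size $i_3$, which is always feasible for any $0\le i_1<i_2<i_3\le n$. The paper instead takes two sets of size $k=i_2$ with union size $k_2=i_3$ and intersection size $2k-k_2$, and this intersection can be negative, forcing a case split (``$k\ge n/2$'' vs.\ the analogous argument with $k_1$) to ensure feasibility. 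Your choice cleanly avoids that split while yielding the same inequality $|a_{j}-3\sigma M|\le\epsilon$ with $|a_j|\le M$ and hence $M\le\epsilon/2$. One small remark: the paper simply asserts the existence of the alternating triple $k_1<k<k_2$, whereas you spend more words justifying it via the Haar property / equioscillation theorem; either way the fact is standard for affine best $L^\infty$ approximation on $\{0,\ldots,n\}$ with $n\ge 2$.
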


\begin{proof}
	Let $f$ be a symmetric $\epsilon$-modular set function, we argue that $f$ must be $\frac{1}{2}\epsilon$-linear.
	By Observation~\ref{obs:zero-func-wlog-symm} we can assume without loss of generality that $f$'s closest linear function is the zero function.
	Let $M$ be the maximum absolute value of $f$, then there exist $k_1 < k < k_2$ such that either $f(k_1)=f(k_2)=-M$ and $f(k)=M$, or $f(k_1)=f(k_2)=M$ and $f(k)=-M$. Without loss of generality assume the former.
	Suppose that $k \geq n/2$. Using $\epsilon$-modularity we get that $2f(k) \leq f(k_2) + f(2k-k_2) + \epsilon \leq -M + M + \epsilon$. On the other hand, $2f(k) = 2M$. We get $2M \leq -M+M+\epsilon$, implying that $M \leq \epsilon / 2$, as desired. If $k < n/2$, an analogous argument is invoked using $k_1$.
\end{proof}

\begin{observation}
	\label{obs:zero-func-wlog-symm}
	For every symmetric $\epsilon$-modular set function $f$ that is tightly $\Delta$-linear, there is a \emph{symmetric} $\epsilon$-modular set function $f'$ that is tightly $\Delta$-linear and whose closest linear function is the zero function.
\end{observation}

\begin{proof}
	Let $g$ be a linear set function $\Delta$-close to $f$. We show a symmetric set function $g'$ that is $\Delta$-close to $f$: for every $k\in[n]$, let $g'(k)$ be the average value of $k$ items according to $g$. The proof follows as in Observation \ref{obs:zero-func-wlog}.
\end{proof}

\subsubsection{Kalton Constants for Submodular Set Functions}

We now give tight bounds on $\Kalton$ for submodular set functions. A set function $f$ is \emph{submodular} if for every two sets $S,T$ it holds that $f(S)+f(T)\ge f(S\cup T)+f(S\cap T)$. Example \ref{ex:symm-LB} shows that $\Kalton\ge 1/2$ not only for symmetric set functions but also for submodular ones.

\begin{proposition}
	\label{pro:submod}
	The strong Kalton constant for submodular set functions is $\Kalton=\frac{1}{2}$.
\end{proposition}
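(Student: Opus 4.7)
The lower bound $\Kalton \geq \tfrac{1}{2}$ follows immediately from Example \ref{ex:symm-LB}: the witnessing function $f_n$ there (which takes value $-\epsilon$ on $[n]$ and $0$ elsewhere) is easily verified to be submodular, since the only way to violate submodularity would require $f(S\cup T)=-\epsilon$ together with $f(S\cap T)=0$, while $f(S)+f(T)\ge -\epsilon$. For the upper bound, let $f$ be a submodular $\epsilon$-modular set function. Subtracting a linear function preserves both submodularity and $\epsilon$-modularity, so by Observation \ref{obs:zero-func-wlog} we may assume without loss of generality that the closest linear function to $f$ is the zero function. Write $M=\max_S|f(S)|$; the goal is to show $M\le\epsilon/2$.

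The first key step is a closure property of min-value sets that is special to submodular functions. For any $N,N'$ with $f(N)=f(N')=-M$, submodularity gives $-2M = f(N)+f(N') \ge f(N\cup N')+f(N\cap N')$, while the trivial bound $f(\cdot)\ge -M$ forces this sum to be $\ge -2M$. Hence equality holds and $f(N\cup N')=f(N\cap N')=-M$. So the (finite) collection of all min-value sets is closed under union and intersection, and therefore has a unique minimum element $N^*$ and a unique maximum element $N^{**}$, both with value $-M$.

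The second step pins down a max-value set inside the interval $[N^*,N^{**}]$ via Lemma \ref{lem:charac}. Let $P^+,P^-$ be the promised distributions over max-value and min-value sets with common item marginals $\mu_i$. For any $i\in N^*$, every set in the support of $P^-$ (being a min-value set, hence containing $N^*$) contains $i$, so $\mu_i=1$, and therefore every set in the support of $P^+$ also contains $i$. A symmetric argument shows that for $i\notin N^{**}$, no set in the support of $P^+$ contains $i$. In particular, any $P$ in the support of $P^+$ satisfies $N^*\subseteq P\subseteq N^{**}$ and $f(P)=M$.

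With this $P$, the proof finishes by applying $\epsilon$-modularity to $P$ and $T:=N^*\cup(N^{**}\setminus P)$: these two sets have intersection $N^*$ (since $N^*\subseteq P$) and union $N^{**}$ (since $P\subseteq N^{**}$), so
\[ |f(P)+f(T)-f(N^{**})-f(N^*)| \;=\; |3M+f(T)| \;\le\; \epsilon, \]
forcing $f(T)\le -3M+\epsilon$. Combined with the trivial lower bound $f(T)\ge -M$, this yields $-M\le -3M+\epsilon$, i.e., $M\le\epsilon/2$. The main obstacle is the second step: extracting a max-value set that sits inside $[N^*,N^{**}]$ requires combining the submodular closure of min-value sets with the LP-duality characterization of Lemma \ref{lem:charac}; once the inclusion $N^*\subseteq P\subseteq N^{**}$ is in hand, the union/intersection identities plug directly into $\epsilon$-modularity and the tight bound is essentially forced.
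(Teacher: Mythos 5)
Your proof is correct, but it takes a genuinely different route from the paper's. The paper normalizes $f(\emptyset)=0$ and invokes a structural fact about submodular functions (nonemptiness of the base polytope, phrased in terms of the XOS class): there is an additive $\ell$ with $\ell(S)\le f(S)$ for all $S$ and $\ell(U)=f(U)$. A single application of $\epsilon$-modularity to the pair $(S,\bar S)$ then shows $f(S)\le\ell(S)+\epsilon$, so $f$ is $\epsilon/2$-close to $\ell+\epsilon/2$. Your argument instead normalizes to the stronger ``closest linear is zero'' form, observes that submodularity forces the family of minimum-value sets to be a lattice (closed under $\cup,\cap$) with bottom $N^*$ and top $N^{**}$, and then uses the LP-duality characterization (Lemma~\ref{lem:charac}) to produce a maximum-value set $P$ with $N^*\subseteq P\subseteq N^{**}$; one application of $\epsilon$-modularity to $(P,\;N^*\cup(N^{**}\setminus P))$ then gives $3M+f(T)\ge -\epsilon$, hence $M\le\epsilon/2$. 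Both are correct and tight. The paper's proof is shorter and self-contained once one accepts the subgradient/base-polytope fact; yours is longer but stays entirely inside the paper's own toolkit (Lemma~\ref{lem:charac}) and makes nice use of a lattice-closure property of optimizers that is special to submodular functions and is not needed in the paper's argument. One small remark: you should note (as you essentially do implicitly) that when $M=0$ the claim is trivial, and that Lemma~\ref{lem:charac} guarantees that both $M$ and $-M$ are attained so that $P^+,P^-$ and the set $P$ exist.
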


\begin{proof}
	Without loss of generality we normalize $f$ such that $f(\emptyset)=0$.
	Since $f$ is submodular and normalized, it belongs to the class of XOS functions, and so there exists an additive set function $g$ such that for every set $S$, $f(S)\ge g(S)$, and for the set $S=U$, $f(U)=g(U)$  (see, e.g., \cite{Fei09}). 
	It remains to show that for every $S$, $f(S) \le g(S)+\epsilon$. Assume for contradiction that $f(S)>g(S) +\epsilon$. Then $f(S)+f(\overline{S})> g(S)+\epsilon+g(\overline{S}) = g(U)+\epsilon = f(U)+ \epsilon$, where we use that $g(S)+g(\overline{S}) = g(U)$ by additivity. We have shown a contradiction to $\epsilon$-modularity, completing the proof.
\end{proof} 

\subsection{Results from the Literature}
\label{sub:related}

\subsubsection{Upper and Lower Bounds}

\cite{KR} prove that $\Kalweak \le \frac{89}{2}$. This upper bound was subsequently improved to $\Kalweak \le 38.8$ by~\cite{BPR}, who also show that $\Kalton \le 35.8$. Let us provide more details on how the known upper bounds on $\Kalweak$ are achieved.

\begin{definition}[Expander]
	\label{def:expander}
	For $k\in \mathbb{N}_{\ge 0}$ and $\alpha,r,\theta\in \Replus$ such that $\alpha,\theta < 1$ and $r > 2$, we say that a bipartite graph $G_k(V,W;E)$ is an $(\alpha, r,\theta)$-expander if $|V| = 2k$, $|W| = 2\theta k$, $|E| = 2kr$, and every set $S \subset V$ of at most $2k\alpha$ vertices has at least $|S|$ neighbors in $W$ (and hence a perfect matching into $W$).
\end{definition}

We say that $(\alpha, r,\theta)$-expanders exist if there is some $k'\in\mathbb{N}_{\ge 0}$ such that for every integer multiple $k$ of $k'$, there exists an $(\alpha,r,\theta)$-expander $G_k$.

The following theorem (rephrased from~\cite{KR}) is the key to the known upper bounds on $\Kalweak$.

\begin{theorem}[\cite{KR}]
	\label{thm:KR}
	Suppose that for fixed $r$ and $\theta$ and all sufficiently large $k$ there are $(\frac{1}{2},r,\theta)$-expanders $\{G_k\}$. Then an upper bound on the weak Kalton constant is:
	$$\Kalweak \le \frac{7+ 4r - 2\theta}{2(1 - \theta)}.$$
\end{theorem}

\cite{pippenger} shows that $(\frac{1}{2},r,\theta)$-expanders exist with $r=6$ and $\theta = \frac{2}{3}$, if $k$ is sufficiently large ($k\ge 3$). This together with Theorem~\ref{thm:KR} implies that $\Kalweak \le \frac{89}{2}$. In~\cite{BPR} it is shown that $r$ can be reduced to $5.05$, thus leading to the improved bound of $\Kalweak \le 38.8$.

As for lower bounds, \cite{Pawlik} gives a high-level sketch of a construction that shows the following for the weak Kalton constant. His construction implies also a lower bound for the strong Kalton constant, as follows (a detailed proof appears for completeness in Appendix \ref{appx:prelim}).

\begin{theorem}[Lower bound \cite{Pawlik}]
	\label{thm:pawlik}
	Lower bounds on the Kalton constants are $\Kalweak\ge \frac{3}{2}-\Theta(1/n)$ and $\Kalton\ge \frac{3}{4}-\Theta(1/n)$.
\end{theorem}

\subsubsection{A Characterization of Chierichetti et al.~\cite{CDDK}}

For a set function $f$, let $M=\max_{S}\{|f(S)|\}$ be the maximum absolute value of $f$ (also called $f$'s \emph{extreme} value). We say that a set $S$ has value $M$ if $f(S)=M$, and value $-M$ if $f(S)=-M$. Given a distribution $(p_1,\dots,p_\kappa)$ over sets $S_1,\dots,S_\kappa$, the \emph{marginal probability} of item $i$ according to this distribution is the probability that $i$ appears in a set randomly selected according to the distribution, i.e., $\sum_{j\mid i\in S_j}p_j$.

\begin{lemma}[\cite{CDDK}]
	\label{lem:charac}
	The closest linear function to a set function $f$ is the zero function if and only if there exist probability distributions $P^+$ and $P^-$ with rational probabilities over sets with value $M$ and sets with value $-M$, respectively, such that for every item $i$, the marginal probabilities of $i$ according to $P^+$ and $P^-$ are the same.
\end{lemma}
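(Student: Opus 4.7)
The plan is to view the statement as a linear programming duality (Farkas / Gordan theorem-of-the-alternative) fact. Write a linear function as $\ell(S)=c_0+\sum_{i\in S}c_i=(1,\chi_S)\cdot c$, and let $\mathcal{S}^+=\{S:f(S)=M\}$, $\mathcal{S}^-=\{S:f(S)=-M\}$ denote the two families of tight (extreme) sets. I assume $M>0$ (the case $M=0$ makes $f$ itself the zero linear function, where any distributions work trivially).

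For the easy direction, assume $P^+,P^-$ with matching item marginals $P^+_i=P^-_i$ exist. For any linear $\ell$,
\[
\mathbb{E}_{S\sim P^+}[\ell(S)] = c_0 + \sum_i P^+_i\,c_i = c_0 + \sum_i P^-_i\,c_i = \mathbb{E}_{S\sim P^-}[\ell(S)],
\]
so $\mathbb{E}_{P^+}[f(S)-\ell(S)] - \mathbb{E}_{P^-}[f(S)-\ell(S)] = 2M$. Hence $\max_S |f(S)-\ell(S)|\ge M$, with equality attained at $\ell\equiv 0$, so the zero function is a closest linear function to $f$.

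For the hard direction, assume that the zero function is closest. I would first reduce to a strict separation statement: no linear $\ell$ satisfies $\ell(S)>0$ for every $S\in\mathcal{S}^+$ and $\ell(S)<0$ for every $S\in\mathcal{S}^-$. Indeed, on non-tight sets $|f(S)|\le M-\delta$ for some $\delta>0$, so any such $\ell$ scaled by a small $\varepsilon>0$ would strictly decrease $|f(S)-\varepsilon\ell(S)|$ below $M$ on all tight sets while keeping it below $M$ on the rest, contradicting the optimality of zero. Now apply Gordan's theorem to the matrix whose rows are $-(1,\chi_S)$ for $S\in\mathcal{S}^+$ and $(1,\chi_S)$ for $S\in\mathcal{S}^-$: infeasibility of the strict system yields nonnegative weights $\lambda^+_S,\lambda^-_S$, not all zero, whose combination annihilates each column. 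The $c_0$-column gives $\sum_{S\in\mathcal{S}^+}\lambda^+_S=\sum_{S\in\mathcal{S}^-}\lambda^-_S$, and the $c_i$-column gives $\sum_{S\in\mathcal{S}^+,\, i\in S}\lambda^+_S=\sum_{S\in\mathcal{S}^-,\, i\in S}\lambda^-_S$ for every item $i$. The common total is strictly positive, so normalizing yields probability distributions $P^+,P^-$ on $\mathcal{S}^+,\mathcal{S}^-$ with identical item marginals.

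The main obstacle is the rationality clause. This follows from observing that the $\lambda$'s are required to satisfy a linear system whose coefficient matrix and right-hand side have entries in $\{-1,0,1\}$, together with a normalization $\sum \lambda^+_S=1$; since Gordan's theorem guarantees this rational polyhedron is nonempty, it has a rational vertex (a basic feasible solution), giving rational $P^+,P^-$. A minor edge case is when $\mathcal{S}^+$ or $\mathcal{S}^-$ is empty: a pure constant shift $c_0$ then already strictly reduces the distance to $f$, so the zero function is not closest, and the ``only if'' direction holds vacuously.
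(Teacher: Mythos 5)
Your proof is correct (modulo a harmless sign flip in the Gordan matrix: its rows should be $(1,\chi_S)$ for $S\in\mathcal{S}^+$ and $-(1,\chi_S)$ for $S\in\mathcal{S}^-$ to match the strict system you showed infeasible; the two conventions are equivalent under $\ell\mapsto-\ell$, so nothing breaks). The paper does not prove the lemma itself---it cites Lemma~10 of \cite{CDDK}, adding only that rationality follows because the distributions arise as solutions to a linear program---and your Farkas/Gordan duality argument together with the rational-vertex observation is precisely the LP reasoning the paper alludes to.
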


Lemma \ref{lem:charac} appears as Lemma 10 in \cite{CDDK}. The rationality of the probability distributions follows since they are obtained as solutions to a linear program.

\begin{definition}[Positive and Negative Supports]
	\label{def:PS-NS}
	For a set function $f$ whose closest linear set function is the zero function, let $P^+,P^-$ be the distributions guaranteed by Lemma \ref{lem:charac}. Then the \emph{positive support} $\PS=(P_1,\dots,P_\kappa)$ of $f$ is the support of $P^+$ (sets assigned a positive probability by $P^+$), and the {\em negative support} $\NS=(N_1,\dots,N_\nu)$ of $f$ is the support of $P^-$.
\end{definition}

We emphasize that all sets in $\PS$ have value $M$ and all sets in $\NS$ have value $-M$. 

\section{Improved Upper Bounds}
\label{sec:upper}
\subsection{Our Approach}
\label{sub:our-approach}

Our main goal in this section is to provide improved upper bounds for the strong Kalton constant~$\Kalton$, which was previously known to be $\le 35.8$ \cite{BPR}. Along the way we also provide an improved upper bound for the weak Kalton constant~$\Kalweak$, which was previously known to be $\le 38.8$ \cite{BPR}.

Let us first review the known upper bounds on $\Kalweak$ and how they are derived. The basic approach of \cite{KR} is outlined in Theorem~\ref{thm:KR}. There the value of $\Kalweak$ is related to parameters of $(\frac{1}{2},r,\theta)$-expanders. Rearranging the bound from Theorem~\ref{thm:KR}, denoting the denominator by $D$ and the numerator by $N_r + N_1$ where $N_r$ depends on $r$ and $N_1$ does not, it is shown that:
$$
\Kalweak \le \frac{N_r + N_1}{D},
$$
where $D = 1 - \theta$, $N_r = 2r$ and $N_1 = \frac{7}{2} - \theta$. \cite{KR} use a previously known expander construction of \cite{pippenger} to get an upper bound of 44.5.
The improved upper bound of 38.8 of~\cite{BPR} comes from constructions of expanders with a smaller value of~$r$.
The value of $r$ cannot be substantially reduced further (without changing $\theta$), and so the approach of constructing better expanders is unlikely to significantly further reduce $\Kalweak$.

In Section \ref{sub:UB-weak} we improve upon the upper bound of 38.8 on $\Kalweak$ by reducing $N_1$ from  $\frac{7}{2} - \theta$ to $-\frac{1}{2} - \theta$,
giving an upper bound of 26.8 using the expander construction of \cite{BPR}. The key to this improvement is extracting the main idea from the proof of~\cite{KR}, cleaning away redundancies
and using instead Lemma~\ref{lem:canonical} (which establishes the existence of complementary collections with values approximately equal to the function's extreme value or its negation). 
It seems that the value of $N_1$ cannot be substantially reduced further within this framework of ideas,
and hence new ideas appear to be needed if one wishes to obtain significant improvements in the upper bound on $\Kalweak$. In Appendix~\ref{appx:UB-weak} we upper-bound $\Kalweak$ using the minimum between two expressions rather than a bound of the form $(N_r+N_1)/D$, and show that $\Kalweak\le 23.82$.

In Sections \ref{sub:UB-strong} and \ref{appx:UB-strong} we consider $\Kalton$, the strong Kalton constant. This places additional restrictions on $f$ ($\epsilon$-modularity instead of only weak $\epsilon$-modularity). Indeed, these additional restrictions were used in~\cite{BPR} to reduce $N_1$ from $\frac{7}{2} - \theta$ to $\frac{5}{2} - \theta$, achieving an upper bound of $35.8$ in the case of $\Kalton$.

Our approach for improving $\Kalton$ will make more extensive use of $\epsilon$-modularity.
Rather than considering sets from a collection with approximately the extreme value $M$, we shall consider \emph{intersections} of these sets. Using $\epsilon$-modularity we shall be able to show that the function values of intersections are also close to $M$. In fact, using some averaging arguments we shall obtain even stronger estimates on how close these values are to $M$ (a point that is relevant to controlling the value of $N_1$). Thereafter, we will no longer be restricted to using $(\alpha,r,\theta)$-expanders with $\alpha = \frac{1}{2}$. We will be able to use $\alpha = \frac{1}{4}$ instead (for intersections of two sets), or even $\alpha = \frac{1}{8}$ (for intersections of three sets), and so on. The advantage of reducing $\alpha$ is that for smaller values of $\alpha$, expanders with lower values of $r$ and~$\theta$ exist, leading to better upper bounds on $\Kalton$. However, we cannot reduce $\alpha$ to arbitrarily small values because each reduction of $\alpha$ by a factor of two is accompanied by an increase in $N_1$, and one needs to balance between these two factors.

\subsection{Upper Bounds Preliminaries}

We present definitions and preliminary results used to establish our upper bounds. 

\subsubsection{Expanders}
\label{sub:expanders}

As discussed when describing our approach (Section \ref{sub:our-approach}), we utilize the existence of expanders with a range of parameters. Our existence argument (see Appendix \ref{appx:expanders}) uses the probabilistic method as in~\cite{pippenger}, and so results in expanders that are biregular (all vertices on the same side of the bipartite graph have the same degree). More complicated expander constructions that are not biregular may achieve even better parameters, as in \cite{BPR}.

\begin{lemma}
	\label{lem:expander}
	The following families of (biregular) expanders exist:
	\begin{enumerate}
		\item $(\frac{1}{2},5,\frac{5}{7})$-expanders.
		\item $(\frac{3}{10},4,\frac{4}{7})$-expanders.
		\item $(\frac{1}{4},5,\frac{1}{2})$-expanders.
		\item $(\frac{1}{16},4,\frac{4}{15})$-expanders.
		\item $(\frac{1}{64},3,\frac{3}{11})$-expanders.
		\item $(\frac{1}{256},3,\frac{3}{19})$-expanders.
	\end{enumerate} 
\end{lemma}

\subsubsection{Weakly $\epsilon$-modular set functions}

Throughout this section, let $f$ be a weakly $\epsilon$-modular set function whose closest linear set function is the zero function.

\begin{observation}
	\label{obs:empty-plus-full}
	$-\epsilon\le f(\emptyset)+f(U)\le \epsilon$.
\end{observation}

\begin{proof}
	Denote $\delta=f(\emptyset)+f(U)$.
	Let $M$ denote the maximum absolute value of $f$. If $M=0$ the claim follows trivially, otherwise by Lemma \ref{lem:charac} there exist sets $P,N$ with values $M,-M$, respectively. Consider the values of $\bar{P},\bar{N}$. By weak $\epsilon$-modularity and the definition of $M$, we have $-M\le f(\bar{P})\le -f(P)+\delta+\epsilon=-M+\delta+\epsilon$, and $M+\delta-\epsilon=-f(N)+\delta-\epsilon\le f(\bar{N})\le M$. We conclude that $0\le \delta+\epsilon$ and $\delta-\epsilon\le 0$, completing the proof.
\end{proof}

The following is a direct corollary of Observation \ref{obs:empty-plus-full} and weak $\epsilon$-modularity:

\begin{corollary}[Value of complement set]
	\label{cor:comp-val}
	Let $\delta=f(\emptyset)+f(U)$. Then for every set $S$,
$-f(S)-\epsilon+\delta \le f(\bar{S})\le -f(S)+\epsilon+\delta$.
\end{corollary}

Let $M$ be the maximum absolute value of $f$. 

\begin{definition}
A set $S$ has {\em deficit} $d\ge 0$ if $f(S) = M - d$, and has {\em surplus} $s\ge 0$ if $f(S) = -M + s$. 
A collection has \emph{average} deficit $d$ (resp.,~surplus $s$) if the expected deficit (resp.~surplus) of its sets with respect to the uniform distribution is $d$ (resp.~$s$). 
\end{definition}

The next observation follows directly from Corollary \ref{cor:comp-val}.

\begin{observation}[Average deficit/surplus of complement collection]
	\label{cor:complement-deficit}
	Let $\delta=f(\emptyset)+f(U)$.
	Let $\mathcal{G}$ be a collection with average deficit $d$ and surplus $s$. Then the average surplus of its complement $\bar{\mathcal{G}}$ is at most $d+\epsilon+\delta$, and the average deficit of $\bar{\mathcal{G}}$ is at most $s+\epsilon-\delta$.
\end{observation}

\begin{definition}
	An item is \emph{$\alpha$-frequent} in a collection if it appears in exactly an $\alpha$-fraction of the sets. A collection has $\alpha$-frequent items if every item is $\alpha$-frequent in it.
\end{definition}

\begin{observation}
	\label{obs:freq-in-complement}
	The complement of a collection with $\alpha$-frequent items has $(1-\alpha)$-frequent items.
\end{observation}

\begin{lemma}[Complementary collections]
	\label{lem:canonical}
	There exists $k'\in \mathbb{N}_{\ge 0}$ such that for every $k$ which is an integer multiple of $k'$, $f$ has a collection $\PS^*$
	of $2k$ sets (the same set might appear multiple times in the collection and we treat these appearances as distinct)
	with $1/2$-frequent items and average deficit $d$, whose complement collection $\NS^*$ 
	has $1/2$-frequent items and average surplus $s$, and $d+s\le \epsilon$.
\end{lemma}

\begin{proof}
	Consider the positive and negative supports $\PS,\NS$ of $f$, where $\PS=\{P_1,\dots,P_\kappa\}$ and $\NS=\{N_1,\dots,N_{\nu}\}$, as defined in Definition \ref{def:PS-NS}. By Lemma \ref{lem:charac}, there exist distributions $P^+=(p^+_1,\dots,p^+_\kappa)$ and $P^-=(p^-_1,\dots,p^-_\nu)$ with rational probabilities over $\PS$ and $\NS$, whose marginals are equal for all items. Consider the complement collections $\overline{\PS}$ and $\overline{\NS}$; let $d'$ be the average deficit of $\overline{\NS}$ and let $s'$ be the average surplus of $\overline{\PS}$. Since the average deficit of $\PS$ and the average surplus of $\NS$ are 0, then by Corollary \ref{cor:complement-deficit}, $d'\le \epsilon-\delta$ and $s'\le \epsilon+\delta$ and so $d'+s'\le 2\epsilon$.
	
	Towards constructing the collections $\PS^*$ and $\NS^*$, consider the collections $\PS\cup\overline{\NS}$ and $\overline{\PS}\cup\NS$ of $\kappa+\nu$ sets each. 
	We define a distribution $Q$ over $\kappa+\nu$ sets, which can be associated with both $\PS\cup\overline{\NS}$ and $\overline{\PS}\cup\NS$, to be
	$Q=(\frac{1}{2}p^+_1,\dots,\frac{1}{2}p^+_\kappa, \frac{1}{2}p^-_1,\dots,\frac{1}{2}p^-_\nu)$. This distribution has the following properties:
	\begin{itemize}
		\item First, if sets are randomly drawn from $\PS\cup\overline{\NS}$ according to this distribution, the probability of selecting a set with value $M$ is at least $1/2$, since the total weight on sets in $\PS$ is exactly $1/2$. Similarly, the probability of selecting a set from $\overline{\PS}\cup\NS$ with value $-M$ when sampling according to $Q$ is at least $1/2$.
		\item Second, for $\PS\cup\overline{\NS}$ and for every item $i$, the probability of selecting a set with item $i$ (i.e., the marginal of $i$) is exactly $1/2$, since it is equal to $\sum_{j\mid i\in P_j}\frac{1}{2}p^+_j + \sum_{j\mid i\in \bar{N}_j}\frac{1}{2}p^-_j = \sum_{j\mid i\in P_j}\frac{1}{2}p^+_j + \frac{1}{2}- \sum_{j\mid i\notin \bar{N}_j}\frac{1}{2}p^-_j$, and from the equality of the marginals of $P^+,P^-$ we have that	
		$\sum_{j\mid i\in P_j} p^+_j=\sum_{j\mid i\in N_j} p^-_j=\sum_{j\mid i\notin \bar{N}_j} p^-_j$. The same holds for item marginals when the distribution is taken over $\overline{\PS}\cup\NS$.
		
		\item Third, the probabilities of the distribution $Q$ are all rational and strictly positive.
	\end{itemize}
	
	By the third property, we can duplicate sets in $\PS\cup\overline{\NS}$ and in $\overline{\PS}\cup\NS$ to construct the collections $\PS^*$ and $\NS^*$, such that sampling a set uniformly at random from $\PS^*$ is equivalent to sampling a set according to $Q$ from $\PS\cup\overline{\NS}$, and similarly for $\NS^*$ and $\overline{\PS}\cup\NS$.
	By the first property, the average deficit of $\PS^*$ is $d\le d'/2$ and the average surplus of $\NS^*$ is $s\le s'/2$, and $d+s\le \epsilon$.
	By the second property, every item must appear in exactly half the sets in $\PS^*$ and half the sets in $\NS^*$, meaning that the number of sets in $\PS^*$ and $\NS^*$ is even, and we can denote it by $2k'$ for some integer $k'>0$. We have thus shown the existence of a collection $\PS^*$ and its complement $\NS^*$ with $k'$ sets each whose average surplus and deficit guarantee $d+s\le \epsilon$. Observe that existence of such collections of size $k$ holds for every integer multiple $k=ck'$, since taking $c$ copies of $\PS^*$ and $c$ copies of $\NS^*$ satisfies all the conditions of the lemma.
\end{proof}

\subsection{Main Lemmas for Upper Bounds}

We present the two key lemmas used to establish our upper bounds.
We begin with Lemma \ref{lem:KR}, which is a simplified version of Lemma~3.1 of~\cite{KR}.

\begin{lemma}[Using expanders for set recombination]
\label{lem:KR}
Let $k\in \mathbb{N}_{\ge 0}$ and $\alpha,r,\theta\in \Replus$ be such that there exists an $(\alpha,r,\theta)$-expander $G_k$. Consider a collection $\mathcal{G}$ of $2k$ sets with $\alpha$-frequent items (referred to as the \emph{source} sets). Then $\mathcal{G}$ has a refined partition into a total of $2kr$ subsets (referred to as the \emph{intermediate} subsets), which can be recombined by \emph{disjoint} unions into a collection of $2k\theta$ sets with  $\alpha/\theta$-frequent items (referred to as the \emph{target} sets).
\end{lemma}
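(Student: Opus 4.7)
The plan is to identify the bipartite expander $G_k=(V,W;E)$ with the combinatorial data the lemma asks for: $V$ (of size $2k$) will index the source sets in $\mathcal{G}$, $W$ (of size $2\theta k$) will index the target sets to be built, and the $2kr$ edges will index the intermediate subsets. The entire task then reduces to choosing, for each item $i$, how to route its $2k\alpha$ appearances across the edges of $G_k$ so that no two of them land on the same $w\in W$; once that is done, each edge $e=(v,w)$ collects the items routed through it into an intermediate subset, and these intermediate subsets can be grouped by $v$ (to partition source set $v$) or by $w$ (to form target set $w$).

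The key step is where the expander hypothesis is used, via Hall's theorem. For each item $i$, let $V_i\subseteq V$ denote the source sets containing $i$; by $\alpha$-frequency $|V_i|=2k\alpha$. The $(\alpha,r,\theta)$-expansion hypothesis says every $S\subseteq V$ with $|S|\le 2k\alpha$ has $|N(S)|\ge|S|$, so Hall's condition holds on every subset of $V_i$. Therefore there is a system of distinct representatives, i.e., an injection $\mu_i\colon V_i\to W$ with $\mu_i(v)$ adjacent to $v$ in $G_k$ for every $v\in V_i$.

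Given these matchings, I would define, for each edge $e=(v,w)\in E$, the intermediate subset $A_e:=\{\,i:v\in V_i\text{ and }\mu_i(v)=w\,\}$, and then the target set $B_w:=\bigcup_{e\ni w}A_e$. Three verifications remain, all routine. First, the edges incident to a fixed $v\in V$ induce a partition of source set $v$: every item $i$ lying in source set $v$ satisfies $v\in V_i$ and is therefore routed to the unique neighbor $\mu_i(v)$ of $v$, so it belongs to exactly one $A_e$ among edges incident to $v$. Second, the union defining $B_w$ is disjoint: if $i\in A_{(v_1,w)}\cap A_{(v_2,w)}$, then $\mu_i(v_1)=\mu_i(v_2)=w$, forcing $v_1=v_2$ by injectivity of $\mu_i$. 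Third, item $i$ appears in $B_w$ precisely when $w\in\mu_i(V_i)$, hence in exactly $|V_i|=2k\alpha$ target sets out of the $2\theta k$ available, yielding frequency $\alpha/\theta$ as required.

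The one genuinely nontrivial ingredient is the application of Hall's theorem, and the expansion parameters are tailored exactly to make it work: expansion on sets of size up to $2k\alpha$ is what certifies Hall's condition for every $V_i$. The rest is bookkeeping and poses no obstacle, so I expect no real difficulty beyond cleanly stating the correspondence between source/intermediate/target objects and the $V$/$E$/$W$ of the expander.
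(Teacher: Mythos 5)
Your proof is correct and takes essentially the same approach as the paper: both identify $V$ with source sets, $W$ with target sets, and $E$ with intermediate subsets, and both construct, for each item $i$, a matching of $V_i$ into $W$ using the expansion property (the paper invokes this directly as part of the expander definition; you spell it out via Hall's theorem, which is the same content). Your edge-labeling rule $A_e$ and the verification steps (partition at each $v$, disjointness at each $w$, and the $\alpha/\theta$ frequency count) match the paper's argument exactly.
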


\begin{proof}
Let $G_k=G_k(V,W;E)$.
Align the $2k$ source sets with the $2k$ vertices of $V$ in the $(\alpha,r,\theta)$-expander $G_k$.
Because every item appears in $2k\alpha$ sets,
then for every item~$i$ there are $2k\alpha$ vertices in $V$ corresponding to $i$ (i.e., aligned with the source sets that contain $i$). By the expansion property of $G_k$, for every item~$i$ there exists a perfect matching $M_i$ between the vertices in $V$ that correspond to $i$, and some $2k\alpha$ vertices in $W$.

We now use these matchings to label the edges: For every item $i$, add $i$ to the labels of the matched edges in $M_i$. Every edge in $E$ is now labeled by a subset of items (some labels may be the empty set). Let these labels be the intermediate subsets. Their total number is $|E| = 2kr$, as desired. For every vertex $v\in V$, the items of the source set $S_v$ corresponding to $v$ are partitioned among the edges leaving $v$, and hence the intermediate subsets indeed reflect a refined partitioning of the target sets.

Observe that the edges entering a vertex $w \in W$ are labeled by disjoint intermediate subsets (since for every item $i$, the edges labeled by subsets containing $i$ form a matching). Let the set $S_w$ corresponding to $w$ be the disjoint union of the subsets labeling the edges adjacent to $w$. The sets corresponding to the vertices in $W$ can thus be the target sets.

Notice that by construction, every item $i$ appears in the same number of source and target sets, so if the source sets have $\alpha$-frequent items, the target sets have $\alpha/\theta$-frequent items, completing the proof.
\end{proof}

Let $f$ be a weakly $\epsilon$-modular set function whose closest linear set function is the zero function. Let $M$ be the absolute highest value of $f$. The following lemma upper-bounds $M$; a more nuanced version appears as Lemma \ref{lem:UB-main-lemma-full} in Appendix \ref{appx:UB-weak}.

\begin{lemma}
	\label{lem:UB-main-lemma}
	Let $k\in \mathbb{N}_{\ge 0}$ and $\alpha,r,\theta\in \Replus$ be such that there exists an $(\alpha,r,\theta)$-expander $G_k$.
	Let $\mathcal{G}$ and $\mathcal{G}'$ be collections of $2k$ sets each, both with $\alpha$-frequent items, such that the average deficit of $\mathcal{G}$ is at most $d$ and the average surplus of $\mathcal{G}'$ is at most $s$. Then
	\begin{equation*}
	M\le \frac{\frac{1}{2}(d+s) + 2\epsilon(r - 1)}{1-\theta} + \epsilon.
	\end{equation*}
\end{lemma}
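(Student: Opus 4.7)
The plan is to apply Lemma \ref{lem:KR} separately to $\mathcal{G}$ and to $\mathcal{G}'$, producing two target collections of $2k\theta$ sets each, and then use iterated weak modularity (Observation \ref{obs:iterative}) to pass back and forth between source values, intermediate-subset values, and target values. The bound $M \le M$ on the values of the target sets then converts a bound on the source-set sum into an inequality on $M$. Two such inequalities (one from $\mathcal{G}$, one from $\mathcal{G}'$) will contain an $f(\emptyset)$ term of opposite signs, and averaging them will kill that term and produce the desired bound.

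In more detail, for $\mathcal{G}$ let $\{S_v\}_{v\in V}$ be the source sets, $\{I_e\}_{e\in E}$ the intermediate subsets, and $\{T_w\}_{w\in W}$ the target sets, so that $S_v=\biguplus_{e\ni v} I_e$ and $T_w=\biguplus_{e\ni w} I_e$ are disjoint unions of $\deg(v)$ and $\deg(w)$ parts, respectively. Applying the lower bound from Observation \ref{obs:iterative} to each source partition, and the upper bound to each target recomposition, and summing using $\sum_v(\deg(v)-1)=|E|-|V|=2kr-2k$ and $\sum_w(\deg(w)-1)=|E|-|W|=2kr-2k\theta$, I obtain
\begin{equation*}
\sum_{w\in W} f(T_w) \ge \sum_{v\in V} f(S_v) - 2k(1-\theta)\,f(\emptyset) - 2k(2r-1-\theta)\epsilon.
\end{equation*}
Substituting $\sum_v f(S_v)\ge 2kM-2kd$ (average deficit of $\mathcal{G}$) and $\sum_w f(T_w)\le 2k\theta M$ (each target set has value at most $M$), and rearranging by $2k(1-\theta)$, gives the asymmetric bound
\begin{equation*}
M \;\le\; \frac{d}{1-\theta} \,+\, f(\emptyset) \,+\, \frac{(2r-1-\theta)\epsilon}{1-\theta}.
\end{equation*}

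A perfectly symmetric argument applied to $\mathcal{G}'$ (using the upper bound of Observation \ref{obs:iterative} for source sets and the lower bound for target sets, combined with $\sum_v f(S'_v)\le -2kM+2ks$ and $\sum_w f(T'_w)\ge -2k\theta M$) produces the mirror inequality
\begin{equation*}
M \;\le\; \frac{s}{1-\theta} \,-\, f(\emptyset) \,+\, \frac{(2r-1-\theta)\epsilon}{1-\theta}.
\end{equation*}
Averaging the two cancels the $f(\emptyset)$ terms, leaving $M \le \frac{\frac{1}{2}(d+s) + (2r-1-\theta)\epsilon}{1-\theta}$, which, since $(2r-1-\theta) = 2(r-1) + (1-\theta)$, equals $\frac{\frac{1}{2}(d+s)+2(r-1)\epsilon}{1-\theta}+\epsilon$, as claimed. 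The main obstacle is the careful bookkeeping of error terms from iterated modularity — in particular, making sure the partitions invoked really are disjoint unions (which is exactly what Lemma \ref{lem:KR} guarantees via the matching structure of the expander) and that the $f(\emptyset)$ contributions from the two halves of the argument indeed cancel rather than reinforce. Note that regularity of the expander is not needed, since only the aggregate sums $|E|-|V|$ and $|E|-|W|$ enter the bound.
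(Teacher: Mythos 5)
Your proof is correct and takes essentially the same route as the paper's: apply Lemma \ref{lem:KR}, use Observation \ref{obs:iterative} with the lower bound on the source side and the upper bound on the target side, sum over all vertices using $\sum_v \deg(v) = \sum_w \deg(w) = |E|$, and average the two resulting inequalities (one from $\mathcal{G}$, one from $\mathcal{G}'$) to cancel $f(\emptyset)$. The paper phrases the intermediate step as sandwiching $\sum_{(v,w)\in E} f(S_{v,w})$ between two bounds rather than directly comparing $\sum_w f(T_w)$ to $\sum_v f(S_v)$, but the arithmetic is identical; your closing remark that only the aggregate degree sums (not biregularity) are used is also a fair observation.
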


\begin{proof}
	We first apply Lemma~\ref{lem:KR} to partition and disjointly recombine the sets of collection $\mathcal{G}$ using the expander $G_k=G_k(V,W;E)$. We use the following notation: For a vertex $v\in V$ (resp., $w\in W$) let $S_v$ (resp., $S_w$) be the source (resp., target) set corresponding to $v$ (resp., $w$). Denote the neighboring vertices of a vertex $v$ by $N(v)$ and its degree by $\deg(v)$. Let $S_{v,w}$ be the intermediate subset that labels (corresponds to) edge $(v,w)\in E$.
	
	By Lemma \ref{lem:KR}, for every $v\in V$, the intermediate subsets labeling the edges adjacent to $v$ are disjoint, and the same holds for every $w\in W$. We can thus apply Observation \ref{obs:iterative} to get
	\begin{eqnarray*}
		&\sum_{w \in N(v)} f(S_{v,w}) \ge f(S_v) + (\deg(v) - 1)(f(\emptyset) - \epsilon)&\text{~~~}\forall v\in V.\\
		&\sum_{v \in N(w)} f(S_{v,w}) \le f(S_w) + (\deg(w) - 1)(f(\emptyset) + \epsilon)&\text{~~~}\forall w\in W.
	\end{eqnarray*}
	Denote the maximum absolute value of $f$ by $M$. Since collection $\mathcal{G}$ has average deficit of at most $d$, by summing over vertices $v\in V$ (where $|V|=2k$) we get $\sum_{v\in V}f(S_v)\ge 2k(M-d)$. Since the target sets of $\mathcal{G}$ have value at most $M$, by summing over vertices $w\in W$ (where $|W|=2\theta k$) we get $\sum_{w\in W}f(S_w)\le 2\theta kM$. Clearly in the bipartite graph $G_k$, $\sum_{v\in V}\deg(v)=\sum_{w\in W}\deg(w)$, and by the parameters of $G_k$ both are equal to $2kr$. Therefore, summing over $v \in V$ and $w\in W$ we get
	\begin{eqnarray*}
	2kM - 2kd + (2kr - 2k)(f(\emptyset) - \epsilon)
	\le 
	\sum_{(v,w)\in E} f(S_{v,w}) \\
	\le 2\theta kM + (2kr - 2\theta k)(f(\emptyset) + \epsilon).
	\end{eqnarray*}
	Dividing the resulting inequality by $2k$ and rearranging gives
	\begin{eqnarray}
	(1 - \theta)M
	&\le&
	d + (r - 1)(\epsilon - f(\emptyset)) + (r - \theta)(\epsilon + f(\emptyset))\nonumber\\
	&=& d + 2\epsilon(r - 1) + (1-\theta)(\epsilon+f(\emptyset)).
    \label{eq:positiveM}
	\end{eqnarray}
	
	Similarly, using that the average deficit of collection $\mathcal{G}'$ is at most $s$ and the value of its target sets is at most $M$, $\sum_{v\in V}f(S_v)\le 2k(-M+s)$, and $\sum_{w\in W}f(S_w)\ge -2\theta kM$. Therefore
	\begin{eqnarray*}
	- 2kM + 2ks + (2kr - 2k)(f(\emptyset) + \epsilon)
	\ge \sum_{(v,w)\in E} f(S_{v,w})\\
	\ge -2\theta kM + (2kr - 2\theta k)(f(\emptyset) - \epsilon).
	\end{eqnarray*}
	Dividing the resulting inequality by $2k$ and rearranging gives
	\begin{eqnarray}
	(1 - \theta)M
	&\le&
	s + (r - 1)(\epsilon + f(\emptyset)) + (r - \theta)(\epsilon - f(\emptyset))\nonumber\\
	&=& s + 2\epsilon(r - 1) + (1-\theta)(\epsilon-f(\emptyset)).
	\label{eq:negativeM}
	\end{eqnarray}
	
	Rearranging Inequalities \eqref{eq:positiveM} and \eqref{eq:negativeM} as well as averaging the resulting inequalities implies the theorem.
\end{proof}

\subsection{Upper-Bounding the Weak Kalton Constant $\Kalweak$}
\label{sub:UB-weak}

To upper bound $\Kalweak$ in this section and in Appendix \ref{appx:UB-weak}, we may focus without loss of generality on a weakly 1-modular set function~$f$, whose closest linear set function is the zero function (Corollary \ref{cor:kalton-zero-close}).
We show the following upper bound:

\begin{lemma}[Upper bound on $\Kalweak$]
	\label{lem:K-first}
	Suppose that for fixed $r,\theta\in \Replus$
	there exist $(\frac{1}{2},r,\theta)$-expanders.
	Then the weak Kalton constant satisfies:
	$\Kalweak \le
	\frac{ 2r - \frac{1}{2} - \theta} {1-\theta}.$
\end{lemma}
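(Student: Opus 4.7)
The plan is to combine Lemma \ref{lem:canonical} (complementary collections with bounded total deficit/surplus and $1/2$-frequent items) with Lemma \ref{lem:UB-main-lemma} (the expander-based bound on the extreme value $M$) and to read off the constant algebraically.

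First, I would reduce to the normalized setting. By scale invariance in $\epsilon$ we may assume $\epsilon = 1$, and by Corollary \ref{cor:kalton-zero-close} we may assume that $f$ is a weakly $1$-modular set function whose closest linear function is the zero function. Writing $M$ for the extreme value of $f$, this function is then tightly $M$-linear, so it suffices to establish the numerical bound
\[ M \;\le\; \frac{2r - \tfrac{1}{2} - \theta}{1-\theta}. \]

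Next, I would align the parameters so that both lemmas apply to the same $k$. Lemma \ref{lem:canonical} requires $k$ to be an integer multiple of some fixed $k'$, while the hypothesis that $(\tfrac{1}{2}, r, \theta)$-expanders exist (in the sense of the paragraph after Definition \ref{def:expander}) provides such a graph $G_k$ whenever $k$ is an integer multiple of some fixed $k''$. Taking $k$ to be any sufficiently large integer multiple of $\mathrm{lcm}(k', k'')$ satisfies both conditions simultaneously. For such a $k$, Lemma \ref{lem:canonical} furnishes complementary collections $\mathcal{G} = \PS^*$ and $\mathcal{G}' = \NS^*$ of $2k$ sets each, both with $\tfrac{1}{2}$-frequent items, such that the average deficit $d$ of $\mathcal{G}$ and the average surplus $s$ of $\mathcal{G}'$ satisfy $d + s \le \epsilon = 1$.

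Finally, I would apply Lemma \ref{lem:UB-main-lemma} to these two collections with $\alpha = \tfrac{1}{2}$ and the expander $G_k$. Substituting $d + s \le 1$ and $\epsilon = 1$ yields
\[ M \;\le\; \frac{\tfrac{1}{2}(d+s) + 2\epsilon(r-1)}{1-\theta} + \epsilon \;\le\; \frac{\tfrac{1}{2} + 2(r-1)}{1-\theta} + 1 \;=\; \frac{2r - \tfrac{3}{2}}{1-\theta} + 1 \;=\; \frac{2r - \tfrac{1}{2} - \theta}{1-\theta}, \]
which is exactly the claimed bound. The proof amounts to a clean specialization of the general machinery already built: the only mild point to check is the alignment of the two moduli $k'$ and $k''$, which is routine, and there is no genuine obstacle since the frequency $\tfrac{1}{2}$ produced by Lemma \ref{lem:canonical} matches precisely the frequency parameter $\alpha = \tfrac{1}{2}$ of the expanders hypothesized.
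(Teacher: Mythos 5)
Your proof is correct and follows essentially the same route as the paper's: normalize to $\epsilon = 1$ with zero closest linear function, invoke Lemma~\ref{lem:canonical} to get the complementary $\tfrac{1}{2}$-frequent collections with $d+s\le 1$, choose $k$ in the intersection of the two arithmetic progressions (you use $\mathrm{lcm}(k',k'')$, the paper uses $k'k''$ — both work), and plug into Lemma~\ref{lem:UB-main-lemma}. The algebra $\frac{\tfrac{1}{2}+2(r-1)}{1-\theta}+1 = \frac{2r-\tfrac{1}{2}-\theta}{1-\theta}$ matches exactly.
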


\begin{proof}
	By Lemma~\ref{lem:canonical}, there exists $k'$ such that for every $k$ that is a member of the arithmetic progression $k', 2k', 3k', \ldots$, the function $f$ has collections $\PS^*$ and $\NS^*$ of $2k$ sets with $\frac{1}{2}$-frequent items, and whose average deficit $d$ and average surplus $s$, respectively, satisfy $d+s\leq1$.
The assumption that $(\frac{1}{2},r,\theta)$-expanders exist implies that there is another arithmetic progression $k", 2k", 3k", \ldots$ such that for every $k$ in this sequence an expander $G_k$ with the above parameters exist. As the two arithmetic progressions must meet at $k'k"$, there is a common $k$ in both progressions.
	The upper bound follows from applying Lemma \ref{lem:UB-main-lemma} to collections $\PS^*,\NS^*$ to get
	$\Kalweak \le \frac{\frac{1}{2}+2r-2}{1-\theta} + 1$.
\end{proof}

\begin{theorem}
	The weak Kalton constant satisfies $\Kalweak \le 26.8$.
\end{theorem}

\begin{proof}
	In Lemma \ref{lem:K-first}, we can get $K_w\le 26.8$ if we substitute $r=5.05$ and $\theta=\frac{2}{3}$ by using the expanders from~\cite{BPR}.%
	\footnote{Using the expanders from~\cite{pippenger} (with $r=6$ instead of $r=5.05$) would result in $K_w\le 32.5$.}
\end{proof}

Using additional ideas, we further improve our upper bound on the weak Kalton constant by providing a stronger version of Lemma \ref{lem:UB-main-lemma}.
Using the new ideas we show that the weak Kalton constant satisfies $\Kalweak \le 23.811$, thus proving the upper bound of Theorem~\ref{thm:UpperLowerWeak}.
The full proof is deferred to Appendix~\ref{appx:UB-weak}.

\subsection{Upper-Bounding the Strong Kalton Constant $\Kalton$}
\label{sub:UB-strong}

As in the previous section, we focus here without loss of generality on a 1-modular set function~$f$ whose closest linear set function is the zero function, and on its collections $\PS^*$ and $\NS^*$ as defined in Lemma \ref{lem:canonical}. Recall that $d$ and $s$ are the average deficit and surplus of $\PS^*$ and $\NS^*$, respectively, and $d+s \leq 1$.
Denote the average deficit of an intersection of $\ell$ sets in $\PS^*$ by $d_\ell$ and the average surplus of an intersection of $\ell$ sets in $\NS^*$ by $s_\ell$. 

\begin{lemma}
	\label{lem:avg-of-intersect}
	For even $\ell$, $d_\ell+s_\ell \le \frac{5\ell}{2} -2$. For odd $\ell$, $d_\ell+s_\ell \le \frac{5(\ell-1)}{2}+1$.
\end{lemma}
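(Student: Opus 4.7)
The plan is to prove both bounds simultaneously by induction on $\ell$, using the set-theoretic decomposition $I_\ell = I_{\ell_1}\cap I_{\ell_2}$ for any split $\ell=\ell_1+\ell_2$ with $\ell_1,\ell_2\ge 1$. After normalizing to $\epsilon=1$ (so that $d+s\le 1$ from Lemma~\ref{lem:canonical}), I would first derive a general splitting recursion and then sharpen the base case $\ell=2$.

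For the general splitting step, let $(A_1,\ldots,A_\ell)$ be iid uniform from $\PS^*$ and partition it as $(A_1,\ldots,A_{\ell_1})$ and $(A_{\ell_1+1},\ldots,A_\ell)$, so that $I_\ell = I_{\ell_1}\cap I_{\ell_2}$. Applying $1$-modularity gives
\begin{equation*}
f(I_{\ell_1}\cap I_{\ell_2}) \ge f(I_{\ell_1})+f(I_{\ell_2})-f(I_{\ell_1}\cup I_{\ell_2})-1.
\end{equation*}
Taking expectations, the marginals give $E[f(I_{\ell_j})]=M-d_{\ell_j}$, and the trivial bound $f\le M$ gives $E[f(I_{\ell_1}\cup I_{\ell_2})]\le M$. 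Rearranging yields $d_\ell \le d_{\ell_1}+d_{\ell_2}+1$, and the analogous argument on $\NS^*$ yields $s_\ell \le s_{\ell_1}+s_{\ell_2}+1$, hence
\begin{equation*}
d_\ell+s_\ell \le (d_{\ell_1}+s_{\ell_1})+(d_{\ell_2}+s_{\ell_2})+2.
\end{equation*}

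The crux is the base case $\ell=2$, where the crude union bound alone would only give $d_2+s_2\le 2(d+s)+2\le 4$, one short of what is needed. The fix is to exploit the complementary structure: for $A_1,A_2\in\PS^*$, one has $A_1\cup A_2 = \overline{\bar{A}_1\cap \bar{A}_2}$, where $\bar{A}_1,\bar{A}_2\in\NS^*$. Applying Corollary~\ref{cor:comp-val} with $\delta=f(\emptyset)+f(U)$ gives
\begin{equation*}
E[f(A_1\cup A_2)] \le -E[f(\bar{A}_1\cap \bar{A}_2)]+1+\delta = M - s_2 + 1 + \delta.
\end{equation*}
Plugging this tighter bound into modularity yields $d_2 \le 2d - s_2 + 2 + \delta$, and the symmetric $\NS^*$-side argument yields $s_2\le 2s - d_2 + 2 - \delta$. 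Adding the two inequalities and dividing by $2$ gives $d_2+s_2\le (d+s)+2\le 3$ as required; the base $\ell=1$ is immediate from Lemma~\ref{lem:canonical}.

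The rest is straightforward bookkeeping. For even $\ell\ge 4$, apply the recursion with $\ell_1=2$: by induction $d_{\ell-2}+s_{\ell-2}\le \tfrac{5(\ell-2)}{2}-2$, so $d_\ell+s_\ell \le 3 + \tfrac{5(\ell-2)}{2}-2 + 2 = \tfrac{5\ell}{2}-2$. For odd $\ell\ge 3$, apply the recursion with $\ell_1=1$: by induction $d_{\ell-1}+s_{\ell-1}\le \tfrac{5(\ell-1)}{2}-2$, so $d_\ell+s_\ell \le 1 + \tfrac{5(\ell-1)}{2}-2 + 2 = \tfrac{5(\ell-1)}{2}+1$. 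The main obstacle is entirely confined to the $\ell=2$ step: one must couple the $d_2$ and $s_2$ estimates via the complement structure to recover the single unit of slack that the subsequent induction then propagates verbatim.
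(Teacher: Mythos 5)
Your proof is correct and is essentially identical to the paper's: it establishes the sharpened base case $d_2+s_2\le 3$ by combining $1$-modularity with Corollary~\ref{cor:comp-val} on the complement pair $\PS^*,\NS^*$ to cancel the $d_2$ and $s_2$ terms, then runs the same splitting recursion $d_\ell+s_\ell\le (d_{\ell_1}+s_{\ell_1})+(d_{\ell-\ell_1}+s_{\ell-\ell_1})+2$ with $\ell_1=2$ for even $\ell$ and $\ell_1=1$ for odd $\ell$. The only difference is presentational: you state the recursion as a standalone tool before the base case, while the paper inlines it.
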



\begin{proof}
		We first prove the following claim: $d_2+s_2\le 3$.
		
		To prove the claim, we begin with a simple observation:
		Recall that $\PS^*$ and $\NS^*$ are complements. For every $P_1,P_2\in\PS^*$ whose complements are $N_1,N_2\in\NS^*$ it holds that
		$f(P_1\cup P_2) = f(\overline{\overline{P_1}\cap \overline{P_2}}) = f(\overline{N_1\cap N_2})\le -f(N_1\cap N_2)+\epsilon+\delta$ (Corollary \ref{cor:comp-val}).
		Similarly,
		$f(N_1\cup N_2) = f(\overline{\overline{N_1}\cap \overline{N_2}}) = f(\overline{P_1\cap P_2})\ge -f(P_1\cap P_2)-\epsilon+\delta$.
		By 1-modularity we thus have	
		$f(P_1\cap P_2)\ge f(P_1)+f(P_2)-f(P_1\cup P_2)-1\ge f(P_1)+f(P_2)+f(N_1\cap N_2)-\delta-2$, and $f(N_1\cap N_2)\le f(N_1)+f(N_2)-f(N_1\cup N_2)+1 \le f(N_1) + f(N_2)+f(P_1\cap P_2)-\delta+2$.
		
		We now take the average over $P_1,P_2$. This gives $M-d_2\ge 2M-2d+(-M+s_2)-\delta-2$. Similarly, the average over $N_1,N_2$ gives $-M+s_2\le -2M+2s+(M-d_2)-\delta+2$, where $M$ is the maximum absolute value of $f$. Subtracting the second inequality from the first we get
		$2M-d_2-s_2\ge 4M-2d-2s-2M+d_2+s_2-4$, so $d_2+s_2\le d+s+2\leq 3$, completing the proof of the claim.
	
	We can now prove Lemma \ref{lem:avg-of-intersect} by induction: It holds for $\ell=1$ and $\ell=2$ using the above claim.
	For every $\ell$ sets $P_1,\dots,P_\ell\in\PS^*$, by 1-modularity the value of their intersection satisfies
$f((P_1\cap \ldots\cap P_{\ell_1})\cap (P_{\ell_1+1}\cap \ldots \cap P_\ell)) \ge f(P_1\cap \ldots \cap P_{\ell_1})+f(P_{\ell_1+1}\cap \ldots \cap P_\ell)-M-1$, where $M$ is the maximum absolute value of $f$. Taking the average over the $\ell$ sets we get $M-d_\ell\ge M-d_{\ell_1}-d_{\ell-\ell_1}-1$. Similarly, $-M+s_\ell \le -M +s_{\ell_1}+s_{\ell-\ell_1}+1$. So $d_\ell+s_\ell\le +d_{\ell_1}+s_{\ell-\ell_1}+d_{\ell_1}+s_{\ell-\ell_1}+2$.	
	
	If $\ell$ is even we can take $\ell_1=2$ and using the induction hypothesis get $d_\ell+s_\ell\le 3+\frac{5(\ell-2)}{2} -2+2=\frac{5\ell}{2} -2$.	
	If $\ell$ is odd we can take $\ell_1=1$ and using the induction hypothesis get $d_\ell+s_\ell\le 1+ \frac{5(\ell-1)}{2} -2 +2=\frac{5(\ell-1)}{2}+1$.	
	This completes the proof.
\end{proof}

\begin{observation}
	\label{obs:item-appearances}
	Every item is contained in $1/2^\ell$ of the intersections of $\ell$ sets in $\PS^*$ or $\NS^*$.
\end{observation}

Observation \ref{obs:item-appearances} allows us to use the main lemma for upper-bounding the Kalton constants (Lemma~\ref{lem:KR}) with $\alpha=1/2^\ell$.
Together with Lemma \ref{lem:avg-of-intersect} that establishes the average deficit and surplus, this enables us to obtain the following result.

\begin{theorem}
\label{thm:K'}
Suppose that for fixed $r,\theta\in \Replus$ there exist $(\alpha,r,\theta)$-expanders. Then the strong Kalton constant satisfies $\Kalton \le \frac{2r + N_2 - \theta}{1 - \theta}$, where if $\alpha = \frac{1}{2}$ then $N_2=-\frac{1}{2}$; if $\alpha = \frac{1}{4}$ then $N_2=\frac{1}{2}$; if $\alpha = \frac{1}{8}$ then $N_2=2$; if $\alpha = \frac{1}{16}$ then $N_2=3$; and if $\alpha = \frac{1}{32}$ then $N_2=4.5$.
\end{theorem}

\begin{proof}
For every $\ell$, by taking the intersections of $\ell$ sets in $\PS^*$ and $\NS^*$ we get collections $\mathcal{G}$ and $\mathcal{G}'$, both with $\frac{1}{2^\ell}$-frequent items (Observation \ref{obs:item-appearances}), whose average deficit and surplus are $d_\ell$ and $s_\ell$, respectively.
Moreover, there exists some $k$ such that $\mathcal{G}$ and $\mathcal{G}'$ have $2k$ sets each, and an expander $G_k$ with the above parameters exists.
We now apply Lemma \ref{lem:UB-main-lemma}, which gives an upper bound of
$\Kalton \le \frac{\frac{1}{2}(d_\ell+s_\ell)+ 2r - 2 }{1-\theta} + 1.$
By Lemma~\ref{lem:avg-of-intersect}, $d_\ell+s_\ell$ is at most $3, 6, 8, 11$ for $\ell=2, 3, 4, 5$, respectively. This implies the theorem.
\end{proof}

To use Theorem~\ref{thm:K'} one needs to substitute in parameters of expanders.
Theorem~\ref{thm:K'} combined with the $(\frac{1}{16},4,\frac{4}{15})$-expanders from Lemma~\ref{lem:expander} implies
that the value of the strong Kalton constant satisfies $\Kalton \le \frac{2\cdot 4+3-0.26}{1-0.26}=14.637$.

\subsection{Strengthening the Upper Bound on $\Kalton$}
\label{appx:UB-strong}

Using our techniques, the upper bound on $\Kalton$ can be improved even further.
The sources of these improvements are twofold.
First, all of our results are derived using bi-regular expanders; better bounds can be obtained using more sophisticated expanders.
Second,
improvements can be obtained by using additional properties of $\epsilon$-modular functions to improve the bounds in Lemma~\ref{lem:avg-of-intersect}.
Improvements of the second type are demonstrated in this section, where the main result is showing that $\Kalton < 12.65$ (establishing Theorem~\ref{thm:upper15}).

\begin{definition}
	Given $0 \le \alpha \le \frac{1}{2}$ and $\mu \ge 0$, an {\em $(\alpha,\mu)$-collection-pair} is a pair of collections $D,S$, such that in both collections items are $\alpha$-frequent, the average value of sets in $D$ is at least $M - d$, the average value of sets in $S$ is at most $-M+s$, and $\frac{1}{2}(d+s)=\mu$.
\end{definition}

For a given $\mu \ge 0$, let $\alpha[\mu]$ denote the smallest $0 \le \alpha \le \frac{1}{2}$ such that there is some $\mu' \le \mu$ for which an $(\alpha,\mu')$-collection-pair exists. By Lemma~\ref{lem:avg-of-intersect}, we may assume for our given 1-modular function $f$ that $\alpha[\frac{1}{2}] \le \frac{1}{2}$, $\alpha[\frac{3}{2}] \le \frac{1}{4}$, and $\alpha[4] \le \frac{1}{16}$.

Fix some $\frac{3}{2} < \delta < 4$ whose value will be optimized later. (Intuitively, we are aiming at $\delta$ satisfying $\alpha[\delta] \simeq \frac{1}{8}$.) Now we consider two cases, each addressed in its own lemma.

\begin{lemma}
	\label{lem:case1alpha}
	Suppose that $\alpha[\delta] \le \frac{\alpha[\frac{3}{2}]}{2}$. Then $\alpha[2\delta - \frac{1}{2}] \le \frac{1}{64}$.
\end{lemma}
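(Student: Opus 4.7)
The plan is to apply the within-pair matched pairwise intersection construction to the $(\alpha[\delta], \delta)$-collection-pair provided by the hypothesis. This operation squares the frequency parameter (giving $\le (\alpha[\tfrac{3}{2}]/2)^2 \le \tfrac{1}{64}$) and, by a complement-trick argument analogous to the proof of $d_2 + s_2 \le 3$ inside Lemma~\ref{lem:avg-of-intersect}, increases $\mu$ by at most~$1$. The conclusion then follows by monotonicity of $\alpha[\cdot]$ and the assumption $\delta > \tfrac{3}{2}$.

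In more detail, I will first invoke the hypothesis together with $\alpha[\tfrac{3}{2}] \le \tfrac{1}{4}$ to pick an $(\alpha, \mu')$-collection-pair $(D, S)$ with $\alpha \le \tfrac{1}{8}$ and $\mu' \le \delta$, and assume (consistently with how all prior $\alpha[\mu]$-bounds in the paper are established) that it inherits a common-indexing complementary structure from $\PS^*, \NS^*$: each positive set in $D$ is assembled from some subset of the $P_i$'s, while the matched negative set in $S$ is assembled from the corresponding $N_i$'s. I will then form $(D', S')$ by matched pairwise intersections, $D' = \{A \cap B : A, B \in D\}$ and $S' = \{A' \cap B' : A', B' \in S\}$; item frequencies square to $\alpha^2 \le \tfrac{1}{64}$. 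To control $\mu_{\mathrm{new}} = (d' + s')/2$, I will combine $1$-modularity $f(A \cap B) \ge f(A) + f(B) - f(A \cup B) - 1$ with the complement bound $f(A \cup B) \le -f(A' \cap B') + 1 + (f(\emptyset) + f(U))$ from Corollary~\ref{cor:comp-val} and its dual, then average over matched index pairs and add the two resulting inequalities to obtain $d' + s' \le d + s + 2$, hence $\mu_{\mathrm{new}} \le \mu' + 1 \le \delta + 1$. Since $\delta > \tfrac{3}{2}$ implies $\delta + 1 \le 2\delta - \tfrac{1}{2}$, monotonicity of $\alpha[\cdot]$ (which is nonincreasing by definition) yields $\alpha[2\delta - \tfrac{1}{2}] \le \alpha[\delta + 1] \le \tfrac{1}{64}$.

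The main obstacle will be the complement-trick step: the bound $d' + s' \le d + s + 2$ really does exploit matched-index complementarity between $D$ and $S$, which is natural for pairs descended from $\PS^*, \NS^*$ but is not automatic from the abstract definition of an $(\alpha, \mu)$-collection-pair. I therefore expect the argument to rest either on verifying that pairs inherit this structure throughout the paper's constructions, or on showing that any optimizer for $\alpha[\mu]$ can be replaced without loss by one with this structure.
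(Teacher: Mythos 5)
Your calculation inside the complement trick is fine as arithmetic, but the premise it relies on is false in exactly the situation where the lemma is invoked, and your own caveat at the end correctly identifies the soft spot without resolving it. The complement trick from the $d_2 + s_2 \le 3$ claim in Lemma~\ref{lem:avg-of-intersect} needs, for a pair $A \in D$ matched with $A' \in S$, that $A' = \bar{A}$; only then does de Morgan give $A \cup B = \overline{A' \cap B'}$ and allow Corollary~\ref{cor:comp-val} to convert $f(A\cup B)$ into $-f(A'\cap B')$ up to $1 + \delta$. That holds for $(\PS^*,\NS^*)$ because $\NS^*$ is literally the complement collection of $\PS^*$. It already fails one level up: if $A = P_{i_1}\cap P_{i_2}$ and the matched negative set is $A' = N_{i_1}\cap N_{i_2}$ (the ``corresponding $N_i$'s assembled the same way''), then $\bar{A} = N_{i_1}\cup N_{i_2} \ne A'$, so $A \cup B \ne \overline{A'\cap B'}$. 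The collection-pairs realizing $\alpha[\tfrac{3}{2}]$ and $\alpha[\delta]$ are built from $\ell$-fold intersections and therefore never have the matched-complement structure; neither of your two escape routes (``pairs inherit this structure,'' or ``replace the optimizer by one that has it'') is available. So the step $d' + s' \le d + s + 2$ is unjustified.

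The paper's proof avoids the complement trick entirely. From the $(\alpha[\delta],\mu')$-pair $(D,S)$ it forms \emph{both} the pairwise-intersection pair $(D_\cap, S_\cap)$ and the pairwise-union pair $(D_\cup, S_\cup)$. One observes $\alpha_\cap = (\alpha[\delta])^2 \le \tfrac{1}{64}$ and $\alpha_\cup < 2\alpha[\delta] \le \alpha[\tfrac{3}{2}]$. The second inequality, combined with the minimality in the definition of $\alpha[\tfrac{3}{2}]$, forces $\mu_\cup \ge \tfrac{3}{2}$: if $\mu_\cup$ were smaller, $(D_\cup,S_\cup)$ would witness $\alpha[\tfrac{3}{2}] \le \alpha_\cup < \alpha[\tfrac{3}{2}]$, a contradiction. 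Plain $1$-modularity (no complements needed) gives $2d' - d_\cup + 1 \ge d_\cap$ and $2s' - s_\cup + 1 \ge s_\cap$, hence $\mu_\cap \le 2\mu' - \mu_\cup + 1 \le 2\delta - \tfrac{3}{2} + 1 = 2\delta - \tfrac{1}{2}$. This is the genuinely new idea you are missing: a \emph{lower} bound on the union-pair's $\mu$ obtained for free from the definition of $\alpha[\cdot]$, traded against the intersection-pair's $\mu$ through the modularity identity. Your route would, if it worked, actually give the stronger $\alpha[\delta+1] \le \tfrac{1}{64}$; the fact that the paper settles for $2\delta - \tfrac{1}{2}$ is itself a hint that something weaker than the complement trick is in play.
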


\begin{proof}
	Consider the $\mu' \le \delta$ for which an $(\alpha[\delta],\mu')$-collection-pair $(D,S)$ with deficit $d'$ and surplus $s'$ exists, and let $m$ denote the number of sets in $D$ and in $S$ (sets appearing more than once are counted more than once). Consider now the two collection-pairs $(D_\cap,S_\cap)$ and $(D_\cup,S_\cup)$ obtained by taking all $m^2$ pairwise intersections or unions (respectively) of sets from $D,S$. (A pair is generated by picking one set and then another set, with repetitions.) Let $\alpha_{\cap}$ and $\alpha_{\cup}$ be the item frequencies associated with these two collection-pairs, respectively. Let $d_\cup,d_\cap$ be the deficits of $D_\cup, D_\cap$, respectively, and let $s_\cup,s_\cap$ be the surpluses of $S_\cup, S_\cap$, respectively. Let $\mu_\cup$ (resp., $\mu_\cap$) be the average of $d_\cup,s_\cup$ (resp., $d_\cap,s_\cap$). Then:
	\begin{itemize}
		
		\item $\alpha_{\cap} =  (\alpha[\delta])^2 \le  \left(\frac{\alpha[\frac{3}{2}]}{2}\right)^2 \le \frac{1}{64}$;
		
		\item $\alpha_{\cup} < 2\alpha[\delta] \le \alpha[\frac{3}{2}]$.
		
	\end{itemize}
	It follows from the second bullet that $\mu_\cup\ge \frac{3}{2}$. The 1-modularity condition implies that $2d'-d_\cup+1\ge d_\cap$ and $2s'-s_\cup+1\ge s_\cap$, and so by averaging $2\mu'-\mu_\cup+1\ge \mu_\cap$. Substitute $\mu_\cup \geq 3/2$ and $\mu' \leq \delta$ to get $2\delta-\frac{1}{2} \ge \mu_\cap$. By the first bullet, it follows that $\alpha[2\delta - \frac{1}{2}] \le \frac{1}{64}$, as required.
\end{proof}

\begin{lemma}
	Suppose that $\alpha[\delta] > \frac{\alpha[\frac{3}{2}]}{2}$. Then $\alpha[9 - \delta] \le \frac{1}{256}$.
\end{lemma}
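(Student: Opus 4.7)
The plan is to mirror the structure of the proof of Lemma~\ref{lem:case1alpha}, but starting from an $(\alpha[4],\mu')$-collection-pair rather than from an $(\alpha[\delta],\mu')$-collection-pair. The intuition is that in the present case $\alpha[\delta]$ is no longer small enough to drive $(\alpha[\delta])^2$ below $1/256$, because the case hypothesis $\alpha[\delta]>\alpha[\tfrac{3}{2}]/2$ bounds $\alpha[\delta]$ from below rather than above. Self-intersecting an $(\alpha[4],\cdot)$-collection-pair, on the other hand, squares $\alpha[4]\le 1/16$ directly down to at most $1/256$, and we use the case hypothesis instead to control the \emph{union} side.

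Concretely, I would let $(D,S)$ be an $(\alpha[4],\mu')$-collection-pair with $\mu'\le 4$, whose existence follows from the bound $\alpha[4]\le 1/16$. As in the proof of Lemma~\ref{lem:case1alpha}, take all $m^2$ pairwise intersections and unions to form collection-pairs $(D_\cap,S_\cap)$ and $(D_\cup,S_\cup)$, with respective deficits/surpluses $d_\cap,s_\cap$ and $d_\cup,s_\cup$, averages $\mu_\cap,\mu_\cup$, and item frequencies $\alpha_\cap,\alpha_\cup$. Direct counting yields $\alpha_\cap=(\alpha[4])^2\le 1/256$ and $\alpha_\cup<2\alpha[4]\le 1/8$. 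The $1$-modularity inequality $f(A\cap B)+f(A\cup B)\ge f(A)+f(B)-1$, averaged over ordered pairs exactly as in Lemma~\ref{lem:case1alpha}, gives $2\mu'-\mu_\cup+1\ge \mu_\cap$, hence $\mu_\cap\le 9-\mu_\cup$.

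The remaining step is to show $\mu_\cup\ge \delta$, after which we conclude $\mu_\cap\le 9-\delta$ and therefore $(D_\cap,S_\cap)$ witnesses $\alpha[9-\delta]\le 1/256$. By definition of $\alpha[\delta]$, any collection-pair with item frequency strictly less than $\alpha[\delta]$ must have its $\mu$-parameter strictly greater than $\delta$; so it suffices to verify $\alpha_\cup<\alpha[\delta]$. The case hypothesis $\alpha[\delta]>\alpha[\tfrac{3}{2}]/2$ combined with the global bound $\alpha[\tfrac{3}{2}]\le 1/4$, used in the tight regime where the construction bounds are effectively saturated (parallel to how Lemma~\ref{lem:case1alpha} pairs $2\alpha[\delta]\le\alpha[\tfrac{3}{2}]$ with $\alpha[\tfrac{3}{2}]\le 1/4$ to chain down to $1/64$), places $\alpha[\delta]$ strictly above $1/8\ge\alpha_\cup$. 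This yields $\alpha_\cup<\alpha[\delta]$, hence $\mu_\cup>\delta$, and in particular $\mu_\cup\ge\delta$, completing the argument.

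The main obstacle is precisely the chaining step $\alpha_\cup<\alpha[\delta]$. Naively pairing $\alpha_\cup<2\alpha[4]\le 1/8$ with $\alpha[\delta]>\alpha[\tfrac{3}{2}]/2$ only guarantees the comparison when $\alpha[\tfrac{3}{2}]$ sits at its upper bound $1/4$; if $\alpha[\tfrac{3}{2}]$ is strictly smaller, one needs a second look to see that the construction-based bounds on $\alpha[\tfrac{3}{2}]$ and $\alpha[4]$ degrade together, so that the relevant ratio $\alpha[4]/\alpha[\tfrac{3}{2}]$ inherits the $1/4$ comparison needed to combine with the case hypothesis. Once that comparison is secured, the rest is routine: the $1$-modularity bookkeeping and the use of the defining property of $\alpha[\cdot]$ exactly mirror Lemma~\ref{lem:case1alpha}, with the pair $(\delta,\tfrac{3}{2})$ replaced throughout by the pair $(9-\delta,\delta)$ and the base parameter $\alpha[\delta]$ replaced by $\alpha[4]$.
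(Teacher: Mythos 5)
There is a genuine gap, and it sits exactly where you identified the obstacle: the chaining step $\alpha_\cup < \alpha[\delta]$. Your explicit claim that the case hypothesis together with $\alpha[\tfrac{3}{2}]\le \tfrac{1}{4}$ ``places $\alpha[\delta]$ strictly above $1/8\ge\alpha_\cup$'' is false. The hypothesis $\alpha[\delta]>\tfrac{\alpha[3/2]}{2}$ gives a \emph{lower} bound on $\alpha[\delta]$ in terms of the \emph{unknown} value of $\alpha[\tfrac{3}{2}]$, which could be much smaller than its upper bound $\tfrac{1}{4}$ (say $\alpha[\tfrac{3}{2}]=\tfrac{1}{10}$); in that case the hypothesis only gives $\alpha[\delta]>\tfrac{1}{20}$, which is not above $\tfrac{1}{8}$. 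Your closing sentence acknowledges that one ``needs a second look'' to see that $\alpha[4]/\alpha[\tfrac{3}{2}]$ inherits a $\tfrac{1}{4}$ comparison, but you never supply that argument, and the phrase ``the construction-based bounds degrade together'' is a hope, not a proof.

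The repair exists and is close to what you want: self-intersecting the $(\alpha[\tfrac{3}{2}],\le\tfrac{3}{2})$-collection-pair produces an $((\alpha[\tfrac{3}{2}])^2,\le 4)$-collection-pair, so in fact $\alpha[4]\le(\alpha[\tfrac{3}{2}])^2\le\tfrac{1}{4}\,\alpha[\tfrac{3}{2}]$, whence $2\alpha[4]\le\tfrac{1}{2}\,\alpha[\tfrac{3}{2}]<\alpha[\delta]$. That one inequality closes your gap. But notice that once you write it down, you are effectively doing what the paper does: the paper never introduces $\alpha[4]$ here at all; it starts directly from the $(\alpha[\tfrac{3}{2}],\le\tfrac{3}{2})$ pair, self-intersects once to get $(\tilde D,\tilde S)$ with frequency $(\alpha[\tfrac{3}{2}])^2$ and $\tilde\mu\le 4$, and then self-intersects and self-unions $(\tilde D,\tilde S)$. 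Keeping the explicit $(\alpha[\tfrac{3}{2}])^2$ rather than re-encapsulating it as $\alpha[4]$ is precisely what makes the chain $2(\alpha[\tfrac{3}{2}])^2\le\tfrac{\alpha[3/2]}{2}<\alpha[\delta]$ immediate. Your formulation discards the functional form and retains only the numerical upper bound $\tfrac{1}{16}$, which is not enough to compare against the lower bound $\tfrac{\alpha[3/2]}{2}$; you must reintroduce the squared dependence on $\alpha[\tfrac{3}{2}]$ to finish.
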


\begin{proof}
	Consider the $\mu' \le \frac{3}{2}$ for which an $(\alpha[\frac{3}{2}],\mu')$-collection-pair $(D,S)$ with deficit $d'$ and surplus $s'$ exists. For the collection-pair $(D_\cap,S_\cap)$ (that we shall denote by $(\tilde D,\tilde S)$) the item frequencies are $\left(\alpha[\frac{3}{2}]\right)^2$, and the average of the deficit and surplus $\tilde{\mu}$ is at most $2\mu' + 1 \le 4$ (using 1-modularity). Consider now $({\tilde D}_\cap,{\tilde S}_\cap)$ and $({\tilde D}_\cup,{\tilde S}_\cup)$. Let ${\tilde\alpha}_{\cap}$ and ${\tilde\alpha}_{\cup}$ be the item frequencies associated with these two collections respectively. Then
	\begin{itemize}
		
		\item ${\tilde\alpha}_{\cap} =  \left(\alpha[\frac{3}{2}]\right)^4 \le \frac{1}{256}$,
		
		\item ${\tilde\alpha}_{\cup} < 2\left(\alpha[\frac{3}{2}]\right)^2 < \alpha[\delta]$,
	\end{itemize}
	where the last inequality uses the fact that  $\alpha[\frac{3}{2}] \le \frac{1}{4}$ and the premise of the lemma.
	
	Similarly to the proof of Lemma~\ref{lem:case1alpha}, it follows from the second bullet that $\tilde{\mu}_\cup\ge \delta$. The 1-modularity condition then implies that $9 - \delta \ge 2\cdot 4 - \delta + 1 \ge 2\tilde{\mu} - \tilde{\mu}_\cup + 1 \ge \tilde{\mu}_\cap$. This completes the proof of the lemma.
\end{proof}

\begin{corollary}
	\label{cor:improve}
	For every $\frac{3}{2} < \delta < 4$, every 1-modular function $f$ as above has an $(\alpha,\mu)$-collection-pair with either $\alpha \le \frac{1}{64}$ and $\mu \le 2\delta - \frac{1}{2}$, or $\alpha \le \frac{1}{256}$ and $\mu \le 9 - \delta$.
\end{corollary}

\paragraph{Improved upper bound no.~1: $\Kalton < 13.25$}
We now apply Lemma \ref{lem:UB-main-lemma}, which gives an upper bound of
$\Kalton \le \frac{\mu+ 2r - 2 }{1-\theta} + 1$ when there are $(\alpha,r,\theta)$-expanders. By Corollary \ref{cor:improve} we get
$$
\Kalton \le \max\left\{\frac{2\delta + 2r_1 - 2.5}{1 - \theta_1}, \frac{-\delta + 2r_2+7}{1 - \theta_2}\right\} + 1,
$$
where $r_1,\theta_1$ are the best expander parameters for $\alpha = \frac{1}{64}$,  and  $r_2,\theta_2$ are the best expander parameters for $\alpha = \frac{1}{256}$.
Plugging in $\delta=43/16$, and using Lemma \ref{lem:expander} we get that $\Kalton\le 13.2461$.

\paragraph{Improved upper bound no.~2: $\Kalton < 12.65$}
The $13.25$ bound can be further improved by using Lemma \ref{lem:UB-main-lemma-full} (which is an improved version of Lemma \ref{lem:UB-main-lemma}).
To make use of Lemma \ref{lem:UB-main-lemma-full} we need $d'+s'$ (the lower bounds on the average deficit and surplus of the target sets) to be sufficiently large.
We therefore distinguish between two cases as shown next.

Suppose $f$ has an $(\alpha,\mu)$-collection-pair with $\alpha\leq \frac{1}{64}$ and $\mu\leq 2\delta -1/2$ (the first case in Corollary \ref{cor:improve}), and consider an $(\alpha,3,\frac{3}{11})$-expander (which exists by Lemma \ref{lem:expander}). Consider the target sets of the collection pair guaranteed by Lemma \ref{lem:KR}; their frequency is $\frac{\alpha}{3/11} \leq \frac{1/64}{3/11} < 1/16$. If $d'+s' \leq 5.08$, then we apply Lemma \ref{lem:UB-main-lemma} with the target sets above serving as the source sets, and using the existence of $(\frac{1}{16},4,\frac{4}{15})$-expanders (Lemma \ref{lem:expander}), to get $\Kalton < 12.65$ (by substituting $d'+s' \leq 5.08$, $r=4$, $\theta=4/15$).
Similarly, if $f$ has an $(\alpha,\mu)$-collection-pair with $\alpha\leq \frac{1}{256}$ and $\mu\leq 9-\delta$ (the second case in Corollary \ref{cor:improve}), repeat an analogous analysis using an  $(\alpha,3,\frac{3}{19})$-expander (which exists by Lemma \ref{lem:expander}) to get $\Kalton < 12.65$, as above (this is valid since the frequency of the target sets is $\frac{\alpha}{3/19} \leq \frac{1/256}{3/19} < 1/16$).

So now we apply Lemma \ref{lem:UB-main-lemma-full} in the case where $d'+s' > 5.08$.
By Corollary \ref{cor:improve}, we get that for every $\frac{3}{2} < \delta < 4$ it holds that $\Kalton$ is upper-bounded by
$$
\max\left\{\frac{2\delta-\frac{1}{2} - \frac{5.08 \theta_1}{2} + 2(r_1 - 1)}{1 - \theta_1}, \frac{9-\delta - \frac{5.08 \theta_2}{2} + 2(r_2-1)}{1 - \theta_2}\right\} + 1,
$$
where $r_1,\theta_1$ are the best expander parameters for $\alpha = \frac{1}{64}$,  and  $r_2,\theta_2$ are the best expander parameters for $\alpha = \frac{1}{256}$.
Plugging in $\delta=43/16$, and using Lemma \ref{lem:expander} we get that $\Kalton \le 12.622$.
Together, we get that $\Kalton < 12.65$.

\paragraph{Remark} 
The bounds shown in this section illustrate the techniques we use.
Clearly, these techniques can be extended to give even better bounds by, e.g., establishing better expanders and applying our ideas recursively. We save these extensions for future work.

\section{Improved Lower Bounds}
\label{sec:lower}
In this section we prove the lower bound stated in Theorem \ref{thm:lower1} on the strong Kalton constant; i.e., we prove that $\Kalton \geq 1$. The lower bound stated in Theorem \ref{thm:UpperLowerWeak} on $\Kalweak$ is proved in Appendix \ref{sub:LB-weak}.

We begin by explicitly constructing a set function $f$ that establishes the lower bound. The idea underlying $f$'s construction is to achieve symmetry properties which facilitate its analysis; this idea is developed in Section \ref{sub:k-M-symm}, and the analysis appears in Section \ref{sub:claims}.

\begin{proof}[Proof of Theorem \ref{thm:lower1}]
	The following set function with $n=70$ is 2-modular and tightly 2-linear, thus showing $\Kalton\ge 1$.
	Let $M=2$. The positive support $\PS$ and the negative support $\NS$ of $f$ (Definition \ref{def:PS-NS}) are as follows:
	There are $8$ sets in $\PS$, each with 35 out of the 70 items, such that every item appears in exactly $4$ of the $8$ sets (notice that ${8 \choose 4}=70$). (We can fix any such collection of sets as $\PS$ without loss of generality.) The negative support $\NS$ is the complement collection of $\PS$.
	Notice that $\PS,\NS$ support uniform distributions $P^+,P^-$ with equal item marginals, as required.
	The values of sets under $f$ are determined by the following rules, where the first applicable rule applies (and hence $f$ is well defined):
\begin{enumerate}

\item  Each positive support set $S \in \PS$ has value $f(S) = M = 2$.

\item For every two sets $S_1$ and $S_2$ in $\PS$ and every set $R$, we have $f(S_1 \cup (S_2 \cap R)) = 1$,
and likewise 
$f(S_1 \cap (S_2 \cup R)) = 1$. (In particular, for every two sets $S_1,S_2 \in \PS$, $f(S_1 \cup S_2) =  f(S_1 \cap S_2) = 1$.)

\item 
We impose $-f(S) = f(\bar{S})$ and derive from this sets with negative value.

\item All other sets have value~0.

\end{enumerate}

\begin{claim}
	\label{cla:tight-2-linear}
	$f$ is tightly 2-linear.
\end{claim}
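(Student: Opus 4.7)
The plan is to separately establish $2$-linearity and tightness. For $2$-linearity, I will argue that the zero function is $2$-close to $f$. This reduces to verifying that every set $S$ satisfies $|f(S)|\le 2$, which follows by inspection of the four rules defining $f$: rule~1 assigns values $\pm 2$ (via rule~3) to sets in $\PS\cup\NS$, rule~2 assigns $\pm 1$ to unions/intersections built from $\PS$ (and their complements via rule~3), and rule~4 assigns $0$ to everything else. So the range of $f$ is contained in $\{-2,-1,0,1,2\}$, and the zero function witnesses $2$-linearity.

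For tightness, I will apply the characterization of \cite{CDDK} recorded as Lemma~\ref{lem:charac}: the zero function is the closest linear function to $f$ if and only if there exist rational distributions $P^+$ and $P^-$ on sets of value $M$ and sets of value $-M$, respectively, whose item marginals coincide. Taking $M=2$, the sets of value $+M$ are exactly the $8$ sets of $\PS$ (rules~2--4 produce only values in $\{-1,0,1\}$, and rule~3 can produce $-2$ only from a $\PS$-set), and similarly the sets of value $-M$ are exactly the $8$ sets of $\NS$. I then take $P^+$ and $P^-$ to be the uniform distributions on $\PS$ and $\NS$. By the symmetric construction (each of the $70$ items lies in exactly $4$ of the $8$ sets of $\PS$, since $\binom{8}{4}=70$), every item has marginal $1/2$ under $P^+$, and by the complementarity $\NS=\bar{\PS}$ the same holds under $P^-$. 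Hence Lemma~\ref{lem:charac} applies, so the zero function is a closest linear function to $f$, and the distance between them is $M=2$. This shows that $f$ is not $\Delta'$-linear for any $\Delta'<2$, completing the tightness.

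A small item to verify along the way is that $f$ is well-defined, i.e.\ that the ``first applicable rule applies'' convention is internally consistent. The only potential conflicts are (i) a $\PS$-set being expressible as $S_1\cup(S_2\cap R)$ or $S_1\cap(S_2\cup R)$, and (ii) a $\PS$ or $\NS$ set coinciding with the complement of a rule-2 set. In both cases the rule listed first in the definition takes precedence, so $f$ is unambiguous. I do not expect this bookkeeping to be the hard part; the only genuine content of the proof is recognizing that the $\PS$/$\NS$ marginal-equality condition from Lemma~\ref{lem:charac} is built into the construction, and then reading off $M=2$ from the value inspection. The main potential pitfall is overlooking a set whose value exceeds $2$ in absolute value, so I would explicitly walk through the four rules to confirm the range bound before invoking Lemma~\ref{lem:charac}.
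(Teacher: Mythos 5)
Your argument is essentially the paper's, just unrolled: you apply Lemma~\ref{lem:charac} directly to the uniform distributions on $\PS$ and $\NS$, whereas the paper verifies that $f$ is $(k,M)$-symmetric (Definition~\ref{def:(k,M)symmetric}) and then invokes Proposition~\ref{pro:distM}, whose proof is itself a one-line application of Lemma~\ref{lem:charac} with precisely these distributions. Bypassing the $(k,M)$-symmetry abstraction is a reasonable simplification for this particular $f$; the paper keeps the abstraction because the same machinery is reused for the weak lower bound in Appendix~\ref{sub:LB-weak}.

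There is, however, a real gap at the step where you assert that ``the sets of value $-M$ are exactly the $8$ sets of $\NS$.'' The ``first applicable rule applies'' convention makes $f$ unambiguous, but unambiguity alone does not force value $-2$ onto every set in $\NS$. If some $N\in\NS$ were expressible as $S_1\cup(S_2\cap R)$ (or $S_1\cap(S_2\cup R)$) with $S_1,S_2\in\PS$, rule~2 would fire on $N$ first and assign $f(N)=1$; then $N$ would not have value $-M$ and your uniform $P^-$ would not be a distribution over sets of value $-M$, so Lemma~\ref{lem:charac} would not apply as stated. Your case~(ii) brushes against this scenario, but the resolution you give (precedence fixing the value on the $\PS$-side complement) says nothing about the value on the $\NS$-side. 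Ruling it out does require a short argument: since every $N\in\NS$ has size $n/2=35$, the containment $S_1\subseteq S_1\cup(S_2\cap R)=N$ forces $S_1=N$, hence $N$ and $\bar N$ would both lie in $\PS$; but no two $\PS$-sets are complementary, because with $k=4$ there are items whose balanced $\pm 1$-vectors have $+1$ in any prescribed pair of coordinates. This is precisely the verification the paper flags as ``the main thing to check'' and defers to (the proof of) Claim~\ref{claim:containment}, so it should not be waved off as trivial bookkeeping. With that piece in place, the rest of your proof — the range bound, the matching marginals, and reading off $M=2$ — is correct.
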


\begin{claim}
	\label{cla:2-modular}
	$f$ is 2-modular.
\end{claim}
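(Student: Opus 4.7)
My plan is a structural case analysis on the modular defect $d(A,B):=f(A)+f(B)-f(A\cup B)-f(A\cap B)$. First I would exploit two symmetries to prune the case space: $d$ is symmetric in its arguments, and $d(\bar A,\bar B)=-d(A,B)$ because rule~3 imposes $f(\bar S)=-f(S)$ while complementation swaps $\cup$ with $\cap$. These reductions let me assume $f(A)\ge f(B)$ and $f(A)+f(B)\ge 0$, so $f(A)\in\{0,1,2\}$ and only nine ordered pairs $(f(A),f(B))$ remain.

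Next I would record the combinatorial invariants that the particular choice of $\PS$ satisfies: every $S\in\PS$ has $|S|=35$; distinct $S,S'\in\PS$ have $|S\cap S'|=15$ and $|S\cup S'|=55$; $\PS\cap\NS=\emptyset$; and for any two distinct $i,j\in[8]$, no single $S_k\in\PS$ contains the $15$-element set $\bar S_i\cap\bar S_j$ (equivalently, the intersection of all $4$-subsets of $[8]\setminus\{i,j\}$ is empty). The key structural observation I would then prove is a ``form-propagation'' lemma: whenever $A\in\PS$ and $S_1\subseteq B\subseteq S_1\cup S_2$, the identity $A\cap B=A\cap(S_1\cup(A\cap B))$ exhibits $A\cap B$ in the rule-2 form with $S_1'=A,S_2'=S_1$, so $f(A\cap B)\in\{1,2\}$; three analogous rewritings cover the mirror-image configurations.

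With these ingredients in hand the nine cases are handled by a common template. The extremal case $(f(A),f(B))=(2,2)$ with $A\ne B$ is immediate: the size constraints exclude $A\cup B$ and $A\cap B$ from $\PS$, and rule~2 applied with $(S_a,S_b)=(A,B)$ directly gives $f(A\cup B)=f(A\cap B)=1$, yielding $d=2$. For each remaining case, I would use form-propagation to pin down one of $f(A\cap B),f(A\cup B)$ and then exclude the extreme values of the other using size and intersection arguments. For instance, in case $(2,1)$ form-propagation gives $f(A\cap B)\ge 1$; $f(A\cup B)=-2$ is ruled out because $A\cup B\in\NS$ would force $|A\cup B|=35$ and hence $A\in\PS\cap\NS=\emptyset$; and $f(A\cup B)=-1$ is ruled out because it would require some $S_c\in\PS$ with $\bar S_c\subseteq A\cup B\subseteq A\cup S_1\cup S_2$, which upon intersecting with $\bar S_1\cap\bar S_2$ forces $\bar S_1\cap\bar S_2\subseteq S_c$---exactly what the intersection fact above forbids. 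Entirely parallel reasoning dispatches $(2,0),(2,-1),(2,-2),(1,1),(1,0),(1,-1),(0,0)$.

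The main obstacle will be disciplined bookkeeping: each case requires ruling out one or two ``bad'' sign combinations for $(f(A\cup B),f(A\cap B))$, and while each individual rule-out is elementary, the enumeration is long and easy to misstep. All the rule-outs ultimately reduce to the single combinatorial obstruction that no $S_k$ covers a set of the form $\bar S_i\cap\bar S_j$; together with the size counts for $|S|,|S\cap S'|,|S\cup S'|$, this is the technical core that keeps the case analysis tractable.
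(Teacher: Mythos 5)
Your symmetry reduction and resulting nine-pair enumeration match the paper's framework (the paper packages the same reductions, plus a dual-symmetry reduction $f(\hat S)=f(S)$ that you omit, into a reusable Lemma for $(k,M)$-symmetric functions), and the ``form-propagation'' observation is the paper's trick for cases $f(A)\in\{1,2\}$. But I see two concrete gaps. First, the one combinatorial obstruction you put forward, $\bar S_i\cap\bar S_j\not\subseteq S_k$, is not ``the technical core.'' The paper's workhorse is the containment claim (Claim~\ref{claim:containment}): no positive set is contained in a negative set. That instantly yields $f(A\cap B),f(A\cup B)\ge 0$ whenever $A$ is positive and disposes of most sign rule-outs in one line; its proof needs the \emph{different} fact $S_i\cap S_j\not\subseteq \bar S_k\cup\bar S_l$, which is where $k\ge 4$ is really used. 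Without it, each of your nine cases needs bespoke rule-outs, and the one you sketch is in fact off: from $\bar S_c\subseteq A\cup B\subseteq S_a\cup S_1\cup S_2$ you get $\bar S_a\cap\bar S_1\cap\bar S_2\subseteq S_c$ with \emph{three} complements, not $\bar S_1\cap\bar S_2\subseteq S_c$ as you state; the three-index version still holds but you would need to state and prove it.

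Second, and more seriously, your template ``use form-propagation to pin down one of $f(A\cap B),f(A\cup B)$, then exclude extreme values of the other'' collapses in the case $f(A)=f(B)=0$. Here neither argument is in $\PS$ nor has a rule-2 form, so form-propagation gives nothing, and this is precisely where the paper spends the bulk of its proof. The paper first uses dual symmetry to restrict to the single subcase $A\cup B\in\PS$, then argues by size that $A\cap B$ cannot match rule~2's first form, then invokes the fact that no $\PS$ set contains the intersection of two other $\PS$ sets to force $S_1=A\cup B$ or $S_2=A\cup B$ in a hypothetical rule-2 representation of $A\cap B$, and finally derives from the rule ordering that $f(S)\ge 1$, contradicting $f(S)=0$. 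None of this is anticipated by your plan, so ``entirely parallel reasoning'' does not in fact dispatch $(0,0)$.
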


The proofs of Claims \ref{cla:tight-2-linear} and \ref{cla:2-modular} use the tool of $(k,M)$-symmetric set functions introduced in Section \ref{sub:k-M-symm}. In particular, the proof of the latter claim is based on an analysis of a subset of selected cases, which are proven to be sufficient to establish $\epsilon$-modularity for $f$ due to its symmetry properties (see Lemma \ref{lem:shortcaseanalysis}).
The proofs of the claims appear in Section \ref{sub:claims}, thus completing the proof of Theorem \ref{thm:lower1}. 
\end{proof}

\subsection{$(k, M)$-Symmetric Set Functions}
\label{sub:k-M-symm}

In this section we introduce the class of $(k, M)$-symmetric set functions, which are tightly $M$-linear (Proposition \ref{pro:distM}), while enjoying many symmetries. In general, checking whether a set function over $n$ items is approximately modular involves verifying that roughly $2^{2n}$ modular equations (one equation for every pair of sets) approximately hold. Verifying approximate modularity for $(k, M)$-symmetric set functions becomes an easier task thanks to their symmetries.
We begin by introducing some terminology.

\begin{definition}
	\label{def:generating}
	A collection $\mathcal{G} = \{S_1, S_2, \ldots\ , S_g\}$ of $g$ subsets of $U$ is {\em generating} if the following conditions hold:
	
	\begin{enumerate}
		
		\item It is \emph{covering}, i.e., $\bigcup_{S_j \in \mathcal{G}} S_j = U$  (every item is contained in at least one set).
		
		\item It is \emph{item-differentiating}, i.e., $\bigcap_{S_j \in \mathcal{G} \mid i\in S_j} S_j = \{i\}$ for every item $i$ (equivalently, for every pair of items, there is a set containing one but not the other). Note that this implies in particular that $\bigcap_{S_j \in \mathcal{G}} S_j = \emptyset$ (no item is contained in all sets).
		
	\end{enumerate}
\end{definition}

Observe that if a collection $\mathcal{G}$ is generating, then every subset of $U$ can be generated from sets in $\mathcal{G}$ by a sequence of intersections and unions (possibly, in more than one way).
Also observe that given a generating collection $\mathcal{G} = \{S_1, S_2, \ldots\ , S_g\}$, the \emph{complement collection} $\bar{\mathcal{G}} = \{\bar{S_1}, \bar{S_2}, \ldots\ , \bar{S_g}\}$ (obtained by complementing each of the generating sets) is also generating. This can be shown by applying De Morgan's laws.

\begin{definition}
	\label{def:canonical-collect}
	A generating collection $\mathcal{G}$ is {\em canonical} if the number $g$ of generating sets is even (we denote $g = 2k$ for some positive integer $k$), every item is contained in exactly $k$ sets from $\mathcal{G}$, every set in $\mathcal{G}$ contains exactly $n/2$ items, and $n={2k\choose k}$.
\end{definition}

Given a canonical generating collection $\mathcal{G}$ where $\mathcal{G} = \{S_1, S_2, \ldots\ , S_{2k}\}$, items can be thought of as balanced vectors in $\{\pm 1\}^{2k}$, where coordinate $j$ of item $i$ is $+1$ if $i \in S_j$, and $-1$ if $i \in \bar{S_j}$. Observe also that if $\mathcal{G}$ is a canonical generating collection, then so is its complement~$\bar{\mathcal{G}}$.

A {\em generating circuit} $C$ is a directed acyclic graph (namely, with no directed cycles) with $g$ nodes referred to as {\em input nodes} (these nodes have no incoming edges), one node referred to as the {\em output node} (this node has no outgoing edges), and in which each non-source node has at most two incoming edges. Nodes with two incoming edges are labeled by either a $\cap$ (intersection) or $\cup$ (union) operation, whereas nodes with one incoming edge are labeled by $\bar{\cdot}$ (complementation). Associating the input nodes with the $g$ sets of a generating collection $\mathcal{G}$, the set at each node is computed by applying the respective operation on the incoming sets (either intersection, union, or complementation), and the output of the circuit is the set at the output node.

Given a circuit $C$, the {\em dual circuit} $\hat{C}$ is obtained by replacing $\cap$ by $\cup$ and vice versa. For a given generating collection $\mathcal{G}$ and a permutation that maps $\mathcal{G}$ to the input nodes of circuit $C$, if $S$ is the set output by $C$, we refer to the set output by the dual circuit $\hat{C}$ as the {\em dual} of $S$, and denote in by $\hat{S}$.
It can be shown that given $\mathcal{G}$ and $S$, the dual set $\hat{S}$ is well defined, in the sense that for all circuits $C$ that generate $S$, their duals generate the same $\hat{S}$. (In a canonical generating collection $\mathcal{G}$, every item appears in exactly $k$ generating sets. This induces a perfect matching over items, where two items are matched if there is no generating set in which they both appear, or equivalently, if the vectors in $\{\pm 1\}^{2k}$ representing them are negations of each other. Given a set $S$, its dual can be seen to be the set $\hat{S}$ that contains all items that are {\em not} matched to items in $S$. In particular, every set in $\mathcal{G}$ is the dual of itself.)

From now on we restrict attention to set functions $f$ whose closest linear set function is the zero function.
Recall from Definition \ref{def:PS-NS} that the positive and negative supports $\PS$ and $\NS$ of~$f$ are the collections of sets that $f$ assigns maximum or minimum values to and are in the supports of distributions $P^+$ or $P^-$, respectively (see Lemma~\ref{lem:charac} above).

\begin{definition}
	\label{def:(k,M)symmetric}
	For integers $k \ge 2$ and $M \ge 1$, we say that a set function $f$ over a set $U$ of $n = {2k \choose k}$ items is $(k,M)$-symmetric if it has the following properties:
	\begin{enumerate}
		
		\item {\em Integrality:} $f$ attains only integer values.
		
		\item {\em Antisymmetry:} for every set $S$ and its complement $\bar{S}$ it holds that $f(S) = -f(\bar{S})$.
		
		\item {\em Canonical generating sets:} positive support $\PS$ contains $2k$ sets $P_1, \ldots, P_{2k}$ (of value $M$), negative support $\NS$ contains $2k$ sets $N_1 = \bar{P}_1, \ldots, N_{2k}=\bar{P}_{2k}$ (of value $-M$, these are the complements of the sets in $\PS$), and $\PS$ (and likewise $\NS$) is a canonical generating collection.
		
		\item {\em Generator anonymity:} let $C$ be an arbitrary generating circuit. Then the value (under $f$) of the set output by the circuit is independent of the permutation that determines which of the generating sets from $\PS$ is mapped to which source node. 
		
		\item {\em Dual symmetry:} for every set $S$ and its dual $\hat{S}$ it holds that $f(S) = f(\hat{S})$.
		
	\end{enumerate}
\end{definition}

\begin{proposition}
	\label{pro:distM}
	Every $(k,M)$-symmetric function $f$
	is tightly $M$-linear, and the 0 function is a linear function closest to $f$.
\end{proposition}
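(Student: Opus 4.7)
The plan is to separate the proposition into two claims: (a) the zero function is $M$-close to $f$, so that $f$ is $M$-linear; and (b) every linear function is at distance at least $M$ from $f$, giving both tightness and the fact that the zero function is a closest linear function.

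For (a), I would appeal to property~3 of the definition of $(k,M)$-symmetric, which refers to the positive and negative supports of $f$ as in Definition~\ref{def:PS-NS}. Since these supports are defined only for functions whose closest linear function is the zero function and whose maximum absolute value is $M$, property~3 implicitly encodes that $|f(S)| \le M$ for every set $S$, giving (a). If instead one reads property~3 as merely introducing canonical collections $\PS$ and $\NS$ with values $\pm M$ without invoking Definition~\ref{def:PS-NS}, then (a) becomes the substantive claim and must be derived from the symmetry properties (antisymmetry, generator anonymity, dual symmetry), for instance by averaging $f$ over the orbit of $S$ under the symmetric group acting on the generators and relating these averages to the known values on $\PS$ and $\NS$ via Lemma~\ref{lem:charac}; this is the main obstacle under the strict reading.

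For (b), let $\ell(S) = c_0 + \sum_{i \in S} c_i$ be any linear function. By the canonical generating property, every item lies in exactly $k$ of the $2k$ sets of $\PS$, and likewise of $\NS$, so under the uniform distributions $U_{\PS}$ and $U_{\NS}$ each item has marginal probability $1/2$. Therefore the expectation of $\ell$ is the same under both distributions: $\mathbb{E}_{S\sim U_{\PS}}[\ell(S)] = c_0 + \tfrac{1}{2}\sum_i c_i = \mathbb{E}_{S\sim U_{\NS}}[\ell(S)]$. Since $\mathbb{E}_{S\sim U_{\PS}}[f(S)] = M$ and $\mathbb{E}_{S\sim U_{\NS}}[f(S)] = -M$, subtracting yields
\[ 2M \;=\; \mathbb{E}_{S\sim U_{\PS}}[f(S) - \ell(S)] \;-\; \mathbb{E}_{S\sim U_{\NS}}[f(S) - \ell(S)] \;\le\; 2\max_S |f(S) - \ell(S)|, \]
so $\max_S|f(S) - \ell(S)| \ge M$ for every linear $\ell$. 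Combined with (a), this shows that the zero function achieves the minimum possible distance $M$ to $f$, so $f$ is tightly $M$-linear with the zero function as a closest linear function.
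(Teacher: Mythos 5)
Your proof is correct, and for the lower bound on the distance you take a slightly different route than the paper. The paper's proof observes, as you do, that the canonical generating property forces every item to have marginal $\tfrac12$ under the uniform distribution over $\PS$ and likewise over $\NS$; it then simply invokes Lemma~\ref{lem:charac} (the \cite{CDDK} characterization) to conclude that the zero function is a closest linear function, and finishes by noting that the maximum absolute value of $f$ is $M$. You instead re-derive the relevant half of Lemma~\ref{lem:charac} directly: for any linear $\ell$, the equality of marginals makes $\mathbb{E}_{U_{\PS}}[\ell]=\mathbb{E}_{U_{\NS}}[\ell]$, and comparing with $\mathbb{E}_{U_{\PS}}[f]=M$, $\mathbb{E}_{U_{\NS}}[f]=-M$ gives $\max_S|f(S)-\ell(S)|\ge M$. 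This is a more self-contained and more elementary argument than the citation, and it is a clean way to see why matching marginals on extreme sets are exactly what obstructs getting closer than $M$. The trade-off is just length: the paper's one-line appeal to Lemma~\ref{lem:charac} is quicker but opaque.

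On your worry in part (a): the paper resolves it by fiat. The passage preceding Definition~\ref{def:(k,M)symmetric} restricts attention to $f$ whose closest linear function is the zero function, and $\PS,\NS$ in item~3 are the positive and negative supports of Definition~\ref{def:PS-NS}, so by construction $M$ is the maximum absolute value of $f$ and $|f(S)|\le M$ for all $S$. Thus your ``first reading'' is the intended one, (a) is definitional, and the ``strict reading'' obstacle you describe does not arise. In the paper's proof this appears simply as the clause ``As the maximum value of $f$ is $M$.''
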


\begin{proof}
	Consider item~3 (canonical generating sets) in Definition~\ref{def:(k,M)symmetric}. By the virtue of $\PS$ being a canonical generating set, the uniform distribution over $\PS$  has marginal $1/2$ for every item in $U$, and likewise for $\NS$. Hence Lemma~\ref{lem:charac} implies that the~0 function is a linear function closest to $f$. As the maximum value of $f$ is $M$, it then follows that $f$ is tightly $M$-linear.
\end{proof}

We shall design certain $(k,M)$-symmetric functions and would like to prove that they are $\epsilon$-modular, typically for $\epsilon = 2$. The case analysis involved in checking $\epsilon$-modularity can be reduced to the cases outlined in the following lemma.

\begin{lemma}
	\label{lem:shortcaseanalysis}
	To verify that a $(k,M)$-symmetric function $f$ is $\epsilon$-modular it suffices to verify the approximate modularity equation
	$|f(S) + f(T) - f(S\cap T) - f(S\cup T)| \le \epsilon$
	in the cases where $S$ and $T$ satisfy all of the following conditions: (a) $|f(T)| \le |f(S)|$;  (b) $0 \le f(S) \le M$; and (c) $f(S\cap T) \le f(S\cup T)$.
\end{lemma}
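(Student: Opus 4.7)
The plan is to observe that the approximate modularity expression $E(S,T) := f(S)+f(T)-f(S\cap T)-f(S\cup T)$ admits three natural symmetries whose composition can transform any pair $(S,T)$ into a pair satisfying all of (a)--(c), while preserving $|E|$.

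The three symmetries I would rely on are: (i) swapping $(S,T)\mapsto(T,S)$, under which $E$ is literally unchanged; (ii) complementing both sets $(S,T)\mapsto(\bar S,\bar T)$, where antisymmetry together with the de~Morgan identities $\overline{S\cap T}=\bar S\cup\bar T$ and $\overline{S\cup T}=\bar S\cap\bar T$ yield $E(\bar S,\bar T)=-E(S,T)$; and (iii) the dual map $(S,T)\mapsto(\hat S,\hat T)$. The dual is the crucial one: using the matching interpretation $\hat S=\overline{\pi(S)}$, where $\pi$ is the involution pairing each item with its antipode in $\{\pm 1\}^{2k}$, one verifies that dualization commutes with complementation and \emph{swaps} intersection and union, i.e., $\widehat{S\cap T}=\hat S\cup\hat T$ and $\widehat{S\cup T}=\hat S\cap\hat T$. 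Combined with the involution $\hat{\hat S}=S$ and dual symmetry $f(\hat S)=f(S)$, this yields $f(\hat S\cap\hat T)=f(\widehat{\hat S\cap\hat T})=f(S\cup T)$ and symmetrically $f(\hat S\cup\hat T)=f(S\cap T)$, so that $E(\hat S,\hat T)=E(S,T)$.

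Given these three $|E|$-preserving transformations, I would reduce sequentially. First, swap $S$ and $T$ if necessary so that $|f(T)|\le|f(S)|$, securing (a). Second, if $f(S)<0$, replace the pair with $(\bar S,\bar T)$; by antisymmetry $f(\bar S)=-f(S)\ge 0$, and $|f(\bar S)|=|f(S)|$, $|f(\bar T)|=|f(T)|$, so (a) is preserved and (b) now holds (the bound $f(S)\le M$ is automatic from the definition of $M$ as the maximum absolute value of $f$). Third, if $f(S\cap T)>f(S\cup T)$, replace the pair with $(\hat S,\hat T)$; dual symmetry gives $f(\hat S)=f(S)$ and $f(\hat T)=f(T)$, so (a) and (b) are untouched, while the identities above show that the new $f(\hat S\cap\hat T)$ equals the old $f(S\cup T)$ and the new $f(\hat S\cup\hat T)$ equals the old $f(S\cap T)$, so (c) now holds. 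Since $|E|$ is invariant at each step, verifying the bound $|E(S,T)|\le\epsilon$ on pairs satisfying (a)--(c) verifies it for all pairs.

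The main obstacle I anticipate is the third step: establishing cleanly that the dual is well-defined on arbitrary subsets of $U$ and that it swaps intersection with union as stated. This is exactly where the full $(k,M)$-symmetric structure — in particular the canonical generating collection and the dual symmetry axiom from Definition~\ref{def:(k,M)symmetric} — is used; once these identities are in place, the rest of the argument is a routine bookkeeping check that the three transformations, applied in the stated order, do not undo one another.
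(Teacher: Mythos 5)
Your proposal is correct and takes essentially the same route as the paper: both reduce an arbitrary pair $(S,T)$ by the same three $|E|$-preserving transformations applied in the same order — swap to secure (a), complementation via antisymmetry to secure (b), and dualization via dual symmetry to secure (c), each preserving the earlier conditions. The only difference is that you spell out, via $\hat S=\overline{\pi(S)}$ and De~Morgan, why the dual exchanges $\cap$ with $\cup$, a point the paper asserts more tersely when it writes $f(S\cap T)=f(\hat S\cup\hat T)$.
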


\begin{proof}
	Suppose that we wish to verify the approximate modularity condition $|f(S) + f(T) - f(S \cap T) - f(S \cup T)| \le \epsilon$ for two sets $S$ and $T$. If $S$ and $T$ satisfy the three conditions in the lemma, then indeed the approximate modularity condition will be checked directly. Hence it remains to show that even if $S$ and $T$ do not satisfy some of the conditions of the lemma, there will be two other sets, say $S'$ and $T'$, that do satisfy the conditions, and such that the approximate modularity condition holds for $S'$ and $T'$ if and only if it holds for $S$ and $T$.
	
	Suppose that condition (a) is violated, namely, $|f(T)| \ge |f(S)|$. Then simply interchange $S$ and $T$ and then condition (a) holds.

	Suppose that condition (a) holds and condition (b) is violated, namely, $f(S) < f(T)$. Then consider the complement sets $\bar{S}$ and $\bar{T}$, for which both conditions (a) and (b) hold. By anti symmetry they satisfy $f(S) = -f(\bar{S})$, $f(T) = -f(\bar{T})$, $f(S\cap T) = -f(\bar{S}\cup \bar{T})$, and  $f(S\cup T) = -f(\bar{S}\cap \bar{T})$. Hence the approximate modularity condition for $\bar{S}$ and $\bar{T}$ implies it for $S$ and $T$.
	
	Suppose that conditions (a) and (b) hold, and condition (c) is violated, namely, $f(S\cap T) > f(S\cup T)$.  In this case, consider the dual sets $\hat{S}$ and $\hat{T}$.  By dual symmetry we have that $f(S) = f(\hat{S})$, $f(T) = f(\hat{T})$, $f(S\cap T) = f(\hat{S}\cup \hat{T})$, and  $f(S\cup T) = f(\hat{S}\cap \hat{T})$. Hence the approximate modularity condition for $\hat{S}$ and $\hat{T}$ implies it for $S$ and $T$. Moreover, for $\hat{S}$ and $\hat{T}$ conditions (a), (b) are inherited from $S$ and $T$, and condition (c) does hold.
\end{proof}

In the remainder of the section we use $(k,M)$-symmetric functions to prove Claims \ref{cla:tight-2-linear} and \ref{cla:2-modular}.

\subsection{Proofs of Claims \ref{cla:tight-2-linear} and \ref{cla:2-modular}} 
\label{sub:claims}

\begin{proof}[Proof of Claim \ref{cla:tight-2-linear}]
	It is not hard to verify that the function $f$ defined above is  $(k,M)$-symmetric according to Definition~\ref{def:(k,M)symmetric} for $k=4$ and $M=2$:
	the main thing to check is that there are no two positive sets that are complements of each other (and consequently antisymmetry is enforced by rule~3), and this is implied by the proof of Claim~\ref{claim:containment} below. Proposition~\ref{pro:distM} implies that $f$ is tightly 2-linear. 
\end{proof}

We say that $S$ is a \emph{positive set} if $f(S) > 0$, a {\em negative set} if $f(S) < 0$, and a {\em zero set} if $f(S) = 0$.

\begin{claim}
	\label{claim:containment}
	There is no pair of sets $S$ and $T$ such that one of them is positive and the other is negative and $S \subset T$.
\end{claim}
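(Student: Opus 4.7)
My plan is to reduce the claim to a single combinatorial fact about intersections of $\PS$ sets via two short structural observations.

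First I would establish two dual observations about how positive and negative sets are built. \textbf{Observation 1:} every positive set $S$ contains $P\cap P'$ for some (not necessarily distinct) $P,P'\in\PS$. This is immediate from rules 1 and 2: if $S\in\PS$ take $P=P'=S$; if $S=S_1\cup(S_2\cap R)$ then $S\supseteq S_1=S_1\cap S_1$; and if $S=S_1\cap(S_2\cup R)=(S_1\cap S_2)\cup(S_1\cap R)$ then $S\supseteq S_1\cap S_2$. \textbf{Observation 2:} every negative set $T$ is contained in $N\cup N'$ for some $N,N'\in\NS$. This follows from Observation 1 applied to $\bar T$ (which is positive by rule 3) together with De Morgan: if $\bar T\supseteq P\cap P'$ then $T\subseteq\overline{P\cap P'}=\bar P\cup\bar{P'}$, and $\bar P,\bar{P'}\in\NS$ since $\NS$ is the complement collection of $\PS$.

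Now suppose for contradiction that $S\subsetneq T$ with $S$ positive and $T$ negative. Combining the two observations gives $P_1\cap P_2\subseteq S\subseteq T\subseteq N_3\cup N_4$ for some $P_1,P_2\in\PS$ and $N_3,N_4\in\NS$. Writing $P_3=\bar{N_3}$ and $P_4=\bar{N_4}$ (both in $\PS$), this rearranges to
\[
P_1\cap P_2\cap P_3\cap P_4\;=\;\emptyset.
\]
The remaining step is to show this intersection is in fact nonempty, which contradicts the above. Because $n=\binom{8}{4}=70$, each item lies in exactly $4$ of the $8$ sets in $\PS$, and $\PS$ is item-differentiating, the map $i\mapsto A_i:=\{j:i\in P_j\}$ is a bijection between the $70$ items and the $4$-subsets of $\{1,\dots,8\}$. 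Under this identification, the intersection of any $k'\le 4$ \emph{distinct} $\PS$-sets corresponds to those $4$-subsets containing the $k'$ specified indices, of which there are $\binom{8-k'}{4-k'}\in\{35,15,5,1\}$; in particular this is always strictly positive for $k'\le 4$. Hence $P_1\cap P_2\cap P_3\cap P_4$ is nonempty (even when some of $P_1,\dots,P_4$ coincide, which only decreases $k'$ and \emph{increases} the count), yielding the desired contradiction.

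The only mildly delicate point is Observation 2: one must correctly invoke rule 3 to see that every negative set is literally the complement of a positive set, and then apply De Morgan to transfer the positive-side containment into the negative-side one. The final combinatorial estimate is a single evaluation of $\binom{8-k'}{4-k'}\ge 1$, and the fact that this expression is strictly positive precisely up to $k'=4$ is exactly what makes intersections of four (rather than five) $\PS$-sets suffice for the argument.
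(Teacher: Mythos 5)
Your argument is sound and follows essentially the same route as the paper's proof: the paper reduces to showing no minimal positive set (an intersection $P_1 \cap P_2$ of two $\PS$ sets) is contained in a maximal negative set (a union $N_3 \cup N_4$ of two $\NS$ sets), and then exhibits an item in the former missing from the latter using the $\{\pm 1\}^8$ encoding; your Observations~1 and~2 make the paper's implicit reduction to minimal/maximal sets explicit and rigorous, and your $\binom{8-k'}{4-k'}\ge 1$ count for $k' \le 4$ is the same combinatorial fact, spelled out a bit more carefully (including the degenerate cases where the four $\PS$ sets are not all distinct).

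There is one genuine (if modest) gap: the claim is symmetric and requires ruling out \emph{both} a positive set sitting inside a negative one \emph{and} a negative set sitting inside a positive one, but you only treat the first direction. Your two observations alone do not cover the second case---they give a lower bound $P_1\cap P_2\subseteq S$ for a positive $S$ and an upper bound $T\subseteq N_3\cup N_4$ for a negative $T$, which is exactly what you need for positive-inside-negative, but for negative-inside-positive you would need the dual observations (every negative set contains $N_1\cap N_2$, every positive set is contained in $P_3\cup P_4$). These duals do hold by the same kind of one-line case check, and they lead to $N_1\cap N_2\cap N_3'\cap N_4'=\emptyset$ for four $\NS$ sets, contradicting the same nonempty-intersection fact applied to $\NS$ (which is also canonical). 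Alternatively one can invoke the dual-set symmetry of Definition~\ref{def:(k,M)symmetric}: dualization reverses containment and preserves $f$-values, so it carries the second case to the first. The paper handles this by simply stating ``The opposite direction can be shown analogously''; your write-up should say at least that much.
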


\begin{proof}
	We show that no positive set is contained in a negative set. The opposite direction can be shown analogously.
	It is sufficient to show that no minimal positive set is contained in a maximal negative set.
	View each item as a balanced vector in $\{\pm 1\}^8$.
	Let $S$ be a minimal positive set. $S$ is the intersection of two sets in PS, thus has two coordinates (out of eight), say 1 and 2, fixed to $+1$ and contains all items that agree with both.
	Let $T$ be a maximal negative set. $T$ is the union of two sets in NS, thus has two coordinates, say 3 and 4, fixed to $-1$ and contains all vectors that agree with at least one of the coordinates.
	The item corresponding to the vector that has $+1$ on all these four coordinates (here is where we used the fact that $k \ge 4$) is in $S$ but not in $T$.
\end{proof}

\begin{proof}[Proof of Claim \ref{cla:2-modular}]
	We show that $f$ is 2-modular. Consider two sets $S$ and $T$. By Lemma~\ref{lem:shortcaseanalysis}, the cases in the following case analysis suffices in order to establish 2-modularity.
	
	\begin{enumerate}
		
		\item $f(S) = 2$ (namely, $S \in \PS$).  In this case both $f(S\cup T) \ge 0$ and $f(S\cap T) \ge 0$, by Claim~\ref{claim:containment}, a fact that will be implicitly used in the subcases below.
		
		\begin{enumerate}
			
			\item $f(T) = 2$ (namely, $T \in \PS$). In this case $f(S\cup T)=f(S\cap T) = 1$, by definition of $f$ (rule~2). Thus, $f(S) + f(T) - f(S\cup T) - f(S\cap T) = 2$.
			
			\item $f(T) = 1$.  In this case $f(S)+f(T)=3$. So it suffices to show that either $f(S\cup T) \ge 1$ or $f(S \cap T) \geq 1$. By the definition of $f$, there are two possibilities:
			
			\begin{enumerate}
				
				\item $T$ is contained in some $\PS$ set $S_2$, and then $f(S \cup T) \ge 1$ by the condition $f(S_1 \cup (S_2 \cap R)) =  1$ (with $S_1$ serving as $S$).
				
				\item  $T$ contains some $\PS$ set $S_2$, and then $f(S \cap T) \ge 1$ by the condition $f(S_1 \cap (S_2 \cup R)) = 1$.
				
			\end{enumerate}
			
			\item $f(T) = 0$. 
			In this case $f(S) + f(T) = 2$ and  $0 \le f(S \cup T) + f(S \cap T) \le 4$, satisfying the 2-modularity condition.
			
			\item $f(T) < 0$. Given that $f(S) > 0$ and $f(T) < 0$ we have that $f(S\cup T) = f(S\cap T) \ge 0$, by Claim~\ref{claim:containment}. Hence the stronger 1-modularity condition holds.
			
		\end{enumerate}
		
		\item $f(S) = 1$. Both $f(S\cup T) \ge 0$ and $f(S\cap T) \ge 0$, by Claim~\ref{claim:containment}.
		
		\begin{enumerate}
			
			\item $f(T) = 1$.  The 2-modularity condition then holds since $f(S)+f(T)=2$ and  $0 \le f(S \cup T) + f(S \cap T) \le 4$.
			
			\item $f(T) = 0$. 
			Observe that it cannot be that both  $f(S \cup T) = 2$ and  $f(S \cap T)=2$.
			Hence $f(S) + f(T) = 1$ and  $0 \le f(S \cup T) + f(S \cap T) \le 3$, satisfying the 2-modularity condition.
			
			\item $f(T) = -1$. Given that $f(S) > 0$ and $f(T) < 0$ we have that $f(S\cup T) = f(S\cap T) \ge 0$, by Claim~\ref{claim:containment}. Hence $f(S) + f(T) =  f(S \cup T) + f(S \cap T)$ in this case.
			
		\end{enumerate}
		
		\item $f(S) = f(T) = 0$. We need to show that $-2 \leq f(S \cup T) + f(S \cup T) \leq 2$. This might be violated only if $\max[| f(S \cup T)|,  |f(S \cup T)|] = 2$. Lemma~\ref{lem:shortcaseanalysis} implies that it suffices to consider the case that the set $f(S\cup T) = 2$. Namely, $S\cup T \in \PS$. Then necessarily $|S \cap T| < k$ and $0 \leq f(S \cap T) < 2$. We show that $f(S \cap T) \neq 1$, and then indeed $-2 \leq f(S \cup T) + f(S \cup T) \leq 2$.
		
		Given that  $|S \cap T| < k$, the only rule that might cause $f(S \cap T) = 1$ is that $S \cap T = S_1 \cap (S_2 \cup R)$, where $S_1, S_2 \in \PS$. But given that $k \ge 4$ and that $S \cup T$ (which we shall call $P$) is also in $\PS$, it follows that either $S_1$ or $S_2$ are equal to $P$. (No set in $\PS$ contains the intersection of two other sets from $\PS$.) If $S_1 = P$ then $S$ (and also $T$) is sandwiched between $S_1 \cap (S_2 \cup R) \subset S \subset S_1$, and hence is itself of the form $S = S_1 \cap (S_2 \cup R')$, contradicting the assumption that $f(S) = 0$. If $S_2 = P$ then $S$ (and also $T$) is sandwiched between $S_1 \cap (S_2 \cup R) \subset S \subset S_2$, and this implies that without loss of generality $R = \emptyset$. Hence $S$ is of the form $S = (S_1 \cup R') \cap S_2$, contradicting the assumption that $f(S) = 0$.
	\end{enumerate}
	
	Hence we established that $f$ is 2-modular.
\end{proof}

\section{Learning $\Delta$-Linear and $\epsilon$-Modular Functions}
\label{sec:learning}
Consider the following natural question: We have value-query access to a $\Delta$-linear set function $f$. Our goal is to learn in polynomial-time a ``hypothesis'' linear set function $h$ that is $\delta$-close to $f$. How small can $\delta$ be as a function of $\Delta$ and $n$?
We say that an algorithm \emph{$\delta$-learns} a set function $f$ if given value-query access to $f$ it returns in polynomial-time a linear set function that is $\delta$-close to $f$.

The work of Chierichetti et al.~\cite{CDDK} (Theorem 4) presents an algorithm that $O(\epsilon\sqrt{n})$-learns \emph{$\epsilon$-modular} set functions. Since every $\Delta$-linear set function is $4\Delta$-modular (Proposition \ref{pro:linear-to-modular}), their algorithm also $O(\Delta\sqrt{n})$-learns $\Delta$-linear set functions.
The algorithm of \cite{CDDK} is randomized, and after making $O(n^2\log n)$ non-adaptive queries to the function it is learning, returns a $\delta$-close linear set function with probability $1-o(1)$.

In this section we present an alternative algorithm for $O(\Delta\sqrt{n})$-learning $\Delta$-linear set functions (and since every $\epsilon$-modular set function is $\Kalton\epsilon$-linear for a constant $\Kalton$, also for $O(\epsilon\sqrt{n})$-learning $\epsilon$-modular set functions).
Our algorithm (Algorithm \ref{alg:learning}) is simple, \emph{deterministic}, and makes a \emph{linear} number of non-adaptive queries to the function it is learning. 

\subsection{Learning with the Hadamard Basis}

A \emph{Hadamard basis} is an orthogonal basis $\{v_1, \ldots, v_n\}$ of $\mathbb{R}^n$ which consists of vectors in $\{\pm 1\}^n$. Let $\{y_1, \ldots, y_n\}=\{v_1/\sqrt{n}, \ldots, v_n/\sqrt{n}\}$, then $\{y_1, \ldots, y_n\}$ is an orthonormal basis of $\mathbb{R}^n$.
Easy recursive constructions of a Hadamard basis are known whenever $n$ is a power of~2 (and also for some other values of $n$).
For all values of $n$ for which a Hadamard basis exists, and for every choice of one particular vector $v \in \{\pm 1\}^n$, we may assume that $v$ is a member of the Hadamard basis. This can be enforced by taking the first vector $v_1$ in an arbitrary Hadamard basis of $\mathbb{R}^n$, and flipping -- in all vectors of the basis -- those coordinates in which $v$ and $v_1$ do not agree.

For every set of items $S\subseteq [n]$, let $v_S\in \{0,1\}^n$ be the indicator vector of $S$. Vector $v_S$ can be written as a linear combination $\sum_{i=1}^n \lambda_i y_i$ of the basis vectors $\{y_1, \ldots, y_n\}$. We shall make use of the following property of orthonormal bases: Express $v_S$ as $\sum_{i=1}^n \lambda_i y_i$; since $\{y_1, \ldots, y_n\}$ is orthonormal, then by the Pythagorean theorem generalized to Euclidean spaces,
\begin{equation}
\sum_{i=1}^n (\lambda_i)^2 = |S|.\label{eq:Had}
\end{equation}

Given a basis vector $v_i$ we denote by $S_i$ the set of indices in which $v_i$ has entries~$+1$, and by $\bar{S_i}$ the set of indices in which $v_i$ has entries~$-1$.

Throughout this section, $i\in[n]$ specifies the index of a basis vector (as in $v_i$ or $y_i$), whereas to index other objects (such as coordinates of vectors, or items) we shall use $j$ rather than $i$. When a vector is indexed by a set (such as $v_S$, or $v_{\{j\}}$) then the vector is the indicator vector (in $\{0,1\}^n$) for the set.

\begin{claim}[Expressing a linear set function using $S_i,\bar{S_i}$]
	\label{cla:val-of-S}
	Consider a linear set function $g$ where $g(S)=\sum_{j\in S}g_j + g(\emptyset)$ for every set $S\subseteq[n]$. Then for every  $v_S$  it holds that $g(S)=\frac{1}{\sqrt{n}}\sum_{i=1}^n \lambda_i (g(S_i)-g(\bar{S_i})) + g(\emptyset)$, where here
$\lambda_1, \ldots , \lambda_n$ are the unique coefficients for which $v_S=\sum_{i=1}^n \lambda_i y_i$. In particular $g_j=g(\{j\})$ can be obtained by substituting in the unique $\lambda_1, \ldots , \lambda_n$ for which
$v_{\{j\}}=\sum_{i=1}^n \lambda_i y_i$.
\end{claim}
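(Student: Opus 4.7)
The plan is to reduce the statement to a straightforward application of linearity on $\mathbb{R}^n$. Since $g$ is a linear set function in the sense of the paper, $g(S) = g(\emptyset) + \sum_{j \in S} g_j$. Define $\tilde{g}:\mathbb{R}^n \to \mathbb{R}$ by $\tilde{g}(x) = \sum_{j=1}^n g_j x_j$; this is genuinely linear on $\mathbb{R}^n$ and satisfies $\tilde{g}(v_T) = g(T) - g(\emptyset)$ for every indicator vector $v_T \in \{0,1\}^n$. So the problem becomes about evaluating $\tilde{g}$ on $v_S$ using its values on the $\pm 1$ basis vectors $v_i$.

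The first step is to compute $\tilde{g}(v_i)$. By definition, $(v_i)_j = +1$ for $j \in S_i$ and $(v_i)_j = -1$ for $j \in \bar{S_i}$, hence
\[
\tilde{g}(v_i) \;=\; \sum_{j \in S_i} g_j \;-\; \sum_{j \in \bar{S_i}} g_j \;=\; \bigl(g(S_i) - g(\emptyset)\bigr) - \bigl(g(\bar{S_i}) - g(\emptyset)\bigr) \;=\; g(S_i) - g(\bar{S_i}).
\]
The second step is to transfer the basis expansion from $\{y_i\}$ to $\{v_i\}$: since $y_i = v_i/\sqrt{n}$, the relation $v_S = \sum_{i=1}^n \lambda_i y_i$ is the same as $v_S = \frac{1}{\sqrt{n}}\sum_{i=1}^n \lambda_i v_i$.

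The third step applies $\tilde{g}$, using its linearity on $\mathbb{R}^n$, to both sides of this identity:
\[
g(S) - g(\emptyset) \;=\; \tilde{g}(v_S) \;=\; \frac{1}{\sqrt{n}} \sum_{i=1}^n \lambda_i\, \tilde{g}(v_i) \;=\; \frac{1}{\sqrt{n}} \sum_{i=1}^n \lambda_i \bigl(g(S_i) - g(\bar{S_i})\bigr).
\]
Rearranging gives exactly the desired formula. For the ``In particular'' part, substitute $S = \{j\}$ so that the left-hand side becomes $g(\{j\}) = g_j + g(\emptyset)$, yielding $g_j$ after subtracting $g(\emptyset)$ and plugging in the coordinates of $v_{\{j\}}$ in the Hadamard basis.

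There is essentially no obstacle here: the only thing to verify carefully is that the constant term $g(\emptyset)$ is handled correctly, which is why the reduction to the additive function $\tilde{g}$ is the cleanest way to proceed. The coefficients $\lambda_i$ are uniquely determined because $\{y_1,\ldots,y_n\}$ is an orthonormal basis of $\mathbb{R}^n$, so no choice ambiguity enters the formula.
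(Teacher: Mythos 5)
Your proof is correct and follows essentially the same route as the paper: both define the additive extension $\tilde{g}(x)=\sum_j g_j x_j$ on $\mathbb{R}^n$, compute $\tilde{g}(v_i)=g(S_i)-g(\bar{S_i})$, and then apply linearity of $\tilde{g}$ to the expansion $v_S=\frac{1}{\sqrt{n}}\sum_i \lambda_i v_i$.
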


\begin{proof}
Let $g'$ be the additive function defined as $g'(S) = g(S) - g(\emptyset)$. Extend the domain of $g'$ from $\{0,1\}^n$ to $\mathbb{R}^n$, giving the additive function $\tilde{g}$ over $\mathbb{R}^n$ defined as follows: $\tilde{g}(x)=\sum_{j\in S} g_jx_j$ for every $x\in \mathbb{R}^n$. Observe that for every Hadamard basis vector $v_i$, $\tilde{g}(v_i) = g(S_i) - g(\bar{S_i})$ (by definition of the sets $S_i,\bar{S_i}$).
By additivity of $\tilde{g}$, for every set $S$ whose  indicator vector $v_S$ can be expressed as $\sum_{i=1}^n \lambda_i y_i$, we have that $\tilde{g}(v_S) = \sum_{i=1}^n \lambda_i \tilde{g}(y_i) = \frac{1}{\sqrt{n}}\sum_{i=1}^n \lambda_i \tilde{g}(v_i) = \frac{1}{\sqrt{n}}\sum_{i=1}^n \lambda_i (g(S_i)-g(\bar{S_i}))$. Finally, $g(S)-g(\emptyset)=g(S)-g_0=\tilde{g}(v_S)$ for every set $S$.
\end{proof}

\begin{lemma}
	\label{lem:differ-in-Hadamard}
	Let $h$ and $g$ be two linear set functions such that for every $i\in[n]$, $h(S_i)-h(\bar{S_i})$ is $O(\Delta)$-close to $g(S_i)-g(\bar{S_i})$, and for every $S'\in\{\emptyset,U\}$, $h(S')$ is $O(\Delta)$-close to $g(S')$.
	Then for every set $S\subseteq [n]$, $h(S)$ is $O(\Delta(1 + \sqrt{\min\{|S|,n-|S|\}}))$-close to $g(S)$.
\end{lemma}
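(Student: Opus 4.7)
The plan is to combine Claim \ref{cla:val-of-S} with Parseval's identity (equation \eqref{eq:Had}) and Cauchy--Schwarz, and then to use a complement trick to replace $|S|$ by $\min\{|S|,n-|S|\}$.

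First, fix a set $S\subseteq[n]$ and write its indicator vector as $v_S=\sum_{i=1}^n \lambda_i y_i$ in the orthonormal Hadamard basis. Applying Claim \ref{cla:val-of-S} to both $h$ and $g$ and subtracting yields
\[
h(S)-g(S)=\frac{1}{\sqrt{n}}\sum_{i=1}^n \lambda_i\bigl[(h(S_i)-h(\bar S_i))-(g(S_i)-g(\bar S_i))\bigr]+\bigl(h(\emptyset)-g(\emptyset)\bigr).
\]
By the hypothesis, each bracketed term is bounded in absolute value by $O(\Delta)$, and so is the final summand. Taking absolute values, the key quantity to control is $\frac{1}{\sqrt n}\sum_{i=1}^n |\lambda_i|$.

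The second step is the Cauchy--Schwarz bound
\[
\sum_{i=1}^n |\lambda_i|\le \sqrt{n}\cdot\sqrt{\sum_{i=1}^n \lambda_i^2}=\sqrt{n}\cdot\sqrt{|S|},
\]
where the last equality uses \eqref{eq:Had}, i.e.\ $\sum_i \lambda_i^2=\|v_S\|^2=|S|$. Combining with the display above gives $|h(S)-g(S)|\le O(\Delta(1+\sqrt{|S|}))$, which is the desired bound whenever $|S|\le n/2$.

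The third step handles the case $|S|>n/2$ by a complement argument. Since $h$ and $g$ are linear, $h(S)+h(\bar S)=h(U)+h(\emptyset)$ and similarly for $g$, so
\[
h(S)-g(S)=\bigl(h(U)-g(U)\bigr)+\bigl(h(\emptyset)-g(\emptyset)\bigr)-\bigl(h(\bar S)-g(\bar S)\bigr).
\]
Both parenthesized differences involving $\emptyset$ and $U$ are $O(\Delta)$ by hypothesis, and the bound of the previous paragraph applied to $\bar S$ gives $|h(\bar S)-g(\bar S)|\le O(\Delta(1+\sqrt{|\bar S|}))=O(\Delta(1+\sqrt{n-|S|}))$. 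Thus in either case $|h(S)-g(S)|\le O(\Delta(1+\sqrt{\min\{|S|,n-|S|\}}))$, proving the lemma.

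I do not anticipate a genuine obstacle: the main point is simply recognizing that the $\sqrt{|S|}$ factor is exactly what Cauchy--Schwarz produces against the Parseval identity for $v_S$, and that the complement symmetry of $h,g$ (linearity plus the hypothesis on $\emptyset$ and $U$) lets us always reduce to the smaller side of~$S$.
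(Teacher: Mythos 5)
Your proof is correct and follows essentially the same route as the paper's: both apply Claim~\ref{cla:val-of-S} to $h$ and $g$, bound $\frac{1}{\sqrt n}\sum_i|\lambda_i|$ via $\sum_i\lambda_i^2=|S|$ (you phrase it as Cauchy--Schwarz, the paper as a maximization of the $\ell_1$-norm under the $\ell_2$ constraint), and then use the complement identity $h(S)+h(\bar S)=h(\emptyset)+h(U)$ to get the $\sqrt{n-|S|}$ bound and take the minimum.
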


\begin{proof}
The lemma clearly holds for $S\in\{\emptyset,U\}$, since in this case $O(\Delta(1 + \sqrt{\min\{|S|,n-|S|\}}))=O(\Delta\cdot1)$.
For the remainder of the proof assume $0<|S|<n$.
By Claim~\ref{cla:val-of-S} applied to $h$ and $g$,
$h(S)=\frac{1}{\sqrt{n}}\sum_{i=1}^n \lambda_i (h(S_i)-h(\bar{S_i})) + h(\emptyset)$, and
$g(S)=\frac{1}{\sqrt{n}}\sum_{i=1}^n \lambda_i (g(S_i)-g(\bar{S_i})) + g(\emptyset)$.
For every $i\in[n]$, $h(S_i)-h(\bar{S_i})$ estimates $g(S_i)-g(\bar{S_i})$ up to an additive error of $O(\Delta)$, and $h(\emptyset)$ estimates $g(\emptyset)$ up to an additive error of $O(\Delta)$. Thus we get that $h(S)$ estimates $g(S)$ up to an additive error of $\frac{O(\Delta)}{\sqrt{n}}\sum_{i=1}^n |\lambda_i|+O(\Delta)$. Invoking \eqref{eq:Had}, this error is maximized when $(\lambda_i)^2 =|S|/n$ for every $i$. We conclude that an upper bound on the additive estimation error is
\begin{equation}
\frac{O(\Delta)}{\sqrt{n}}n\sqrt{\frac{|S|}{n}} +O(\Delta)= O(\Delta) \sqrt{|S|}+O(\Delta)=O(\Delta\sqrt{|S|}).\label{eq:S-smaller}
\end{equation}
Since $h(S)=h(\emptyset)+h(U)-h(\bar{S})$ and $g(S)=g(\emptyset)+g(U)-g(\bar{S})$, and since we know from \eqref{eq:S-smaller} that $h(\bar{S})$ estimates $g(\bar{S})$ up to an additive error of $O(\Delta\sqrt{|\bar{S}|})=O(\Delta\sqrt{n-|S|})$, we also get that the maximum additive estimation error is
\begin{equation}
O(\Delta)+O(\Delta\sqrt{n-|S|})=O(\Delta\sqrt{n-|S|}).\label{eq:comp-S-smaller}
\end{equation}
Taking the minimum among \eqref{eq:S-smaller} and \eqref{eq:comp-S-smaller} completes the proof.
\end{proof}

\subsection{The Algorithm}

\begin{algorithm}
	{\bf Input:} Value-query access to a set function $f$.
	
	{\bf Output:} A linear set function $h$ where $h(S)=h_0 + \sum_{j\in S} h_j$.
		
	\smallskip
	
	\% Query linearly many values of $f$
	
	query $f(\emptyset)$
	
	{\bf for every} $i\in [n]$ {\bf do} \quad \quad \% Consider the $i$th Hadamard basis vector $v_i$
	
	\quad query $f(S_i)$ and $f(\bar{S_i})$ \quad \% $S_i,\bar{S_i}$ are sets (possibly empty) of the $+1,-1$ entries of $v_i$
	
	{\bf end for}
	
	\smallskip
	
	\% Compute $h$
	
	set $h_0=f(\emptyset)$

	{\bf for every} $j\in [n]$ {\bf do}

	\quad express $v_{\{j\}}$ as $\sum_{i=1}^n \lambda_i y_i$ \quad \quad \% $y_i$ is the Hadamard basis vector $v_i$ after normalization
	
	\quad set $h_j=\frac{1}{\sqrt{n}}\sum_{i=1}^n \lambda_i (f(S_i)-f(\bar{S_i}))$

	{\bf end for}
	\caption{Using the Hadamard basis to learn a linear set function.}
	\label{alg:learning}
\end{algorithm}

For simplicity, we state the following theorem for values of $n$ that are a power of 2. In Remark~\ref{rem:non-power-2} we explain how to extend it beyond powers of 2. In the following Theorem, when referring to Algorithm~\ref{alg:learning}, we mean Algorithm~\ref{alg:learning} run with the first Hadamard basis vector $v_1$ being the all-ones vector.

\begin{theorem}
	\label{thm:hadamard}
	Let $n$ be a power of~2. Given value-query access to a $\Delta$-linear set function $f$ over $n$ items, Algorithm~\ref{alg:learning} returns in polynomial time a linear set function $h$ such that for every set $S\subseteq [n]$, $h(S)$ is $O(\Delta(1 + \sqrt{\min\{|S|,n-|S|\}}))$-close to $f(S)$. Algorithm~\ref{alg:learning} thus $O(\Delta\sqrt{n})$-learns $\Delta$-linear set functions over $n$ items.
\end{theorem}

\begin{proof}
	First note that Algorithm \ref{alg:learning} clearly runs in polynomial time.
By the construction of $h$ in Algorithm \ref{alg:learning} and by Claim \ref{cla:val-of-S}, $h$ is the function that assigns to every set $S=\sum_{i=1}^n \lambda_i y_i$ the value $h(S)=\frac{1}{\sqrt{n}}\sum_{i=1}^n \lambda_i (f(S_i)-f(\bar{S_i})) + f(\emptyset)$.
So for every $i\in[n]$, $h(S_i)-h(\bar{S_i})=f(S_i)-f(\bar{S_i})$, and $h(\emptyset)=f(\emptyset)$.
Let $g$ be a linear set function $\Delta$-close to~$f$, then
for every $i\in[n]$, $h(S_i)-h(\bar{S_i})$ is $2\Delta$-close to $g(S_i)-g(\bar{S_i})$, and $h(\emptyset)$ is $\Delta$-close to $g(\emptyset)$. Recall that $v_1$ is the all-ones vector, which is the indicator vector of $U$. So $h(U)$ is also $2\Delta$-close to $g(U)$.
Invoking
Lemma \ref{lem:differ-in-Hadamard}, we get that for every set $S$, $h(S)$ is $O(\Delta(1 + \sqrt{\min\{|S|,n-|S|\}}))$-close to $g(S)$. Since $g$ is $\Delta$-close to $f$, this completes the proof.
\end{proof}

Theorem~\ref{thm:hadamard} is essentially the best possible, in the following strong sense: Corollary~23 of \cite{CDDK} shows that no algorithm (deterministic or randomized) that performs polynomially-many value queries can find a $o(\Delta\sqrt{n/\log n})$-close linear set function, even if the queries are allowed to be adaptive.
We include a proof sketch of this tightness result in Appendix \ref{appx:learning} 
for completeness. Our tightness proof holds even for learning \emph{monotone} $\Delta$-linear set functions.

\begin{remark}[LP-based approach]
	\label{rem:learning-LP}
	We describe an alternative way to derive the linear function $h$: After querying $f(S)$ for $S=\emptyset$ and for $S=S_i,\bar{S_i}~\forall i\in[n]$, one solves a linear program (LP). The $n+1$ variables $x_j$ of the LP are intended to have the value $g_j$ for every $j$. The constraints are
	$f(S) - \Delta \le \sum_{j\in S} x_j \le f(S) + \Delta$ for every set $S$ that was queried.
	Let $h$ be the linear set function defined by $c_j = x^*_j$ for every $j$ where $x^*$ is a feasible solution of the LP. We claim that $h$ is $O(\Delta(1 + \sqrt{\min\{|S|,n-|S|\}}))$-close to $f$: On each of the queried sets $S$, the values of $h$ and $g$ differ by at most $2\Delta$.
	Since we may assume that $v_1$ is the all-ones vector, the set $S=U$ is one of the queried sets. Invoking
	Lemma \ref{lem:differ-in-Hadamard} and using that $g$ is $\Delta$-close to $f$ shows the claim.
	The advantage of this LP-based approach is that additional constraints can easily be incorporated once more data of $f$ is collected, potentially leading to better accuracy. Likewise, one can easily enforce desirable properties such as non-negativity on $h$ (if $g$ is nonnegative).
\end{remark}

\begin{remark}[Beyond powers of 2]
	\label{rem:non-power-2}
	Using either Algorithm \ref{alg:learning} or the LP-based algorithm described in Remark \ref{rem:learning-LP}, we can $O(\Delta\sqrt{n})$-learn $\Delta$-linear set functions over $n$ items even when $n$ is not a power of 2, as follows. Extend $f$ to $f'$ over $n'\ge n$ items $U'$, where $n'$ is a power of 2 and $U\subseteq U'$, by setting $f'(S)=f(S\cap U)$ for every set $S\subseteq U'$.
	Extend $g$ to $g'$ over $U'$ in the same way.
	Notice that the extended versions $f',g'$ are still $\Delta$-close.
	
	Given $h'$ over $U$ returned by Algorithm \ref{alg:learning}, we define $h$ over $U$ by setting $h(S)=h'(S)$ for every set $S\subseteq U$. The proof of Theorem \ref{thm:hadamard} holds verbatim with the single following modification: the Hadamard basis vector $v_1\in \mathbb{R}^{n'}$  is no longer the all-ones vector, but rather the vector that is +1 on the first $n$ coordinates and $-1$ on the $n' - n$ auxiliary variables. This ensures that we can apply Lemma \ref{lem:differ-in-Hadamard} to $h$ instead of $h'$.
	
	For the LP-based algorithm, we can formulate the LP with $n'+1$ variables, but since we know there is a feasible solution in which $x_j=0$ for every $j>n$ (namely $x=g$), we can add these constraints so that the resulting linear function $h$ is over $n$ items.
\end{remark} 

\appendix

\section{Missing Proof from Section \lowercase{\ref{sec:prelim}} (Preliminaries)}
\label{appx:prelim}

\begin{proof}[Proof of Theorem \ref{thm:pawlik}]
	Let $\epsilon=1$, and let $f$ be the function given in Pawlik's paper ``Approximately Additive Set Functions'' \cite{Pawlik} over the items $X \uplus Y$. We present the function $f$ and the proof for completeness (and due to typos and brevity in \cite{Pawlik}). Let $f(S \cup T)$ denote the value of the union of a set $S \subseteq X$ and a set $T \subseteq Y$. Let $X'$ (resp., $Y'$) denote a non-empty proper subset of $X$ (resp., $Y$). The function $f$ is defined as follows: $f(\emptyset \cup \emptyset)=0$, $f(X' \cup \emptyset)=1$, $f(X \cup \emptyset)=3$, $f(\emptyset \cup Y')=-1$, $f(X' \cup Y')=0$, $f(X \cup Y')=1$, $f(\emptyset \cup Y)=-3$, $f(X' \cup Y)=-1$, $f(X \cup Y)=0$.
	
	As claimed in \cite{Pawlik}, $f$ is weakly $1$-modular. We observe that $f$ is $2$-modular, but no better than $2$ modular (consider two non-disjoint proper subsets of $X$ whose union is $X$).
	
	Let $\mu$ be the closest linear function to $f$. Consider a set $C=X-x+y$ ($x\in X,y\in Y$ will be specified below). By definition $f(X)=3$ and $f(C)=0$. We argue that $f$ is no closer than $3/2$-close to $\mu$.
	
	Assume for contradiction that $f$ is $(3/2-\delta)$-close to $\mu$ for $\delta>0$. We show there is large enough $k$ for which this leads to contradiction ($k$ is the number of items in $X$, which equals the number of items in $Y$).
	
	First notice that $\mu(X) \ge 3/2+\delta$ (otherwise $f$ and $\mu$ are more than $3/2-\delta$ apart on $X$). If it were to hold that $\mu(x)\le\mu(y)$, we'd get that $\mu(C)=\mu(X)-\mu(x)+\mu(y)\ge 3/2+\delta$ (i.e., $f$ and $\mu$ are more than $3/2-\delta$ apart on $C$). So $\mu(x)>\mu(y)$ for every $x,y$.
	
	We know that $\mu(X)$ can't be too big (since it can't exceed $f(X)=3$ by too much). For the same reason and since $f(Y)=-3$ by definition, $\mu(Y)$ is negative but can't be too small. By letting $k$ grow large, it becomes apparent that $\mu(x)$ cannot be bounded away from 0 from above, similarly $\mu(y)$ cannot be bounded away from 0 from below. Using $\mu(x)>\mu(y)$ we conclude that $\mu(x)\ge0$ for every $x$ and $\mu(y)\le 0$ for every $y$, and that for most items $\mu(x)\to0$ and $\mu(y)\to 0$.
	
	Now choose $x$ and $y$ that minimize $|\mu(x)|+|\mu(y)|$. For large enough $k$ we get that $\mu(x)-\mu(y)<\delta$, so $\mu(C)=\mu(X)-\mu(x)+\mu(y)> 3/2 + \delta - \delta>f(C)+3/2-\delta$, contradiction. The bounds asserted in the theorem follow.
\end{proof}

\section{Supplement to Section \lowercase{\ref{sec:upper}} (Upper Bounds)}
\label{appx:upper}
\subsection{Expanders}
\label{appx:expanders}

In this appendix we give a proof sketch for Lemma \ref{lem:expander}, establishing the existence of expanders with a range of parameters. We begin with a claim following from Stirling's approximation. The proof of this claim appears for completeness in~\cite{FFT16}.

\begin{claim}[Stirling]
	\label{cla:stirling}
	For every $c,d\in \mathbb{R}$ where $c>d>0$, for every sufficiently large integer $m$ such that $cm$ and $dm$ are integers,
	$$
	{cm \choose dm} =
	\left(\frac{c^c}{d^d(c-d)^{c-d}}\right)^m	\Theta\left(\frac{1}{\sqrt{m}}\right).
	$$
\end{claim}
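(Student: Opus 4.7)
The plan is to prove this claim by a direct application of Stirling's formula in the form $n! = \sqrt{2\pi n}\,(n/e)^n\,(1 + o(1))$, which yields $n! = \Theta(\sqrt{n})\,(n/e)^n$ as $n\to\infty$. I would write $\binom{cm}{dm} = \frac{(cm)!}{(dm)!\,((c-d)m)!}$ and apply Stirling to each of the three factorials. It is convenient to track three contributions separately: the exponential-in-$m$ part, the powers of $m$ and $e$, and the polynomial prefactor.

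For the exponential-in-$m$ part, the numerator contributes $c^{cm}\,m^{cm}\,e^{-cm}$, while the denominator contributes $d^{dm}\,(c-d)^{(c-d)m}\,m^{dm+(c-d)m}\,e^{-dm-(c-d)m}$. The powers of $m$ cancel (since $cm = dm + (c-d)m$), as do the powers of $e$, leaving the factor $\left(\frac{c^c}{d^d(c-d)^{c-d}}\right)^m$, which is exactly the main term asserted in the claim.

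For the polynomial prefactor, the three Stirling square roots combine as
$$\frac{\sqrt{2\pi cm}}{\sqrt{2\pi dm}\,\sqrt{2\pi (c-d)m}} \;=\; \sqrt{\frac{c}{2\pi d(c-d)\,m}},$$
which, since $c$, $d$, $c-d$ are positive constants independent of $m$, equals $\Theta(1/\sqrt{m})$. The $(1+o(1))$ error factors from the three invocations of Stirling multiply to a single $(1+o(1))$, which is absorbed into the $\Theta(\cdot)$ notation. Combining these contributions yields the claimed identity.

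There is no real obstacle here beyond careful bookkeeping; the only point worth emphasizing is that $c$ and $d$ (and therefore $c-d$) must be treated as fixed constants so that the $\Theta$ hides only a factor that depends on them and on the Stirling error term, but is bounded above and below independently of $m$ for all sufficiently large $m$ (and the hypothesis that $cm$ and $dm$ are integers is needed only so that the binomial coefficient and the factorials are well defined).
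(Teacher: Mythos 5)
Your proof is correct and follows essentially the same route as the paper: apply Stirling to each of the three factorials, cancel the $m^{cm}$ and $e^{-cm}$ contributions, isolate the base-$c,d,(c{-}d)$ exponential term, and collect the square-root prefactors into a $\Theta(1/\sqrt{m})$ factor. The only cosmetic difference is that the paper uses a non-asymptotic two-sided Stirling inequality $\sqrt{2\pi}\,n^{n+1/2}e^{-n} \le n! \le e\,n^{n+1/2}e^{-n}$ to get explicit upper and lower bounds separately, whereas you use the asymptotic form $n! = \sqrt{2\pi n}\,(n/e)^n(1+o(1))$; since the claim only asserts a $\Theta$-bound for sufficiently large $m$, both are equally valid.
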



\begin{proof}[Proof of Lemma \ref{lem:expander} (Sketch)]
	We give the proof for $(\frac{1}{4},5,\frac{1}{2})$-expanders. Existence for the other families in Lemma \ref{lem:expander} follows from the same argument.
	Let $k=2m$. Suppose that there are $4m$ left-hand side vertices and consequently $2m$ right-hand side vertices (because $\theta = \frac{1}{2}$). The edges are determined by making $r=5$ copies of each left-hand side vertex, taking a random permutation over all $2kr=20m$ copies, and connecting to the right-hand side in a round-robin fashion. The probability that there is a set of $2k\alpha=m$ left-hand side vertices with fewer than $m$ neighbors is at most:
 	\begin{eqnarray*}
	{2k\choose 2k\alpha}{2k\theta \choose 2k\alpha}
	\frac{{2kr\frac{\alpha}{\theta} \choose 2kr\alpha}}{{2kr \choose 2kr\alpha}} = 
	{4m \choose m}{2m \choose m}\frac{{10m \choose 5m}}{{20m \choose 5m}}\\
	= 
	\left(\frac{4^4}{3^3 \cdot 1^1}\cdot\frac{2^2}{1^1 \cdot 1^1}\cdot\frac{10^{10}}{5^5 \cdot 5^5}\cdot\frac{5^5\cdot 15^{15}}{20^{20}}\right)^m \Theta\left(\frac{1}{m}\right)\\
	=
	\left(\frac{27}{32} \right)^{4m}\Theta\left(\frac{1}{m}\right)
	< 1,
	\end{eqnarray*}
	where the second equality follows from Claim \ref{cla:stirling}, and the inequality holds by taking sufficiently large $m$ to overcome the constants in $\Theta(1/m)$.
	Hence such expanders exist for every integer multiple $k$ of sufficiently large $k'$.
\end{proof}

\subsection{Strengthening the Upper Bound on $\Kalweak$}
\label{appx:UB-weak}

In this appendix we use the notation of Section \ref{sub:UB-weak}, and in addition denote the average deficit and surplus of the target sets of $\PS^*$ and $\NS^*$ by $d'$ and $s'$, respectively. We first state a stronger version of Lemma \ref{lem:UB-main-lemma}:

\begin{lemma}[Strong Version of Lemma \ref{lem:UB-main-lemma}]
	\label{lem:UB-main-lemma-full}
	Let $k\in \mathbb{N}_{\ge 0}$ and $\alpha,r,\theta\in \Replus$ be such that there exists an $(\alpha,r,\theta)$-expander $G_k$.
	Let $\mathcal{G}$ and $\mathcal{G}'$ (possibly $\mathcal{G}=\mathcal{G}'$) be collections of $2k$ sets each, both with $\alpha$-frequent items, such that the average deficit of $\mathcal{G}$ is $\le d$, the average surplus of $\mathcal{G}'$ is $\le s$, and their target sets have average deficit $\ge d'$ and average surplus $\ge s'$, respectively. Then
	\begin{equation*}
	M\le \frac{\frac{1}{2}(d+s- \theta (d'+ s')) + 2\epsilon(r - 1)}{1-\theta} + \epsilon.
	\end{equation*}
\end{lemma}

\begin{proof}
	We first apply Lemma~\ref{lem:KR} to partition and disjointly recombine the sets of collection $\mathcal{G}$ via the expander $G_k=G_k(V,W;E)$. We use the following notation: For a vertex $v\in V$ (resp., $w\in W$) let $S_v$ (resp., $S_w$) be the source (resp., target) set corresponding to $v$ (resp., $w$). Denote the neighboring vertices of a vertex $v$ by $N(v)$ and its degree by $\deg(v)$. Let $S_{v,w}$ be the intermediate subset that labels (corresponds to) edge $(v,w)\in E$.
	
	By Lemma \ref{lem:KR}, for every $v\in V$, the intermediate subsets labeling the edges adjacent to $v$ are disjoint, and the same holds for every $w\in W$. We can thus apply Observation \ref{obs:iterative} to get
	\begin{eqnarray*}
		&\sum_{w \in N(v)} f(S_{v,w}) \ge f(S_v) + (\deg(v) - 1)(f(\emptyset) - \epsilon)&\text{~~~}\forall v\in V.\\
		&\sum_{v \in N(w)} f(S_{v,w}) \le f(S_w) + (\deg(w) - 1)(f(\emptyset) + \epsilon)&\text{~~~}\forall w\in W.
	\end{eqnarray*}
	Denote the maximum absolute value of $f$ by $M$. Since collection $\mathcal{G}$ has average deficit $\le d$, by summing over vertices $v\in V$ (where $|V|=2k$) we get $\sum_{v\in V}f(S_v)\ge 2k(M-d)$. Since the target sets of $\mathcal{G}$ have average deficit $\ge d'$, by summing over vertices $w\in W$ (where $|W|=2\theta k$) we get $\sum_{w\in W}f(S_w)\le 2\theta k(M-d')$. Clearly in the bipartite graph $G_k$, $\sum_{v\in V}\deg(v)=\sum_{w\in W}\deg(w)$, and by the parameters of $G_k$ both are equal to $2kr$. Therefore, summing over $v \in V$ and $w\in W$ we get
	\begin{eqnarray*}
	2kM - 2kd + (2kr - 2k)(f(\emptyset) - \epsilon)
	\le \sum_{(v,w)\in E} f(S_{v,w})\\
	\le 2\theta kM - 2\theta kd' + (2kr - 2\theta k)(f(\emptyset) + \epsilon).
	\end{eqnarray*}
	Dividing the resulting inequality by $2k$ and rearranging gives
	\begin{eqnarray}
	(1 - \theta)M
	&\le&
	d - \theta d' + (r - 1)(\epsilon - f(\emptyset)) + (r - \theta)(\epsilon + f(\emptyset))\nonumber\\
	&=& d - \theta d' + 2\epsilon(r - 1) + (1-\theta)(\epsilon+f(\emptyset)).\label{eq:positive}
	\end{eqnarray}
	
	Similarly, using that the average deficit of collection $\mathcal{G}'$ is $\le s$ and the average deficit of its target sets is $\ge s'$, $\sum_{v\in V}f(S_v)\le 2k(-M+s)$, and $\sum_{w\in W}f(S_w)\ge -2\theta k(M-s')$. Therefore
	\begin{eqnarray*}
	- 2kM + 2ks + (2kr - 2k)(f(\emptyset) + \epsilon)
	\ge \sum_{(v,w)\in E} f(S_{v,w})\\
	\ge -2\theta kM + 2\theta k s' + (2kr - 2\theta k)(f(\emptyset) - \epsilon).
	\end{eqnarray*}
	Dividing the resulting inequality by $2k$ and rearranging gives
	\begin{eqnarray}
	(1 - \theta)M
	&\le&
	s - \theta s' + (r - 1)(\epsilon + f(\emptyset)) + (r - \theta)(\epsilon - f(\emptyset))\nonumber\\
	&=& s - \theta s' + 2\epsilon(r - 1) + (1-\theta)(\epsilon-f(\emptyset)).
	\label{eq:negative}
	\end{eqnarray}
	
	Rearranging Inequalities \eqref{eq:positive} and \eqref{eq:negative} as well as averaging the resulting inequalities implies the theorem.
\end{proof}

We use this lemma to show the following upper bound.

\begin{lemma}[Upper Bound on $\Kalweak$]
\label{lem:K}
Suppose that for fixed $r,r',\theta,\theta'\in \Replus$ there exist $(\frac{1}{2},r,\theta)$-expanders and $(1-\frac{1}{2\theta},r',\theta')$-expanders. Then the weak Kalton constant satisfies:
$$\Kalweak \le \min\left\{
\frac{ 2r - \frac{1}{2} - \theta - \theta\frac{d'+s'}{2} } {1-\theta},
\frac{ 2r' - \theta' +\frac{d'+s'}{2} } {1-\theta'}
\right\}.$$
\end{lemma}

\begin{proof}
Observe that there exists $k$ such that $f$ has collections $\PS^*$ and $\NS^*$ of $2k$ sets as described above, and expanders $G_k$ and $G'_k$ with the above parameters, respectively, exist. The first upper bound is found in Lemma  \ref{lem:K-first}. Now consider the collection $\mathcal{G}$ that is the \emph{complement of the target sets} of $\PS^*$. Since $\PS^*$ has $\frac{1}{2}$-frequent items, then the target sets have $\frac{1}{2\theta}$-frequent items (Lemma \ref{lem:KR}), and $\mathcal{G}$ has $(1-\frac{1}{2\theta})$-frequent items (Observation \ref{obs:freq-in-complement}). By Corollary \ref{cor:complement-deficit} with $\epsilon=1$, the average surplus of $\mathcal{G}$ is $\le d'+1+\delta$, and clearly the average surplus of its target sets is $\ge 0$. Similarly, let $\mathcal{G}'$ be the complement of the target sets of $\NS^*$, which has $(1-\frac{1}{2\theta})$-frequent items and average deficit $\le s'+1-\delta$. Clearly the average deficit of the target sets of $\mathcal{G}'$ is $\ge 0$. Applying Lemma \ref{lem:UB-main-lemma-full} thus gives an upper bound of
\begin{equation}
\Kalweak \le \frac{\frac{1}{2}(d'+s'+2) +2r' -2}{1-\theta'} + 1 = \frac{2r'-1+\frac{d'+s'}{2}}{1-\theta'} + 1.\label{eq:tweak}
\end{equation}

Taking the minimum of the bound in Lemma  \ref{lem:K-first} and the bound in \eqref{eq:tweak} completes the proof.
\end{proof}

We can now prove the upper bound in Theorem \ref{thm:UpperLowerWeak}, by which the weak Kalton constant satisfies $\Kalweak \le 23.811$.

\begin{proof}[Proof of First Part of Theorem \ref{thm:UpperLowerWeak}]
By using the expanders with $\alpha=\frac{1}{2}$ from Lemma \ref{lem:expander} we can substitute $r=5$ and $\theta=\frac{5}{7}$. Since $1-\frac{1}{2\theta}=\frac{3}{10}$, we can complement these parameters by using the expanders with $\alpha'=\frac{3}{10}$ from Lemma \ref{lem:expander} for which $r'=4$ and $\theta'=\frac{4}{7}$.
This implies that $\Kalweak \le \min\{30.75 -1.25(d'+s'), \frac{52}{3}+\frac{7}{6}(d'+s')\}=23.811$.
\end{proof}

\section{Lower Bound on $\Kalweak$}
\label{sub:LB-weak}

We prove the lower bound stated in Theorem \ref{thm:UpperLowerWeak} on $\Kalweak$.

\begin{proof}[Proof of Second Part of Theorem \ref{thm:UpperLowerWeak}]
	We show there exists a weakly 2-modular set function with $n=20$ that is tightly 3-linear. Thus, $\Kalweak \ge \frac{3}{2}$.
	Consider the following $(k,M)$-symmetric function $f$ with $k=3$ (hence with $n = {6 \choose 3} = 20$ items) and $M=3$. The positive support sets  ($\PS$) form a canonical generating collection. The values of sets under $f$ is determined by the first applicable rule (and hence $f$ is well defined):
	
	\begin{enumerate}
		
		\item  Each positive support set $S \in \PS$ has value $f(S) = M = 3$.
		
		\item If there is some set $P \in \PS$ such that $S \subset P$ and for every set $N \in \NS$ it holds that $S \not\subset N$, then $f(S) = 1$. Likewise (enforcing the dual symmetry property of $(k,M)$-symmetric functions), if there is some set $P \in \PS$ such that $P \subset S$ and for every set $N \in \NS$ it holds that $N \not\subset S$, then $f(S) = 1$.
		
		\item Enforcinging the antisymmetry property of $(k,M)$-symmetric functions, we impose $-f(S) = f(\bar{S})$ and derive from this sets with negative value.
		
		\item All other sets have value~0.
		
	\end{enumerate}
	
	One can verify (proof omitted) that the function $f$ defined above is indeed  $(k,M)$-symmetric according to Definition~\ref{def:(k,M)symmetric}. The reason why we choose $k = 3$ (and not smaller) is because we shall use the following claim.
	
	\begin{claim}
		\label{claim:weak}
		For every two sets $P_1, P_2 \in \PS$ and two sets $N_3, N_4 \in \NS$ all the following hold:
		
		\begin{enumerate}
			
			\item $P_1 \cap P_2 \not\subset N_3$.
			
			\item $N_3 \cap N_4 \not\subset P_1$.
			
			\item $P_1 \not\subset N_3 \cup N_4$. 
			
			\item $N_3 \not\subset P_1 \cup P_2$. 
			
		\end{enumerate}
	\end{claim}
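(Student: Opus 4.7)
My plan is to use the vector representation of items described in the setup for canonical generating collections: since $\PS=\{P_1,\dots,P_{2k}\}$ with $2k=6$ is canonical, each of the $n={6\choose 3}=20$ items can be identified with a balanced vector in $\{\pm 1\}^6$, where coordinate $j$ of the vector equals $+1$ iff the item lies in $P_j$ (equivalently $-1$ iff it lies in $N_j=\bar{P_j}$). Under this identification, membership in $P_i$ (resp.\ $N_j$) is exactly the condition ``coordinate $i$ is $+1$'' (resp.\ ``coordinate $j$ is $-1$''). The four assertions thus reduce to existence of suitable balanced vectors, and I will prove each by exhibiting an explicit witness item.

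Assume first that the indices $1,2,3,4$ are distinct. Statement~1 asks for an item in $P_1\cap P_2\setminus N_3 = P_1\cap P_2\cap P_3$, i.e.\ a balanced vector with $+1$ in coordinates $1,2,3$; the vector $(+,+,+,-,-,-)$ works. Statement~3 asks for an item in $P_1\cap P_3\cap P_4$; the vector $(+,-,+,+,-,-)$ works. Statements~2 and~4 are the sign-flipped duals: for statement~2 take the vector $(-,+,-,-,+,+)\in N_1\cap N_3\cap N_4$, and for statement~4 take $(-,-,-,+,+,+)\in N_1\cap N_2\cap N_3$. In every case the prescribed three coordinates fix exactly the three entries of one sign in a balanced length-$6$ vector, which shows why $k=3$ is the minimal value for which the construction works: we need to pin down three coordinates of one sign in a vector that has exactly three such coordinates.

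The remaining edge cases arise when some of the indices coincide; these are easier, not harder. If $P_1=P_2$ then $P_1\cap P_2=P_1$, and we need only an item in $P_1\cap P_3$, which corresponds to a balanced vector with $+1$ in two prescribed coordinates (a $\binom{4}{1}=4$-element family). If additionally $N_3=\bar{P_1}$, then $P_1\subset N_3$ would force $P_1=\emptyset$, which fails since $|P_1|=n/2=10$. The analogous simplifications handle statements~2, 3, and~4. In every case the item-count strictly exceeds zero, so the non-containments hold.

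The main obstacle, as I see it, is bookkeeping rather than mathematical content: one must verify that the witness vectors are all balanced (each uses exactly three $+1$'s and three $-1$'s), and that the small collection of ``degenerate'' index-coincidences is handled uniformly. Once the vector encoding is in place, the proof reduces to reading off four concrete balanced vectors in $\{\pm 1\}^6$, and the claim follows.
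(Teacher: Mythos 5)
Your proof is correct and takes essentially the same approach as the paper's: identify items with balanced vectors in $\{\pm 1\}^{2k}$ under the canonical generating collection, then observe that because $k\geq 3$ one can pin down three sign-constrained coordinates and still complete a balanced vector. The paper proves only item~1 (deferring the others as "similar") and does not write out explicit witness vectors or the degenerate index-coincidence cases, so your write-up is somewhat more complete, but the underlying argument is the same.
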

	
	\begin{proof}
		We prove only item~1, as the proofs of the remaining items are similar. Suppose without loss of generality that $P_i$ (for $i \in \{1,2\}$) contains those items whose vector representation in $\{\pm 1\}^{2k}$ has bit~$i$ set to~1, and that $N_3$ contains those items whose vector representation in $\{\pm 1\}^{2k}$ has bit~3 set to $-1$. (If in $N_3$ it was bit~1 that was set to $-1$, the proof would be immediate.) Then because $k \ge 3$ the set $P_1 \cap P_2$ contains a vector for which all three bits~1, 2 and~3 are set to~1, and hence is not in~$N_3$.
	\end{proof}
	
	Proposition~\ref{pro:distM} implies that $f$ is tightly 3-linear. In remains to show that $f$ is weakly 2-modular. Consider two disjoint nonempty sets $S$ and $T$. As $S \cap T = \emptyset$, we have that $f(S \cap T) = 0$. Hence to prove weak 2-modularity, one needs to show that $|f(S) + f(T) - f(S\cup T)| \le 2$. Similar to the proof of Lemma~\ref{lem:shortcaseanalysis} it can be shown that one can assume without loss of generality that $f(S) \ge 0$ and $f(S) \ge |f(T)|$. We proceed by a case analysis.
	
	\begin{enumerate}
		
		\item $f(S) = 3$ (namely, $S \in \PS$).  In this case, because $T$ is disjoint from $S$, we have that $T \subset \bar{S}$ where $\bar{S} \in \NS$. The definition of $f$ then implies that $f(T) \le 0$.  We consider three cases.
		
		\begin{enumerate}
			
			\item $f(T) = 0$. Given that $T \subset \bar{S}$ and $\bar{S} \in \NS$, for $f(T) = 0$ to hold there must be some set $P \in \PS$ such that $T \subset P$. We claim that $f(S \cup T) = 1$ (which establishes weak 2-modularity, because $f(S) + f(T) = 3$). Suppose for the sake of contradiction that $f(S \cup T) \not= 1$. As $S \subset S \cup T$ and $S \in \PS$, this can happen only if there is a set $N \in \NS$ such that $N \subset S \cup T$. But then $N \cap \bar{S} \subset T \subset P$. Hence we found two sets in $\NS$ whose intersection lies in a set in $\PS$, contradicting item~2 of Claim~\ref{claim:weak}.
			
			\item $f(T) = -1$. In this case $f(S) + f(T) = 2$. We also have that $f(S \cup T) \ge 0$ because $S \subset S \cup T$ and $S \in \PS$. Hence regardless of the value of $f(S \cup T)$, weak 2-modularity holds.
			
			\item $f(T) = -3$. In this case $T = \bar{S}$ and $S \cup T = U$, implying that $f(S \cup T) = 0 = f(S) + f(T)$.
			
		\end{enumerate}
		
		\item $f(S) = 1$. Observe that the only way by which it may happen that $f(S \cup T) < 0$ is if there is some $P \in \PS$ such that $S \subset P$ (leading to $f(S) = 1$, and there is some $N \in \NS$ such that $N \subset S \cup T$, and $f(S \cup T) = -1$. (If $S \cup T \subset N$ then also $S \subset N$ and it cannot be that $f(S) = 1$.)
		
		\begin{enumerate}
			
			\item $f(T) = 1$.  We claim that $f(S \cup T) \ge 0$ (which implies that weak 2-modularity holds). Assume for the sake of contradiction that $f(S \cup T) = -1$. As noted above this implies that there is $P \in \PS$ with $S \subset P$, another $P' \in \PS$ with $T \subset P'$ (because we can interchange $S$ and $T$),  and $N \in NS$ such that $N \subset S \cup T$. But then $N \subset P \cup P'$, contradicting item~4 of Claim~\ref{claim:weak}.
			
			\item $f(T) = 0$. In this case $f(S)+ f(T) = 1$, and regardless of the value of $f(S \cup T)$ (which lies in the range $[-1,3]$), weak 2-modularity holds. 
			
			\item $f(T) = -1$. We need to show that $|f(S \cup T)| \not= 3$. This follows from the fact that $S$ in not contained in any set in $\NS$, and $T$ is not contained in any set in $\PS$.
			
		\end{enumerate}
		
		\item $f(S) = f(T) = 0$. We need to show that $|f(S \cup T)| \not= 3$. We show that $S \cup T \not\in \PS$ (and the proof that $S \cup T \not\in \NS$ is similar). Suppose for the sake of contradiction that $S \cup T = P$ with $P \in \PS$. Then $S \subset P$, and the fact that $f(S) = 0$ implies that there is a set $N_1 \in \NS$ such that $S \subset N_1$. Likewise, $T \subset N_2$ for some $N_2 \in \NS$. Hence $P \subset N_1 \cup N_2$, contradicting item~3 of Claim~\ref{claim:weak}.
	\end{enumerate}
	
	Hence we established that $f$ is weakly 2-modular, completing the proof.
\end{proof}

\section{Tightness of Learning Algorithm}
\label{appx:learning}

The following theorem establishes tightness of the learning algorithm.
Results similar to Theorem~\ref{thm:negative} appear in the literature -- see for example~\cite{SV}. See also \cite{CDDK} (Corollary 23).

\begin{theorem}
	\label{thm:negative}
	For $\Delta \le \sqrt{\log n / n}$ and $\delta=o(\sqrt{n / \log n})$, no $\delta$-learning algorithm exists for $\Delta$-linear set functions $f$, even if $f$ is monotone, and even if one allows for unbounded computation time (but only polynomially-many value queries). 
\end{theorem}

\begin{proof}[Proof (Sketch)] Consider a random linear function $g$ which contains a random set $T$ of cardinality $\frac{n}{2}$, in which each item has value $q = \Delta / \sqrt{n \log n}$, and the other items have value~0. Then $g(T) = (\Delta\sqrt{n}) / (2\sqrt{\log n})$ (which is at most~1), whereas $g(\bar{T}) = 0$.  Consider now a function $f$ constructed as follows. Call a set $S$ balanced if $||S \cap T| - \frac{|S|}{2}| \le \sqrt{n\log n}$, positively unbalanced if $|S \cap T| - \frac{|S|}{2} \ge \sqrt{n\log n}$, and negatively unbalanced if $\frac{|S|}{2} - |S \cap T| \ge \sqrt{n\log n}$. Then $f(S) = \frac{|S|}{2}q$ for balanced sets, $f(S) = g(S) - \sqrt{n\log n}$ for positively unbalanced sets, and $f(S) = g(S) + \sqrt{n\log n}$ for negatively unbalanced sets. 
	Observe that $g$ approximates $f$ because $q\sqrt{n\log n} = \Delta$.
	If there are only polynomially many queries then w.h.p., for every query $S$, the underlying set is balanced.
	Hence these queries are not informative in exposing $T$, and  $\delta \ge \frac{1}{2}(f(T) - f(\bar{T})) = \frac{\Delta\sqrt{n}}{4\sqrt{\log n}} - \Delta$.
\end{proof}

\section*{Acknowledgements}

We thank Assaf Naor for helpful discussions and for directing us to the paper of \cite{KR}. We thank Moni Naor for pointing us to the result of \cite{DY08}.

\bibliographystyle{abbrvnat}	
\bibliography{abb,modularity-bib}

\end{document}